\DeclareMathAlphabet{\itbf}{OML}{cmm}{b}{it}
\newcommand{\nc}{\newcommand}
\nc{\bx}{{\bf x}}
\nc{\bu}{{\bf u}}
\nc{\vx}{\vec{\bx}}
\nc{\ve}{\vec{\bf e}}
\nc{\vn}{\vec{\bf n}}
\nc{\bn}{{\bf n}}
\nc{\de}{\delta}
\nc{\la}{\lambda}
\nc{\ep}{\varepsilon}
\nc{\om}{\omega}
\nc{\vH}{\vec {\bf H}}
\nc{\vE}{\vec {\bf E}}
\nc{\vD}{\vec {\bf D}}
\nc{\vJ}{\vec {\boldsymbol {\mathcal J}}}
\nc{\bcJ}{{\boldsymbol {\mathcal J}}}
\nc{\bcF}{{\boldsymbol {\mathcal F}}}
\nc{\bcV}{{\boldsymbol {\mathcal V}}}
\nc{\bH}{{\bf H}}
\nc{\bE}{{\bf E}}
\nc{\bD}{{\bf D}}
\nc{\bU}{{\bf U}}
\nc{\bJ}{{\bf J}}
\nc{\rJ}{{\rm J}}
\nc{\rH}{{\rm H}}
\nc{\rE}{{\rm E}}
\nc{\rD}{{\rm D}}
\nc{\rU}{{\rm U}}
\nc{\cJ}{\mathcal J}
\nc{\cR}{\mathcal{R}}
\nc{\cS}{\mathcal S}
\nc{\mM}{\mathfrak M}
\nc{\bpsi}{\boldsymbol \Psi}
\nc{\bphi}{{\boldsymbol{\varphi}}}
\nc{\EE}{{\mathbb{E}}}
\nc{\MM}{{\mathbb{M}}}
\renewcommand{\hat}{\widehat}
\begin{document}

\title{Electromagnetic wave propagation in random waveguides}

\author{ Ricardo Alonso\footnotemark[1] and 
Liliana Borcea\footnotemark[2]}

\maketitle 

\renewcommand{\thefootnote}{\fnsymbol{footnote}}

\footnotetext[1]{Departamento de Matem\'{a}tica, PUC--Rio \&
  Computational and Applied Mathematics, Rice University, Houston, TX
  77005. {\tt rja2@rice.edu}} \footnotetext[2]{Department of
  Mathematics, University of Michigan, Ann Arbor, MI 48109. {\tt
    borcea@umich.edu}}

\begin{abstract}
  We study long range propagation of electromagnetic waves in random
  waveguides with rectangular cross-section and perfectly conducting
  boundaries. The waveguide is filled with an isotropic linear
  dielectric material, with randomly fluctuating electric
  permittivity.  The fluctuations are weak, but they cause significant
  cumulative scattering over long distances of propagation of the
  waves.  We decompose the wave field in propagating and evanescent
  transverse electric and magnetic modes with random amplitudes that
  encode the cumulative scattering effects. They satisfy a coupled
  system of stochastic differential equations driven by the random
  fluctuations of the electric permittivity.  We analyze the solution
  of this system with the diffusion approximation theorem, under the
  assumption that the fluctuations decorrelate rapidly in the range
  direction. The result is a detailed characterization of the
  transport of energy in the waveguide, the loss of coherence of the
  modes and the depolarization of the waves due to cumulative
  scattering.
\end{abstract}
\renewcommand{\thefootnote}{\arabic{footnote}}
\begin{keywords}
Waveguides, electromagnetic, random media, asymptotic analysis.
\end{keywords}

\begin{AMS}
35Q61, 35R60
\end{AMS}

\section{Introduction}
\label{sect:intro}
We study electromagnetic wave propagation in waveguides.  There is
extensive applied literature on this subject
\cite{marcuse1974theory,marcuse1982light,LDT-03,ISSY-93,collin1991field,
  mrozowski1997guided} which includes open and closed waveguides,
waveguides with losses, boundary corrugation and heterogeneous
media. Here we consider the setup illustrated in Figure \ref{fig:setup},
for a waveguide with rectangular cross-section $\Omega =
(0,L_1)\times(0,L_2)$, filled with an isotropic linear dielectric
material. The waves are trapped by perfectly conducting boundaries and
propagate in the range direction $z$. The cross-range coordinates are
${\bf x} = (x_1,x_2) \in \Omega$.  The main goal of the paper is to
analyze long range wave propagation in waveguides with
imperfections. We refer to \cite{kohler77,Dozier,GS-08,ABG-12,BG-13}
and \cite[Chapter 20]{LAY_BOOK} for rigorous mathematical studies of
long range wave propagation in imperfect acoustic waveguides, and to
\cite{BIT-10,QUANT-13,G-1} for their application to imaging and time
reversal. Here we extend the theory to electromagnetic waves.

We focus attention on waveguides with imperfections due to a
heterogeneous dielectric, but the ideas should extend to waveguides
with corrugated boundaries. Such waveguides can be analyzed by
changing coordinates to flatten the boundary fluctuations as was done
in \cite{ABG-12} for sound waves, or by using so-called local normal
mode decompositions as proposed in \cite[chapter
  9]{marcuse1982light}.  Our waveguide has straight walls and is
filled with a dielectric material that has numerous inhomogeneities
(imperfections). These are weak scatterers, so their effect is
negligible in the vicinity of the source of the waves.  However, the
inhomogeneities cause significant cumulative wave scattering over long
ranges. To quantify the cumulative scattering effects we study the
following questions: How are the modal wave components coupled by
scattering? How do the waves depolarize? How do the waves lose
coherence?  Can we calculate from first principles the
\emph{scattering mean free paths}, which are the range scales over
which the modal wave components lose coherence?  How is energy
transported at long ranges in the waveguides?  Can we quantify the
\emph{equipartition distance} where cumulative scattering is so strong
that the waves lose all information about the source? How does the
equipartition distance compare with the mode dependent scattering mean
free paths?

To answer these questions we model the scalar valued electric
permittivity $\ep$ of the dielectric as a random process. The random
model is motivated by the fact that in applications the imperfections
can never be known in detail. They are the uncertain microscale of the
medium, the fluctuations of $\ep(\vx)$ in $\vx = (\bx,z)$, so we model
them as random.  The fluctuations are small, on a scale (correlation
length) comparable to the wavelength. We assume that there is no
dissipation in the medium, meaning that $\ep(\vx)$ is real, positive.
Complex valued permittivities $\ep(\om,\vx)$ which are typically
required by causality i.e., Kramers-Kronig relations, can be
incorporated in the model. We do not consider them here for
simplicity, and because we are concentrating on the analysis at a
single frequency $\om$. Extensions to multi frequency analysis of wave
propagation in dispersive and lossy media can be done, using
techniques like in \cite{Dozier,kohler77,ABG-12} and \cite[chapter
  20]{LAY_BOOK}, but we leave them for a different publication.

\begin{figure}[t!]
\vspace{-0.4in}
   \begin{center}
    \hspace{-1in} \input{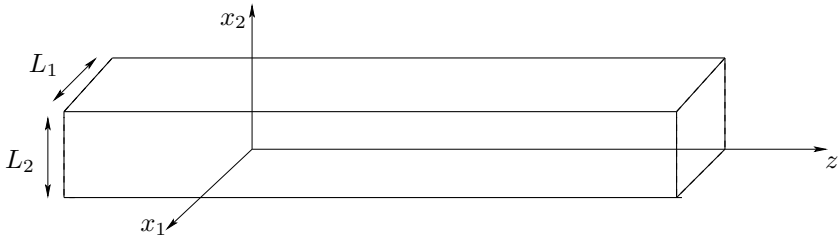}
   \end{center}
   \vspace{-0.1in}
   \caption{Schematic of the setup. The waveguide is unbounded in the
     range direction $z$ and has rectangular cross-section in the
     plane $(x_1,x_2)$, with sides $L_1$ and $L_2$. }
   \label{fig:setup}
\end{figure}

The paper is organized as follows: We begin in section \ref{sect:form}
with the setup. We state Maxwell's equations and the boundary
conditions satisfied by the electromagnetic field. Then we follow the
approach in \cite{marcuse1974theory} and solve for the components
$E_z(\om,\vx)$ and $H_z(\om,\vx)$ of the electric and magnetic fields
in the range direction.  We obtain a $4 \times 4$ system of partial
differential equations for the components $\bE(\om,\vx)$ and
$\bH(\om,\vx)$ of the fields in the cross-range plane. We analyze in
section \ref{sect:ideal} its solution $\bE_o(\om,\vx)$ and
$\bH_o(\om,\vx)$ in ideal waveguides with constant permitivity
$\ep_o$. It is a superposition of uncoupled transverse electric and
transverse magnetic modes. The random model of the waveguide is
introduced in section \ref{sect:asympt}.  Because the amplitude of the
fluctuations of $\ep(\vx)$ is small, of order $\epsilon \ll 1$, the
system of equations for $\bE(\om,\vx)$ and $\bH(\om,\vx)$ is a
perturbation of that in ideal waveguides.  The remainder of the paper
is concerned with the asymptotic analysis of $\bE(\om,\vx)$ and
$\bH(\om,\vx)$ at long ranges, in the limit $\epsilon \to 0$. We
consider long ranges because the $\epsilon \to 0$ limit of
$\bE(\om,\vx)$ and $\bH(\om,\vx)$ is the same as the ideal waveguide
solution $\bE_o(\om,\vx)$ and $\bH_o(\om,\vx)$ when the waves do not
propagate far from the source.  Our analysis is based on the
decomposition of $\bE(\om,\vx)$ and $\bH(\om,\vx)$ in transverse
electric and magnetic modes, with random amplitudes that encode the
cumulative scattering effects, as explained in section
\ref{sect:modec}. The long range scaling and the diffusion limit
approximation for analyzing the wave field as $\epsilon \to 0$ are
stated in section \ref{sect:diff}.  The main results of the paper are
in section \ref{sect:transp}, where we characterize the limit process.
Explicitly, we describe the loss of coherence and depolarization of
the waves due to cumulative scattering, and the transport of
energy. We also show that as we let the range grow, the waves scatter
so much that they eventually reach the equipartition regime, where
they lose all information about the source.  We end with a summary in
section \ref{sect:summary}.

\section{Setup}
\label{sect:form}
Let $\ve_1$, $\ve_2$ and $\ve_z$ be the unit vectors along the
coordinate axes, and use bold letters with an arrow on top for three
dimensional vectors, and bold letters for two dimensional vectors in
the cross-range plane.  Exlicitly, we write
\begin{equation}
\label{eq:f2}
\vH = \rH_1 \ve_1 + \rH_2 \ve_2 + \rH_z \ve_z\, , \qquad 
\bH = (\rH_1,\rH_2)\, ,
\end{equation}
for the magnetic field $\vH(\om,\vx)$, and similarly for the electric
field $\vE(\om,\vx)$ and electric displacement $\vD(\om,\vx)$.  They
satisfy Maxwell's equations
\begin{align}
  \vec{\nabla} \times \vH(\om,\vx) &= \vJ(\om,\vx) - i \om
  \vD(\om,\vx)\, ,
\label{eq:M1} \\
\vec{\nabla} \times \vE(\om,\vx) &= i \om \mu_o \vH(\om,\vx)\, ,
\label{eq:M2} \\
\vec{\nabla} \cdot \vH(\om,\vx) &= 0\, ,
\label{eq:M3} \\
\vec{\nabla} \cdot \vD(\om,\vx) &= \rho(\om,\vx)\, ,
\label{eq:M4} 
\end{align}
where $\vJ$ and $\rho$ are the current source density and free charge
density, and $\mu_o$ is the magnetic permeability, assumed constant.
We denote by
\[
\vec{\nabla} = \partial_{x_1} \ve_1 + \partial_{x_2} \ve_2 + 
\partial_z \ve_z \,
\]
the three dimensional gradient and by 
$\vec{\nabla} \times$ and $\vec{\nabla} \cdot$ the curl and
divergence operators. 

The current source density 
\begin{equation}
  \vJ(\om,\vx) = \left(\bcJ(\om,\vx),\cJ_z(\om,\vx)\right) = 
  \left(\bJ(\om,\bx),\rJ_z(\om,\bx)\right) \delta(z)\, ,
\label{eq:f4}
\end{equation}
models a source at the origin of range, supported in the interior of
$\Omega$.  The Fourier transform $\rho(\om,\vx)$ of the free charge
density can be obtained from the continuity of charge derived from
(\ref{eq:M1}) and (\ref{eq:M4})
\begin{equation}
  -i \om \rho(\om,\vx) + \vec{\nabla} \cdot \vJ(\om,\vx) = 0\, .
\label{eq:f5}
\end{equation}
It vanishes at ranges $z \ne 0$.

The electric displacement is proportional to the electric field
\begin{equation}
\label{eq:f3}
\vD(\om,\vx) = \ep(\vx) \vE(\om,\vx)\,,
\end{equation}
with scalar valued, positive and bounded electric permittivity $\ep$.
The analysis is for a single frequency, so we simplify the notation by
omitting henceforth $\om$ from the arguments of the fields.

\subsection{The $4 \times 4$ system of equations}
\label{sect:system}
We study the evolution of the two dimensional vectors $\bE(\vx)$ and
$\bH(\vx)$ for $z > 0$. They determine the components $E_z(\vx)$ and
$H_z(\vx)$ in the range direction of the electric and magnetic fields,
as follows from equations (\ref{eq:M1})-(\ref{eq:M2})
\begin{align}
\rH_z(\vx) &= -\frac{i}{\om \mu_o} \nabla^\perp \cdot \bE(\vx) \, ,
\label{eq:f6}\\
\rE_z(\vx) &= \frac{i}{\om \ep(\vx)} \left[ \nabla^\perp \cdot \bH(\vx)
  - \cJ_z(\vx) \right]\, ,
\label{eq:f7} 
\end{align}
with $\nabla^\perp = (-\partial_{x_2},\partial_{x_1})$ the
perpendicular gradient in the cross-range plane. The $4\times 4 $
system of equations for $\bE(\vx)$ and $\bH(\vx)$ is
\begin{align}
  \partial_z \bE(\vx) &= \frac{i}{\om} \nabla \left[\frac{1}{\ep(\vx)}
    \nabla^\perp \cdot \bH(\vx)\right] - \frac{i}{\om} \nabla
  \left[\frac{\cJ_z(\vx)}{\ep(\vx)}\right] - i \om \mu_o
  \bH^\perp(\vx)\, ,
\label{eq:f8} \\
\partial_z \bH(\vx) &= -\frac{i}{\om \mu_o} \nabla \left[ \nabla^\perp
  \cdot \bE(\vx)\right] + i \om \ep(\vx) \bE^\perp(\vx) -
\bcJ^\perp(\vx)\, .
\label{eq:f9}
\end{align}
Here $\nabla = (\partial_{x_1},\partial_{x_2})$ is the gradient in the
cross-range plane, and we let ${\bf a}^\perp = (-a_2,a_1)$ denote the
rotation of any vector ${\bf a} = (a_1,a_2)$ by $90$ degrees,
counter-clockwise. 

Note that equations (\ref{eq:f6})-(\ref{eq:f9}) contain all the
information in the Maxwell system (\ref{eq:M1})-(\ref{eq:M4}). Indeed,
(\ref{eq:M3}) follows from (\ref{eq:f6}) and (\ref{eq:f8})
\begin{align*}
  \vec{\nabla} \cdot \vH(\vx) &= \nabla \cdot \bH(\vx) + \partial_z
  \rH_z(\vx) \\
  &= \nabla \cdot \bH(\vx) - \frac{i}{\om \mu_o} \nabla^\perp \cdot
  \partial_z
  \bE(\vx) \\
  &= \nabla \cdot \bH(\vx) - \nabla^\perp \cdot \bH^\perp(\vx)
  \\
  &=0\, ,
\end{align*}
because $\nabla^\perp \cdot \nabla a(\vx) = 0$ for any twice
continuously differentiable function $a(\vx)$. Similarly,
(\ref{eq:M4}) follows from (\ref{eq:f7}) and
(\ref{eq:f9})
\begin{align*}
  \vec{\nabla} \cdot \vD(\vx) &= \nabla \cdot \bD(\vx) + \partial_z
  \rD_z(\vx) \\
  &=\nabla \cdot\left[\ep(\vx) \bE(\vx)\right] + \frac{i}{\om} \left[
    \nabla^\perp \cdot \partial_z \bH(\vx) - \partial_z
    \cJ_z(\vx)\right] \\
  &= \nabla \cdot\left[\ep(\vx) \bE(\vx)\right] - \nabla^\perp
  \cdot\left[\ep(\vx) \bE^\perp(\vx)\right] - \frac{i}{\om} 
\left[\nabla^\perp \cdot \bcJ^\perp(\vx) + \partial_z 
\cJ_z(\vx)\right] \\
&= - \frac{i}{\om} \vec{\nabla} \cdot \vJ(\vx) \\&= \rho(\vx)\,,
\end{align*}
where we used (\ref{eq:f3}) and the continuity of charge relation
(\ref{eq:f5}).

\subsection{Boundary conditions}
\label{sect:formBC}
The boundary conditions at the perfectly conducting boundary $\partial
\Omega$ are \cite[Chapter 8]{jackson1999classical}
\begin{equation}
\label{eq:f13}
  \vn(\bx) \times \vE(\vx) = 0\, 
\end{equation}
for $\vx = (\bx,z)$ and $\bx \in \partial \Omega$. The outer normal
$\vn(\bx) = (\bn(\bx),0)$ at $\partial \Omega$ is independent of the
range and is orthogonal to $\ve_z$. Thus, equations (\ref{eq:f13}) say
that the tangential components of the electric field vanish at the
boundary. Explicitly,
\begin{align} \rE_z(\vx) &=0\,, \qquad \bn^\perp(\bx) \cdot
  \bE(\vx) =0\,. \label{eq:f10}
\end{align}
We need more boundary conditions at $\partial \Omega$ to specify
uniquely the solution of (\ref{eq:f8}-\ref{eq:f9}), but they can be
derived from Maxwell's equations (\ref{eq:M1}-\ref{eq:M2}), conditions
(\ref{eq:f10}), and our assumptions on the source density
(\ref{eq:f4}), as explained in section \ref{sect:ideal}.



The fields are bounded and outgoing at $|z| \to \pm \infty$. We
explain in section \ref{sect:formMod} that the causality of the
problem in the time domain allows us to restrict the fluctuations of
$\ep(\vx)$ to a finite range interval, and thus justify the outgoing
boundary conditions.

\subsection{Conservation of energy}
\label{sect:energy}
The fields $\bE(\vx)$ and $\bH(\vx)$ satisfy an energy conservation
relation, stated in the following proposition, and used in the
analysis in section \ref{sect:transp}. 
\begin{proposition}
\label{prop:EC}
For any $z > 0$, we have the conservation relation
\begin{equation}
  \cS(z) = -\int_\Omega d \bx \, \operatorname{Re} \left[ \bE(\vx) \cdot
    \overline{\bH^\perp(\vx)} \right] = \cS(0+).
\label{eq:ENC}
\end{equation}
where the bar denotes complex conjugate.
\end{proposition}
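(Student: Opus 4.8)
The plan is to show that the cross-sectional flux $\cS(z)$ is constant on $(0,\infty)$ by proving $\frac{d}{dz}\cS(z)=0$ and then passing to the limit $z\to 0+$. First I would differentiate under the integral sign,
\[
\frac{d}{dz}\cS(z) = -\int_\Omega d\bx\,\operatorname{Re}\left[\partial_z\bE\cdot\overline{\bH^\perp} + \bE\cdot\overline{(\partial_z\bH)^\perp}\right],
\]
using that $\perp$ is a constant linear map, so $\partial_z(\bH^\perp)=(\partial_z\bH)^\perp$. Then I would insert the evolution equations (\ref{eq:f8})--(\ref{eq:f9}), dropping the source terms $\cJ_z$ and $\bcJ^\perp$ since they are supported at $z=0$ while we work at $z>0$. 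This splits the integrand into algebraic ``reaction'' terms and terms carrying transverse gradients, which I would handle separately.

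For the reaction terms, the contribution $-i\om\mu_o\bH^\perp$ in $\partial_z\bE$ pairs against $\overline{\bH^\perp}$ to give $-i\om\mu_o\,\bH^\perp\cdot\overline{\bH^\perp}=-i\om\mu_o|\bH|^2$, and the contribution $i\om\ep\bE^\perp$ in $\partial_z\bH$ yields, after using $({\bf a}^\perp)^\perp=-{\bf a}$ and conjugating, the term $i\om\ep\,\bE\cdot\overline{\bE}=i\om\ep|\bE|^2$. Both are purely imaginary because $\ep$, $\om$, $\mu_o$ are real, so they drop out under $\operatorname{Re}$.

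The gradient terms are the crux and are treated by integration by parts over $\Omega$, using the elementary identities ${\bf a}^\perp\cdot\bn=-\bn^\perp\cdot{\bf a}$ and $\nabla\cdot{\bf a}^\perp=-\nabla^\perp\cdot{\bf a}$. The piece arising from $\frac{i}{\om}\nabla\big[\frac{1}{\ep}\nabla^\perp\cdot\bH\big]$ produces a volume integral $\frac{i}{\om}\int_\Omega\frac{1}{\ep}|\nabla^\perp\cdot\bH|^2\,d\bx$, which is real times the imaginary factor $i/\om$ and hence invisible to $\operatorname{Re}$, plus a boundary integral on $\partial\Omega$ whose first factor is $\nabla^\perp\cdot\bH$; this vanishes because $\nabla^\perp\cdot\bH=0$ on $\partial\Omega$, which is exactly the boundary condition $\rE_z=0$ read through (\ref{eq:f7}) with $\cJ_z=0$ for $z>0$. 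Symmetrically, the piece from $-\frac{i}{\om\mu_o}\nabla[\nabla^\perp\cdot\bE]$ gives a real volume integral proportional to $\int_\Omega|\nabla^\perp\cdot\bE|^2\,d\bx$ multiplied by $i/\om\mu_o$ (again no real part), together with a boundary integral proportional to $\bE^\perp\cdot\bn=-\bn^\perp\cdot\bE$, which vanishes by the stated condition $\bn^\perp\cdot\bE=0$ in (\ref{eq:f10}).

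Assembling these facts gives $\frac{d}{dz}\cS(z)=0$ for every $z>0$, so $\cS$ is constant and equal to its limit $\cS(0+)$. I expect the main difficulty to be bookkeeping rather than conceptual: one must keep the two distinct boundary conditions straight and, in particular, recognize that it is $\rE_z=0$ which forces $\nabla^\perp\cdot\bH$ to vanish on $\partial\Omega$, this being the only avenue through which electric boundary data constrains $\bH$. A secondary point requiring justification is the interchange of $\partial_z$ with the integral and the integrations by parts, which needs enough smoothness of $\bE$, $\bH$ in the source-free region $z>0$; this is supplied by the ellipticity of the transverse operators once the modal decomposition of the later sections is in hand.
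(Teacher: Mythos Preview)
Your argument is correct. It differs from the paper's proof in that the paper invokes the three-dimensional Poynting identity $\vec\nabla\cdot[\vE\times\overline{\vH}]=i\om\mu_o|\vH|^2-i\om\ep|\vE|^2$ directly from (\ref{eq:M1})--(\ref{eq:M2}), applies the divergence theorem over the cross-section $\Omega$, and kills the resulting boundary integral using (\ref{eq:f10}) in the form $E_z=0$ and $\bn^\perp\cdot\bE=0$; taking the real part then gives $\partial_z\cS=0$ in one stroke. You instead stay entirely in the reduced $4\times4$ framework (\ref{eq:f8})--(\ref{eq:f9}), substitute the evolution equations into $\frac{d}{dz}\cS$, and handle the gradient terms by transverse integration by parts. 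The two approaches are equivalent---your ``reaction'' terms are precisely the right-hand side of the Poynting identity restricted to transverse components, and your two integrations by parts reconstruct the cross-sectional divergence theorem---but the paper's route is shorter and makes the physical content (conservation of the Poynting flux) explicit, whereas yours has the modest advantage of never leaving the $2$D system that the rest of the analysis is built on. Your identification of $\nabla^\perp\cdot\bH=0$ on $\partial\Omega$ as the manifestation of $\rE_z=0$ through (\ref{eq:f7}) is exactly the right observation and matches how the paper uses the same boundary condition.
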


\vspace{0.1in} \noindent Note that 
\[
\vec{\bf S}(\vx) = \frac{1}{2}\operatorname{Re} \left[ \vE(\vx) \times
  \overline{\vH(\vx)} \right]
\]
is the time average of the Poynting vector of a time harmonic wave
\cite[chapter 7]{jackson1999classical}. Therefore,
\[
\cS(z) = 2\int_\Omega d\bx \, \ve_z \cdot \vec{\bf S}(\vx)\, 
\]
is twice the flux of energy in the range direction, and (\ref{eq:ENC})
states that it is conserved for all $z > 0$.

To derive (\ref{eq:ENC}) we obtain from (\ref{eq:M1})-(\ref{eq:M2})
that
\begin{align*}
  \vec{\nabla} \cdot \left[ \vE(\vx) \times \overline{\vH(\vx)}
  \right] &= \overline{\vH(\vx)} \cdot \left[\vec{\nabla} \times
    \vE(\vx)\right] - \vE(\vx) \cdot \left[ \vec{\nabla} \times
    \overline{\vH(\vx)}\right] \nonumber \\
  &= i \om \mu_o \left| \vH(\vx)\right|^2 - i \om \ep(\vx) \left|
    \vE(\vx)\right|^2 - \vE(\vx) \cdot \overline{\vJ(\vx)}\, ,
\end{align*}
and from the divergence theorem that
\begin{align*}
  \int_\Omega d \bx \vec{\nabla} \cdot \left[ \vE(\vx) \times
    \overline{\vH(\vx)} \right] = &\int_{\partial \Omega} d s(\bx) \,
  \vn(\bx) \cdot \left\{(I-\ve_z \ve_z^T)\left[ \vE(\vx) \times
      \overline{\vH(\vx)} \right]\right\} + \\
&\int_\Omega d \bx \, 
  \partial_z \left\{ \ve_z \cdot \left[ \vE(\vx) \times
      \overline{\vH(\vx)} \right]\right\}\, .
\end{align*}
The boundary term vanishes because of the boundary conditions (\ref{eq:f10})
\[
\vn(\bx) \cdot (I-\ve_z \ve_z^T)\left[ \vE(\vx) \times
  \overline{\vH(\vx)} \right] = E_z(\vx)\, \bn(\bx) \cdot \overline{
  \bH^\perp(\vx)} + \overline{H_z(\vx)}\, \bn^\perp(\bx) \cdot \bE(\vx) = 0,
\]
and the integrand in the second term satisfies
\[
\ve_z \cdot \left[ \vE(\vx) \times \overline{\vH(\vx)} \right] =-
\bE(\vx) \cdot \overline{ \bH^\perp(\vx)}\, .
\]
The current source density $\overline{\vJ(\vx)}$ is supported at $z=
0$, so we conclude that
\begin{align*}
  -\partial_z \int_\Omega d\bx \, \bE(\vx) \cdot
  \overline{\bH^\perp(\vx)} =& \int_\Omega d \bx \left[
i \om \mu_o  \left| \vH(\vx)\right|^2 - i \om \ep(\vx) 
\left| \vE(\vx)\right|^2 \right]\, , \quad z \ne 0\, .  
\label{eq:id}
\end{align*}
The conservation relation (\ref{eq:ENC}) follows by taking the real
part in this equation.

\section{Ideal waveguides}
\label{sect:ideal}
Maxwell's equations are separable in ideal waveguides with constant
permitivity $\ep_o$, and it is typical to solve for the longitudinal
components $E_z(\vx)$ and $H_z(\vx)$ of the electric and magnetic
fields, which then define $\bE(\vx)$ and $\bH(\vx)$
\cite[chapter8]{jackson1999classical}.  The solution is given by a
superposition of waves, called modes.  They are propagating and
evanescent waves and solve Maxwell's equations with boundary
conditions (\ref{eq:f10}). We describe the modes in section
\ref{sect:ModeDec}, and then write the solution in section
\ref{sect:amplit}.

\subsection{The waveguide modes}
\label{sect:ModeDec}
The longitudinal components of the electric and magnetic fields
satisfy the boundary conditions
\begin{equation}
E_z(\vx) = \bn(\bx) \cdot \nabla H_z(\vx) = 0, \qquad \bx \in \partial
\Omega.
\label{eq:BCHz}
\end{equation}
The first condition is just (\ref{eq:f10}), and the second follows
from Maxwell's equations (\ref{eq:M1}-\ref{eq:M2}).  Indeed, 
(\ref{eq:M2}) gives
\begin{equation}
\bH(\vx) = \frac{i}{\om \mu_o} \left[ \nabla^\perp E_z(\vx) -
  \partial_z \bE^\perp(\vx)\right],
\label{eq:bH}
\end{equation}
so the normal component of $\bH$ at $\partial \Omega$ satisfies
\begin{align}
\bn(\bx) \cdot \bH(\vx) &= \frac{i}{\om \mu_o} \left[-\bn^\perp(\bx) \cdot
  \nabla E_z(\vx) + \partial_z \bn^\perp(\bx) \cdot \bE(\vx)\right] =
0, \qquad \bx \in \partial \Omega.
\label{eq:f11}
\end{align}
Similarly, we obtain from equation (\ref{eq:M1}) that 
\begin{equation}
\bD(\vx) = \frac{i}{\om} \left[- \nabla^\perp H_z(\vx) +
  \partial_z \bH^\perp(\vx)\right], 
\label{eq:bD}
\end{equation}
and the boundary condition (\ref{eq:f10}) implies that
\begin{align*}
\bn^\perp(\bx) \cdot \bD(\vx) = \frac{i}{\om} \left[ -\bn(\bx) \cdot
  \nabla H_z(\vx) + \partial_z \bn(\bx) \cdot \bH(\vx) \right] = 0.
\end{align*}
The Neumann boundary condition (\ref{eq:BCHz}) on $H_z$ follows from
this equation and (\ref{eq:f11}).

The waveguide modes are solutions of Maxwell's equations that depend
on the range $z$ as $\exp(\pm i \beta z)$, with mode wavenumber
$\beta$ to be defined.  We write them as
\begin{equation}
\widetilde{\bD} (\bx;\pm \beta) e^{\pm i \beta z}, \qquad \widetilde
\bH(\bx;\pm \beta) e^{\pm i \beta z},
\end{equation}
and similar for the longitudinal components, which satisfy
\begin{align}
\Delta \widetilde E_z(\bx;\pm \beta) + (k^2-\beta^2) \widetilde
E_z(\bx;\pm \beta) &= 0, \label{eq:EzE}\\ \Delta \widetilde
H_z(\bx;\pm \beta) + (k^2-\beta^2) \widetilde H_z(\bx;\pm \beta) &= 0,
\qquad \bx \in \Omega. \label{eq:HzE}
\end{align}
Here $\Delta$ is the Laplacian in $\bx$, $k = \om/c_o$ is the
wavenumber and $c_o = 1/\sqrt{\ep_o \nu_o}$ is the wave speed.

\subsubsection{Spectral decomposition of the Laplacian}
The Laplacian operator acting on functions with homogeneous Dirichlet
conditions is symmetric negative definite, with countable eigenvalues
\begin{equation}
\lambda_{j} =\left( \frac{ \pi j_1}{L_1}\right)^2 + \left( \frac{\pi
  j_2}{L_2}\right)^2 \,,
\label{eq:P10}
\end{equation}
and eigenfunctions
\begin{equation}
\widetilde E_{j,z}(\bx) = \sin\left(\frac{\pi j_1 x_1}{L_1}\right)
\sin \left(\frac{\pi j_2 x_2}{L_2}\right).
\label{eq:P10Ez}
\end{equation}
The indexes $j_1$ and $j_2$ are natural numbers satisfying the
constraint $j_1^2 + j_2^2 \ne 0$. We associate the pair $(j_1,j_2)$ to
the index $j$ because $\mathbb{N} \times \mathbb{N}$ is countable,
and enumerate the eigenvalues in increasing order.

Similarly, the Laplacian operator acting on functions with homogeneous
Neumann conditions is symmetric negative semidefinite, with the same
eigenvalues as (\ref{eq:P10}), and eigenfunctions
\begin{equation}
\widetilde H_{j,z}(\bx) = \cos\left(\frac{\pi j_1 x_1}{L_1}\right) \cos
\left(\frac{\pi j_2 x_2}{L_2}\right).
\label{eq:P10Hz}
\end{equation}

Thus, we see that the electric and magnetic fields have the same mode
wavenumbers $\beta$, which take the discrete values
$\sqrt{k^2-\lambda_j}$. We write them as
\begin{equation}
\beta = \left\{ \begin{array}{ll}\beta_j, \quad &j = 1, \ldots, N, \\
 i \beta_j, & j >N, 
\end{array} \right. \qquad {\rm for} ~ ~ \beta_j = \sqrt{|k^2 - \lambda_j|},
\label{eq:beta}
\end{equation}
to emphasize that only the first $N$ are real.  The infinitely many
modes that correspond to eigenvalues $\lambda_j > k^2$ are
evanescent. We assume that $\beta_N \ne 0$, so there are no standing
waves in the waveguide.

\subsubsection{The transverse electric and magnetic modes}
\label{eq:TMTE}
It follows immediately from (\ref{eq:bH}), (\ref{eq:bD}),
(\ref{eq:P10Ez}) and (\ref{eq:P10Hz}) that $\widetilde \bD$ and
$(\widetilde \bH)^\perp$ are given by superpositions of the vectors
$\nabla^\perp \widetilde H_{j,z}(\bx)$ and $\nabla \widetilde
E_{j,z}(\bx)$. Thus, we define the vectors
\begin{equation}
\bphi_j^{(1)} = \alpha_j \nabla^\perp \widetilde H_{j,z}(\bx) = \alpha_j \left(
\begin{matrix}
\frac{\pi j_2}{L_2} \cos\left(\frac{\pi j_1 x_1}{L_1}\right)
\sin \left(\frac{\pi j_2 x_2}{L_2}\right) \\\\
-\frac{\pi j_1}{L_1} \sin\left(\frac{\pi j_1 x_1}{L_1}\right)
\cos \left(\frac{\pi j_2 x_2}{L_2}\right)
\end{matrix} \right)\, ,
\label{eq:TE}
\end{equation}
and 
\begin{equation}
\bphi_j^{(2)} = \alpha_j \nabla \widetilde E_{j,z}(\bx) = \alpha_j \left(
\begin{matrix}
\frac{\pi j_1}{L_1} \cos\left(\frac{\pi j_1 x_1}{L_1}\right)
\sin \left(\frac{\pi j_2 x_2}{L_2}\right) \\\\
\frac{\pi j_2}{L_2} \sin\left(\frac{\pi j_1 x_1}{L_1}\right)
\cos \left(\frac{\pi j_2 x_2}{L_2}\right)
\end{matrix} \right)\, ,
\label{eq:TM}
\end{equation}
normalized by 
\begin{equation}
\alpha_j = \left\{ \begin{array}{ll} \frac{2}{\sqrt{\lambda_j L_1
      L_2}}, \qquad &j = (j_1, j_2), \quad j_1 j_2 \ne 0,
\\  \\ \sqrt{\frac{2}{\lambda_j L_1 L_2}}, & j = (j_1,j_2), \quad j_1
  j_2 = 0.
\end{array}
\right.
\label{eq:alpha}
\end{equation}
so that
\[
\|\bphi^{(s)}_j\|^2 = \int_{\Omega} d \bx \left| \bphi_j^{(s)}(\bx)
\right|^2 = 1, \qquad s = 1,2 .
\]

The vectors indexed by $s = 1$ correspond to \emph{transverse
  electric} (TE) modes. Indeed, they satisfy
\begin{equation}
\nabla \cdot \bphi^{(1)}(\bx) = 0, \qquad \bx \in \Omega,
\label{eq:P12}
\end{equation}
so when we set $\bH^\perp(\vx) = \bphi^{(1)}(\bx)e^{i \beta_j z}$ in
(\ref{eq:f7}) we get $E_{z}(\vx) = 0$.  Similarly, the vectors indexed
by $s = 2$ correspond to \emph{transverse magnetic} (TM) modes.  They
satisfy
\begin{equation}
\nabla^\perp \cdot \bphi^{(2)}(\bx) = 0, \qquad \bx \in \Omega,
\label{eq:P13}
\end{equation}
and give $H_{z}(\vx) = 0$ by equation (\ref{eq:f6}).

The superposition of $\bphi_j^{(1)}(\bx)$ and $\bphi_j^{(2)}(\bx)$ in
the definition of the fields $\widetilde \bE$ and $(\widetilde
\bH)^\perp$ is their Helmholtz decomposition in a divergence free part
and a curl free part.

\subsubsection{Analogous derivation of the waveguide modes}
\label{sect:analog}
We could have arrived at the same wave decomposition if we worked
directly with the transverse components $\bD$ and $\bH$ of the fields.
This observation is relevant because when the permittivity varies in
$\vx$, as in the random waveguide, it is no longer possible to solve
independently for the longitudinal wave fields $E_z$ and $D_z$.

We let 
\begin{equation}
\bH(\vx) = c_o \bU^\perp(\vx),
\label{eq:defU}
\end{equation}
where $\bU$ is the rotated magnetic field scaled by $1/c_o$. It is convenient
to work in the $\bD$ and $\bU$ variables because as we see below, they
satisfy the same boundary conditions and have the same physical units.
Note from (\ref{eq:TE}) and (\ref{eq:TM}) that $\bphi^{(s)}(\bx)$ are
eigenfunctions of the vector Laplacian
\begin{equation}
\Delta \bphi_j^{(s)}(\bx) = \nabla \left[ \nabla \cdot
  \bphi_j^{(s)}(\bx)\right] + \nabla^\perp \left[ \nabla^\perp \cdot
  \bphi_j^{(s)}(\bx)\right] = -\lambda_j \bphi_j^{(s)}(\bx), 
\end{equation}
for $\bx \in \Omega$, with boundary conditions
\begin{equation}
\bn^\perp(\bx) \cdot \bphi_j^{(s)}(\bx) = 0, \qquad \nabla \cdot
\bphi_j^{(s)}(\bx) = 0, \qquad \bx \in \partial \Omega.
\label{eq:P9}
\end{equation}
The index $s = 1, 2$ corresponds to the multiplicity $\mathfrak{M}_j$
of the eigenvalues. We can limit the multiplicity of $\lambda_j$ by
assuming that the waveguide dimensions satisfy $L_1/L_2 \ne
\mathbb{Q}$. This implies that
\begin{equation}
\lambda_j \ne \lambda_{j'}\,, \qquad {\rm if} ~ ~ j = (j_1,j_2) \ne j'
= (j_1',j_2')\, .
\label{eq:P11}
\end{equation}
When $j = (j_1, j_2)$ and either $j_1$ or $j_2$ are zero,
$\mathfrak{M}_j = 1$, and only the TE modes $\bphi_j^{(1)}(\bx)$
exist. Otherwise $\mathfrak{M}_j = 2$.

The eigenfunctions satisfy the orthogonality relations
\begin{equation}
  \left< \bphi_j^{(s)}\,, \bphi_{j'}^{(s')}\right> = 
  \int_\Omega d \bx\, \bphi_{j}^{(s)}(\bx) \cdot 
  \bphi_{j'}^{(s')}(\bx) = \delta_{jj'}\delta_{ss'}\, .
\label{eq:orthog}
\end{equation}
and $\left\{ \bphi_j^{(s)} \right\}_{1 \le s \le \mM_j, j \ge 1}$ is a
complete set that can be used to describe an arbitrary electromagnetic
wave field in the waveguide \cite[chapter8]{jackson1999classical}.

The boundary conditions (\ref{eq:P9}) are consistent with the
conditions satisfied by $\bD(\vx)$ and $\bU(\vx)$, derived from
Maxwell's equations. Indeed, equations (\ref{eq:f7}), (\ref{eq:f10})
and the assumption (\ref{eq:f4}) on the source density give that
\begin{equation}
\nabla \cdot \bU(\vx) = - \frac{1}{c_o} \nabla^\perp \cdot \bH(\vx) = 0, \qquad 
\bx \in \partial \Omega.
\label{eq:CU1}
\end{equation}
Moreover, equation (\ref{eq:f11}) says that 
\begin{equation}
\bn^\perp(\bx) \cdot \bU(\vx) = -\frac{1}{c_o} \bn(\bx) \cdot \bH(\vx) = 0, 
\qquad \bx \in \partial \Omega.
\label{eq:CU2}
\end{equation}
For the electric displacement we already know from (\ref{eq:f10}) that
\begin{equation}
\bn^\perp(\bx) \cdot \bD(\vx) = 0, \qquad \bx \in \partial \Omega.
\label{eq:CD1}
\end{equation}
The divergence condition follows from (\ref{eq:bD}) and (\ref{eq:CU1})
\begin{equation}
\nabla \cdot \bD(\vx) = 0, \qquad \bx \in \partial \Omega,
\label{eq:CD2}
\end{equation}
and since $\partial_z D_z = 0$, it is consistent with the conservation
of charge.


\subsection{The solution in ideal waveguides}
\label{sect:amplit}
We expand $\bD(\vx)$ and $\bU(\vx)$ in the basis $\left\{
\bphi_j^{(s)} \right\}_{1 \le s \le \mM_j, j \ge 1}$ and associate to
each $\bphi_j^{(s)}(x)$ a mode, which is a propagating or evanescent
wave. We rename the fields $\bD_o(\vx)$ and $\bU_o(\vx)$ to remind us
that we are in the ideal waveguide.

Using the identities
\begin{align}
  k^2 \bphi_j^{(s)}(\bx) + \nabla \left[ \nabla \cdot
    \bphi_j^{(s)}(\vx) \right] &= \left[ k^2 \delta_{s1} +
    (k^2-\lambda_j) \delta_{s2}\right]\bphi_j^{(s)}(\bx)\, , \label{eq:UNPT1} \\
  k^2 \bphi_j^{(s)}(\bx) + \nabla^\perp \left[ \nabla^\perp \cdot
    \bphi_j^{(s)}(\vx) \right] &= \left[(k^2-\lambda_j) \delta_{s1} +
    k^2 \delta_{s2}\right] \bphi_j^{(s)}(\bx)\, ,\label{eq:UNPT2}
\end{align}
we obtain that 
\begin{align}
  \hspace{-0.2in} \bD_o(\vx) =& \sum_{j=1}^N \sum_{s=1}^{\mM_j}
  \bphi_{j}^{(s)}(\bx)\left(\sqrt{\frac{k}{\beta_j}} \de_{s1} +
    \sqrt{\frac{\beta_j}{k}} \de_{s2} \right) \left( A_{j,o}^{\pm (s)}
    e^{i \beta_j
      z} + B_{j,o}^{\pm (s)} e^{-i \beta_j z} \right) + \nonumber \\
  &\sum_{j>N} \sum_{s=1}^{\mM_j}
  \bphi_{j}^{(s)}(\bx)\left(\sqrt{\frac{k}{\beta_j}} \de_{s1} +
    \sqrt{\frac{\beta_j}{k}} \de_{s2} \right) \mathfrak{E}_{j,o}^{\pm (s)}
  e^{-\beta_j |z|} \, , \qquad
\label{eq:ID1}
\end{align}
and 
\begin{align}
  \bU_o(\vx) =& \sum_{j=1}^N
  \sum_{s=1}^{\mM_j}\bphi_{j}^{(s)}(\bx)
  \left(\sqrt{\frac{\beta_j}{k}} \de_{s1} + \sqrt{\frac{k}{\beta_j}}
    \de_{s2} \right) \left( A_{j,o}^{\pm(s)} e^{i \beta_j
      z} - B_{j,o}^{\pm (s)} e^{-i \beta_j z} \right) \pm \nonumber \\
  & i \sum_{j>N} \sum_{s=1}^{\mM_j} \bphi_j^{(s)}(\bx)
  \left(\sqrt{\frac{\beta_j}{k}} \de_{s1} - \sqrt{\frac{k}{\beta_j}}
    \de_{s2}\right) \mathfrak{E}_{j,o}^{\pm (s)} e^{-\beta_j |z|} \, ,
\label{eq:ID2}
\end{align}
for $z \ne 0$.  The normalization coefficients $\sqrt{k/\beta_j}$ and
$\sqrt{\beta_j/k}$ are not important here, and could be absorbed in
the mode amplitudes. We use them for consistency with the mode
expansions for the random waveguide in section \ref{sect:modec}.  There
the normalization symmetrizes the system of equations satisfied by the
mode amplitudes.

The amplitudes in (\ref{eq:ID1}-\ref{eq:ID2}) are constant on
each side of the source, and are determined by the source density and
the outgoing boundary conditions.  There are no backward going modes
to the right of the source, at positive ranges, so we can set
$B_{j,o}^{+(s)} = 0$. Similarly, we let $A_{j,o}^{-(s)} = 0$.  The
remaining amplitudes are obtained from the source conditions
\begin{align*} 
  \bD_o(\bx,0+) - \bD_o(\bx,0-) &= -\frac{i}{c_o k} \nabla \rJ_z(\bx)\, ,
\\
\bU_o(\bx,0+) - \bU_o(\bx,0-) &= -\frac{1}{c_o} \bJ(\bx)\, .
\end{align*}
Substituting (\ref{eq:ID1}-\ref{eq:ID2}) in these conditions and using
the orthogonality relations (\ref{eq:orthog}), we get 
\begin{align}
  A_{j,o}^{+(s)} = &- \frac{1}{2
    c_o}\left(\sqrt{\frac{k}{\beta_j}}\de_{s1} +
    \sqrt{\frac{\beta_j}{k}} \de_{s2}\right) \left< \bphi_j^{(s)}\, , 
\bJ \right>- \nonumber \\
  &\frac{i}{2 c_o k} \left( \sqrt{\frac{\beta_j}{k}} \de_{s1} +
    \sqrt{\frac{k}{\beta_j}}\de_{s2}\right)\left< \nabla
    \rJ_z,\bphi_j^{(s)} \right>\, ,
\label{eq:iniA}
\end{align}
and 
\begin{align}
  B_{j,o}^{-(s)} = &- \frac{1}{2
    c_o}\left(\sqrt{\frac{k}{\beta_j}}\de_{s1} +
    \sqrt{\frac{\beta_j}{k}} \de_{s2}\right)
  \left< \bphi_j^{(s)}\, , \bJ \right>+ \nonumber \\
  &\frac{i}{2 c_o k} \left( \sqrt{\frac{\beta_j}{k}} \de_{s1} +
    \sqrt{\frac{k}{\beta_j}}\de_{s2}\right)\left< \nabla
    \rJ_z,\bphi_j^{(s)} \right>\, ,
\end{align}
for the propagating modes and 
\begin{align}
  \mathfrak{E}_{j,o}^{\pm(s)} = & \frac{i}{2
    c_o}\left(\sqrt{\frac{k}{\beta_j}}\de_{s1} -
    \sqrt{\frac{\beta_j}{k}} \de_{s2}\right)
  \left< \bphi_j^{(s)}\, , \bJ \right> \mp \nonumber \\
  &\frac{i}{2 c_o k} \left( \sqrt{\frac{\beta_j}{k}} \de_{s1} +
    \sqrt{\frac{k}{\beta_j}}\de_{s2}\right)\left< \nabla
    \rJ_z,\bphi_j^{(s)} \right>\, ,
\end{align}
for the evanescent modes.

\subsubsection{Energy conservation}
\label{sect:energyideal}
The energy conservation is obvious in this case, because the
amplitudes are constant. Substituting (\ref{eq:ID1}-\ref{eq:ID2}) in
the expression of the flux $\cS(z)$ and using the orthogonality
relations (\ref{eq:orthog}), we obtain that
\begin{align}
  \cS(z) = \frac{c_o}{\ep_o} \int_{\Omega} \operatorname{Re} \left[
    \bD_o(\vx) \cdot \overline{\bU_o(\vx)}\right] = \sum_{j=1}^N
  \sum_{s=1}^{\mM_j} \left(\left| A_{j,o}^{\pm(s)}\right|^2 -\left|
      B_{j,o}^{\pm(s)}\right|^2\right)\, , \quad \forall z \in
  \mathbb{R}\, .
\label{eq:cons}
\end{align}
The flux changes value at $z = 0$, where the source lies, but it is
constant for $z \ne 0$,
\begin{equation}
  \cS(|z|) = \frac{c_o}{\ep_o} \sum_{j=1}^N \sum_{s=1}^{\mM_j}
  \left| A_{j,o}^{+(s)}\right|^2 = 
  -\cS(-|z|) = \frac{c_o}{\ep_o} \sum_{j=1}^N \sum_{s=1}^{\mM_j}
  \left| B_{j,o}^{-(s)}\right|^2\, , \qquad z \ne 0\, .
\label{eq:fluxideal}
\end{equation}
The evanescent modes play no role in the transport of energy.
\section{Statement of the problem in the random waveguide}
\label{sect:asympt}
We begin with the model of the small fluctuations. Then we write the
perturbed system of equations for the wave fields, which we analyze in
the remainder of the paper.
\subsection{Model of the fluctuations} 
\label{sect:formMod}
Let us denote by $n(\vx)$ the index of refraction
\begin{equation}
  n(\vx) = \frac{c_o}{c(\vx)} = \sqrt{\frac{\ep(\vx)}{\ep_o}}\, .
\end{equation}
It is the ratio of the electromagnetic wave speeds $c_o$ and $c(\vx) =
1/\sqrt{\ep(\vx)\mu_o}$ in the homogeneous and heterogeneous medium,
respectively.  We model the electrical permittivity by
\begin{equation}
\ep(\vx) = \ep_o n^2(\vx)\,, \qquad 
\label{eq:f16}
  n^2(\vx) =  
  \left[1 + \epsilon \nu(\vx)\right]1_{(0+,z_{\rm max})}(z) \,,
\end{equation}
where $\nu(\vx)$ is a dimensionless random function assumed twice
continuously differentiable, with almost sure bounded derivatives. It
has zero mean
\begin{equation}
\EE\left[\nu(\vx)\right] = 0\, ,
\end{equation}
and it is stationary and mixing in $z$. We refer to \cite[Section
4.6.2]{kushner} for a precise statement of the mixing condition.  It
means in particular that the covariance
\begin{equation}
\cR_\nu(\bx,\bx',z) = \EE\left[\nu(\bx,z) \nu(\bx',0)\right]
\label{eq:f.18}
\end{equation}
is integrable in $z$. The amplitude of the fluctuations in
(\ref{eq:f16}) is scaled by $\epsilon \ll 1$, the small parameter
in our asymptotic analysis.

The indicator function $1_{(0+,z_{\rm max})}(z)$ in (\ref{eq:f16})
limits the support of the fluctuations to the range interval $z \in
(0+,z_{\rm max})$, where $0+$ denotes a range that is close to zero,
but strictly larger than it. The bounded support of the fluctuations
is needed to state the outgoing boundary conditions on the
electromagnetic wave fields, and may be justified in practice by the
causality of the problem in the time domain.  During a finite
observation time $t_{\rm max}$, the waves are influenced by the medium
up to a finite range $ z_{\rm max} \approx c_o t_{\rm max}\, , $ so we
may truncate the fluctuations beyond the range $z_{\rm max}$. That
there are no fluctuations at negative ranges may be motivated by two
facts: First, the source is at $z=0$ and we wish to study the waves at
positive ranges.  Second, we will consider a regime where the
backscattered field is negligible.  Thus, we may neglect at $z>0$ the
waves that come from $z <0$, and truncate the fluctuations at $z =
0+$.

\subsection{The perturbed system of equations in the random waveguide}
\label{sect:formE}
We work with the electric displacement $\bD(\vx)$ and the scaled
rotated magnetic field $\bU(\vx)$, defined in equation
(\ref{eq:defU}). As we explained in the previous section, this is
convenient because the fields satisfy the same boundary conditions and
have the same units.

The equations for $\bD(\vx)$ and $\bU(\vx)$ follow from (\ref{eq:f3}),
(\ref{eq:f8}-\ref{eq:f9}), (\ref{eq:f16}) and (\ref{eq:defU}). We have
\begin{align}
  \partial_z \bD(\vx) =& \frac{i}{k} \left\{k^2 n^2(\vx) \bU(\vx) +
    \nabla \left[\nabla \cdot \bU(\vx)\right] 
    -n^{-2}(\vx) \nabla n^2(\vx) \nabla \cdot \bU(\vx)
  \right\} + \nonumber \\
  & n^{-2}(\vx) \partial_z n^2(\vx) \, \bD(\vx) - \frac{i }{c_o k}
  \nabla \cJ_z(\vx)\, ,
\label{eq:P1}
\end{align}
for the electric displacement and 
\begin{align}
  \partial_z \bU(\vx) = & \frac{i }{k} \left\{ k^2 \bD(\vx) +
    \nabla^\perp \left[\nabla^\perp \cdot \left(n^{-2}(\vx)
        \bD(\vx)\right)\right] \right\} - \frac{1}{c_o}\bcJ(\vx)\, ,
\label{eq:P2}
\end{align}
for the rotated magnetic field, where we used that the fluctuations
are supported away from the source.  Morever, substituting the model
(\ref{eq:f16}) of the fluctuations, we obtain
\begin{align} 
  \partial_z \bD(\vx) =& \frac{i}{k} \left\{ k^2 \bU(\vx) + \nabla
    \left[\nabla \cdot \bU(\vx)\right]\right\} - \frac{i}{c_o k}
  \nabla \cJ_z(\vx) + \nonumber \\
  & \epsilon \left\{
    \partial_z \nu(\vx) \, \bD(\vx) + \frac{i}{k} \left[k^2
      \nu(\vx) \bU(\vx) -\nabla \nu(\vx) \, \nabla \cdot
      \bU(\vx)\right] \right\}
  + \nonumber \\
  &\frac{\epsilon^2}{2} \left[ -\partial_z \nu^2(\vx) \,
    \bD(\vx) + \frac{i}{k} \nabla \nu^2(\vx) \nabla \cdot
    \bU(\vx)\right] + O(\epsilon^3)\, ,
\label{eq:P3}
\end{align}
and
\begin{align} 
  \partial_z \bU(\vx) =& \frac{i}{ k} \left\{ k^2 \bD(\vx) +
    \nabla^\perp \left[\nabla^\perp \cdot \bD(\vx)\right]\right\} -
  \frac{1}{c_o}\bcJ(\vx) -
  \nonumber \\
  &\epsilon \frac{i }{k} \left\{ \nabla^\perp \left[ \nu(\vx)
      \nabla^\perp \cdot \bD(\vx)\right] + \nabla^\perp \left[
      \bD(\vx) \cdot \nabla^\perp \nu(\vx)\right]\right\} +
  \nonumber \\
  &\epsilon^2 \frac{i }{k}\left\{ \nabla^\perp \left[ \nu^2(\vx)
      \nabla^\perp \cdot \bD(\vx) \right] + \nabla^\perp \left[
      \bD(\vx) \cdot \nabla^\perp \nu^2(\vx)\right] \right\} +
  O(\epsilon^3)\, ,\label{eq:P4}
\end{align}
with remainder involving powers $(\epsilon \nu)^q$, for $q \ge 3$.  It
is of order $\epsilon^3$ because $\nu(\vx)$ is twice differentiable,
with almost sure bounded derivatives.  

The leading order terms in (\ref{eq:P3}-\ref{eq:P4}) involve the
operators (\ref{eq:UNPT1}-\ref{eq:UNPT2}), so we have a perturbation
of the problem in the ideal waveguide.  The conservation of the energy
flux follows from (\ref{eq:ENC}) and definitions
(\ref{eq:f3}),(\ref{eq:f16}) and (\ref{eq:defU})
\begin{align}
  \cS(z) &= \frac{c_o}{\ep_o}\int_\Omega d \bx \,
  \operatorname{Re}\left[ n^{-2}(\vx) \,
    \bD(\vx) \cdot \overline{\bU(\vx)} \right]\nonumber \\
  &= \frac{c_o}{\ep_o} \int_\Omega d \bx \, \left[ 1 - \epsilon
    \nu(\vx) + \epsilon^2 \nu^2(\vx) + O(\epsilon^3) \right]
  \operatorname{Re}\left[
    \bD(\vx) \cdot \overline{\bU(\vx)}\right] \nonumber \\
  &= \cS(0+)\, , \qquad z > 0.
\label{eq:P5}
\end{align}

\section{Mode decomposition and coupling in random waveguides}
\label{sect:modec}
The equations in the random waveguide are no longer separable, but $\{
\bphi_j^{(s)}(\bx) \}_{1 \le s \le \mM_j, j \ge 1}$ is an orthonormal
basis, so we can still use it to decompose the wave fields for any
range $z$.  The essential difference in the decomposition is that
while the mode amplitudes are constant in ideal waveguides, they vary
in range in the random waveguides, due to scattering. The range
evolution of the mode amplitudes is described by a coupled system of
infinitely many stochastic ordinary differential equations.  We show
in section \ref{sect:evanesc} that we can solve for the amplitudes of
the evanescent modes, and thus obtain in section \ref{sect:propag} a
closed and finite system of equations for the amplitudes of the
propagating modes. This system is the main result of the section. We
use it in section \ref{sect:diff} to obtain an explicit long range
characterization of the statistical distribution of the
electromagnetic wave field.

\subsection{Mode decomposition}
We decompose the fields as 
\begin{align}
  \hspace{-0.2in} \bD(\vx) =& \sum_{j=1}^N \sum_{s=1}^{\mM_j}
  \bphi_{j}^{(s)}(\bx)\left(\sqrt{\frac{k}{\beta_j}} \de_{s1} +
    \sqrt{\frac{\beta_j}{k}} \de_{s2} \right) \left( A_{j}^{
      (s)}(z) e^{i \beta_j
      z} + B_{j}^{(s)}(z) e^{-i \beta_j z} \right) + \nonumber \\
  &\sum_{j>N} \sum_{s=1}^{\mM_j}
  \bphi_{j}^{(s)}(\bx)\left(\sqrt{\frac{k}{\beta_j}} \de_{s1} +
    \sqrt{\frac{\beta_j}{k}} \de_{s2} \right) V_j^{(s)}(z) \, ,
  \qquad
\label{eq:RD1}
\end{align}
and
\begin{align}
  \bU(\vx) =& \sum_{j=1}^N \sum_{s=1}^{\mM_j}\bphi_{j}^{(s)}(\bx)
  \left(\sqrt{\frac{\beta_j}{k}} \de_{s1} + \sqrt{\frac{k}{\beta_j}}
    \de_{s2} \right) \left( A_{j}^{(s)}(z) e^{i \beta_j
      z} - B_{j}^{(s)}(z) e^{-i \beta_j z} \right) \pm \nonumber \\
  & i \sum_{j>N} \sum_{s=1}^{\mM_j} \bphi_j^{(s)}(\bx)
  \left(\sqrt{\frac{\beta_j}{k}} \de_{s1} - \sqrt{\frac{k}{\beta_j}}
    \de_{s2}\right) v_j^{(s)}(z) \, ,
\label{eq:RD2}
\end{align}
for $z \ne 0$. The decomposition is similar to that in ideal
waveguides, but the mode amplitudes vary in $z$ due to scattering in
the random medium. We show in section (\ref{sect:modecouple}) that the
forward and backward going mode amplitudes $A_j^{(s)}$ and $B_j^{(s)}$
are coupled with each other and with the evanescent modes written in
(\ref{eq:RD1}-\ref{eq:RD2}) as $V_j^{(s)}(z)$ and $v_j^{(s)}(z)$. In
ideal waveguides the evanescent modes were equal to
$\mathfrak{E}_{j,o}^{(s)} {\rm exp}(-\beta_j z)$, for $1\le s \le
\mM_j$ and $j > N$. They have a more complicated expression in random
waveguides, as explained in section \ref{sect:evanesc}.

The expansions (\ref{eq:RD1}-\ref{eq:RD2}) satisfy the boundary
conditions (\ref{eq:CU1}-\ref{eq:CD2}) at $\partial \Omega$.  The
outgoing conditions and the finite range support $(0+,z_{\rm max})$ of
the fluctuations give
\begin{align}
B_j^{(s)}(z_{\rm max}) &= 0 \, ,
\label{eq:RD4}\\ 
A_j^{(s)}(0+) &= A_{j,o}^{(s)}\, .
\label{eq:RD3}
\end{align}
The first equation says that there are no backward going waves coming
from infinity, because there are no fluctuations beyond $z = z_{\rm
  max}$. The second equation follows from the source conditions
\begin{align*}
D(\bx,0+) - D(\bx,0-) &= -\frac{i}{c_o k} \nabla \rJ_z(\bx) = 
D_o(\bx,0+) - D_o(\bx,0-)\,, \\
U(\bx,0+) - U(\bx,0-) &= -\frac{1}{c_o} \bJ(\bx) = 
U_o(\bx,0+) - U_o(\bx,0-)\, ,
\end{align*}
and the outgoing condition $ A_j^{(s)}(z) = 0 $ at ranges $z < 0$,
where the medium is homogeneous. The evanescent modes satisfy
\begin{equation}
\lim_{|z| \to \infty}V_j^{(s)}(z) = \lim_{|z| \to \infty}v_j^{(s)}(z) 
= 0\, .
\label{eq:RD5}
\end{equation}
\subsection{Mode coupling}
\label{sect:modecouple}
Substituting (\ref{eq:RD1}-\ref{eq:RD2}) in (\ref{eq:P3}-\ref{eq:P4})
and using identities (\ref{eq:UNPT1}-\ref{eq:UNPT2}) and the
orthogonality relation (\ref{eq:orthog}), we obtain a system of
stochastic differential equations that describes the range evolution
of the mode amplitudes. The rate of change of the amplitudes of the
forward going modes is given by
\begin{align}
  \partial_z A_j^{(s)}(z) =& \epsilon\sum_{j' = 1}^N
  \sum_{s'=1}^{\mM_{j'}} \left[M_{AA,jj'}^{(ss')}(z) + \epsilon \,
    m_{AA,jj'}^{(ss')}(z)\right] A_{j'}^{(s')}(z)
  \, e^{i (\beta_{j'}-\beta_j) z} + \nonumber \\
  & \epsilon\sum_{j' = 1}^N \sum_{s'=1}^{\mM_{j'}}
  \left[M_{AB,jj'}^{(ss')}(z) + \epsilon \,
    m_{AB,jj'}^{(ss')}(z)\right] B_{j'}^{(s')}(z)
  \, e^{-i (\beta_{j'}+\beta_j) z} + \nonumber \\
  & \epsilon \sum_{j' > N} \sum_{s'=1}^{\mM_{j'}}
  \left[M_{AV,jj'}^{(ss')}(z) + \epsilon \,
    m_{AV,jj'}^{(ss')}(z)\right] V_{j'}^{(s')}(z) \, e^{-i \beta_j z}
  +
  \nonumber \\
  & \epsilon \sum_{j' > N} \sum_{s'=1}^{\mM_{j'}}
  \left[M_{Av,jj'}^{(ss')}(z) + \epsilon \,
    m_{Av,jj'}^{(ss')}(z)\right] v_{j'}^{(s')}(z) \, e^{-i \beta_j z}
  + O(\epsilon^3)\, , \label{eq:RD6}
\end{align}
for $z > 0$, with initial condition (\ref{eq:RD3}). The rate of change
of the amplitudes of the backward moving modes is
\begin{align}
  \partial_z B_j^{(s)}(z) =& \epsilon\sum_{j' = 1}^N
  \sum_{s'=1}^{\mM_{j'}} \left[M_{BA,jj'}^{(ss')}(z) + \epsilon \,
    m_{BA,jj'}^{(ss')}(z)\right] A_{j'}^{(s')}(z)
  \, e^{i (\beta_{j'}+\beta_j) z} + \nonumber \\
  & \epsilon\sum_{j' = 1}^N \sum_{s'=1}^{\mM_{j'}}
  \left[M_{BB,jj'}^{(ss')}(z) + \epsilon \,
    m_{BB,jj'}^{(ss')}(z)\right] B_{j'}^{(s')}(z)
  \, e^{-i (\beta_{j'}-\beta_j) z} + \nonumber \\
  & \epsilon \sum_{j' > N} \sum_{s'=1}^{\mM_{j'}}
  \left[M_{BV,jj'}^{(ss')}(z) + \epsilon \,
    m_{BV,jj'}^{(ss')}(z)\right] V_{j'}^{(s')}(z) \, e^{i \beta_j z} +
  \nonumber \\
  & \epsilon \sum_{j' > N} \sum_{s'=1}^{\mM_{j'}}
  \left[M_{Bv,jj'}^{(ss')}(z) + \epsilon \,
    m_{Bv,jj'}^{(ss')}(z)\right] v_{j'}^{(s')}(z) \, e^{i \beta_j z} +
  O(\epsilon^3)\, , \label{eq:RD7}
\end{align}
for $z > 0$, with end condition (\ref{eq:RD4}) at $z = z_{\rm max}$.
The evanescent components $V_{j}^{(s)}(z)$ and $v_{j}^{(s)}(z)$ are
described in the next section.

The coupling coefficients in the right hand side of equations
(\ref{eq:RD6}-\ref{eq:RD6}) are stationary random processes in $z$,
defined in terms of the fluctuations $\nu$. We refer to appendix
\ref{ap:1} for their expression and symmetry relations.  The leading
order terms of these coefficients, denoted by the capital letter $M$
as in $M_{AA,jj'}^{(ss')}(z)$, are linear in $\nu$, so they have zero
expectation.  The second order terms, denoted by the small letter $m$
as in $m_{AA,jj'}^{(ss')}(z)$, are quadratic in $\nu$.

\subsection{The evanescent modes}
\label{sect:evanesc}
The evanescent modes satisfy the equations 
\begin{align}
  \partial_z V_j^{(s)}(z) + \beta_j v_j^{(s)}(z) = & \epsilon
  F_j^{(s)}(z) + \epsilon \sum_{j'>N} \sum_{s'=1}^{\mM_{j'}}
  M_{VV,jj'}^{(ss')}(z) \, V_{j'}^{(s')}(z)\,  + \nonumber \\
&\epsilon \sum_{j'>N} \sum_{s'=1}^{\mM_{j'}} M_{Vv,jj'}^{(ss')}(z)
    \, v_{j'}^{(s')}(z) + O(\epsilon^2) \, , \label{eq:ev1} 
\end{align}
and 
\begin{align}
  \partial_z v_j^{(s)}(z) + \beta_j V_j^{(s)}(z) &= \epsilon
  f_j^{(s)}(z) + \epsilon  \sum_{j'>N}\sum_{s'=1}^{\mM_{j'}}
  M_{vV,jj'}^{(ss')}(z) \, V_{j'}^{(s')}(z) + O(\epsilon^2)\, ,
\label{eq:ev2}
\end{align}
for $z > 0$,  with forcing terms 
\begin{align}
  F_j^{(s)}(z) &= \sum_{j'=1}^N \sum_{s'=1}^{\mM_{j'}} \left[
    M_{VA,jj'}^{(ss')}(z)\, A_{j'}^{(s')}(z) \, e^{i \beta_{j'} z} +
    M_{VB,jj'}^{(ss')}(z)\, B_{j'}^{(s')}(z)\, e^{-i \beta_{j'}
      z}\right]\, ,
  \label{eq:ev3} \\
  f_j^{(s)}(z) &= \sum_{j'=1}^N \sum_{s'=1}^{\mM_{j'}} \left[
    M_{vA,jj'}^{(ss')}(z)\, A_{j'}^{(s')}(z) \, e^{i \beta_{j'} z} +
    M_{vB,jj'}^{(ss')}(z)\, B_{j'}^{(s')}(z) \, e^{-i \beta_{j'}
      z}\right]\,.
\label{eq:ev4}
\end{align}
The coupling coefficients are described in appendix \ref{ap:1}. They
are stationary processes in $z$ that depend linearly on the
fluctuations $\nu$.  

The system of equations (\ref{eq:ev1}-\ref{eq:ev2}) is solved in
appendix \ref{ap:2}. We state the result in Lemma \ref{lem.2} which we
use in the next section to obtain a closed system of equations for the
propagating mode amplitudes.
\begin{lemma}
\label{lem.2}
The evanescent modes are given by 
\begin{align}
  V_j^{(s)}(z) = &\mathfrak{E}_{j,o}^{(s)} \, e^{-\beta_j z} + \frac{\epsilon}{2}
  \int_{-\infty}^\infty d \zeta
  \, f_j^{(s)}(z+\zeta) \, e^{-\beta_j |\zeta|} + \nonumber \\
  & \frac{\epsilon}{2} \int_{0}^\infty d \zeta \,
  \left[F_j^{(s)}(z-\zeta)-F_j^{(s)}(z+\zeta)\right] e^{-\beta_j
    \zeta} + O(\epsilon^2)\, ,
\label{eq:ev5}
\end{align}
and 
\begin{align}
  v_j^{(s)}(z) =&\mathfrak{E}_{j,o}^{(s)}\, e^{-\beta_j z} + \frac{\epsilon}{2}
  \int_{-\infty}^\infty d \zeta
  \, F_j^{(s)}(z+\zeta)\, e^{-\beta_j |\zeta|} + \nonumber \\
  & \frac{\epsilon}{2} \int_{0}^\infty d \zeta \,
  \left[f_j^{(s)}(z-\zeta)-f_j^{(s)}(z+\zeta)\right] e^{-\beta_j
    \zeta} + O(\epsilon^2)\, .
\label{eq:ev6}
\end{align}
\end{lemma}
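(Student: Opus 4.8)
The plan is to solve the linear system \eqref{eq:ev1}--\eqref{eq:ev2} by first inverting its constant-coefficient leading part and then treating the $O(\epsilon)$ coupling among evanescent modes perturbatively. The leading operator couples $V_j^{(s)}$ and $v_j^{(s)}$ through the off-diagonal block $\left(\begin{smallmatrix} 0 & \beta_j \\ \beta_j & 0\end{smallmatrix}\right)$, which I would diagonalize by passing to the variables $p_j^{(s)} = V_j^{(s)} + v_j^{(s)}$ and $q_j^{(s)} = V_j^{(s)} - v_j^{(s)}$. Adding and subtracting \eqref{eq:ev1} and \eqref{eq:ev2} gives the decoupled scalar equations $\partial_z p_j^{(s)} + \beta_j p_j^{(s)} = \epsilon(F_j^{(s)} + f_j^{(s)}) + O(\epsilon^2)$ and $\partial_z q_j^{(s)} - \beta_j q_j^{(s)} = \epsilon(F_j^{(s)} - f_j^{(s)}) + O(\epsilon^2)$, where the $O(\epsilon^2)$ terms collect the evanescent--evanescent coupling $M_{VV},M_{Vv},M_{vV}$.

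Second, I would integrate each scalar equation with the Green's function selected by the decay condition \eqref{eq:RD5}. For $p_j^{(s)}$ the homogeneous mode $e^{-\beta_j z}$ is admissible, so I keep it, with coefficient fixed by the source/matching condition at $z=0+$ and reducing at leading order to the ideal value $\mathfrak{E}_{j,o}^{(s)}$, and add the causal particular solution $\tfrac{\epsilon}{2}\int_0^\infty e^{-\beta_j\zeta}(F_j^{(s)}+f_j^{(s)})(z-\zeta)\,d\zeta$. For $q_j^{(s)}$ the only homogeneous mode is $e^{\beta_j z}$, which violates \eqref{eq:RD5} and must be discarded; this leaves no homogeneous freedom, so $q$ is purely the anti-causal particular solution $-\tfrac{\epsilon}{2}\int_0^\infty e^{-\beta_j\zeta}(F_j^{(s)}-f_j^{(s)})(z+\zeta)\,d\zeta$. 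In particular the leading part of $q$ vanishes, which is why both \eqref{eq:ev5} and \eqref{eq:ev6} carry the same transient $\mathfrak{E}_{j,o}^{(s)}e^{-\beta_j z}$.

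Third, I would recover $V_j^{(s)} = (p_j^{(s)}+q_j^{(s)})/2$ and $v_j^{(s)} = (p_j^{(s)}-q_j^{(s)})/2$ and simplify. The $f$-integrals in $V$ (and the $F$-integrals in $v$) combine, after the substitution $\zeta\mapsto-\zeta$ in the causal half, into the two-sided kernel $\int_{-\infty}^\infty e^{-\beta_j|\zeta|}(\cdot)(z+\zeta)\,d\zeta$, while the $F$-integrals in $V$ (and the $f$-integrals in $v$) remain as the one-sided difference $\int_0^\infty[(\cdot)(z-\zeta)-(\cdot)(z+\zeta)]e^{-\beta_j\zeta}\,d\zeta$. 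This is exactly \eqref{eq:ev5}--\eqref{eq:ev6}.

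Finally, I must justify that the evanescent--evanescent coupling falls into the $O(\epsilon^2)$ remainder, and this is the main obstacle. The forcings $F_j^{(s)},f_j^{(s)}$ are $O(1)$ in the propagating amplitudes by \eqref{eq:ev3}--\eqref{eq:ev4}, so the particular solution they generate is $O(\epsilon)$; reinserting it into the $\epsilon M_{VV},\epsilon M_{Vv},\epsilon M_{vV}$ terms manifestly produces $O(\epsilon^2)$. The only delicate piece is the coupling acting on the $O(1)$ homogeneous transient $\mathfrak{E}_{j',o}^{(s')}e^{-\beta_{j'}z}$, which is formally $O(\epsilon)$ but exponentially localized near $z=0$ and hence negligible on the long ranges that drive the diffusion limit. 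To make the scheme rigorous despite the infinitely many evanescent indices, I would recast the inversion of step two as a single bounded integral operator $\mathcal{G}$ acting on sequences $\{V_j^{(s)},v_j^{(s)}\}$, whose kernels $e^{-\beta_j|\zeta|}$ are integrable with $\beta_j \ge \beta_{N+1}>0$, and run a Banach fixed-point argument: the map $\epsilon\,\mathcal{G}\circ(\text{coupling})$ is a contraction for $\epsilon$ small, provided the coupling coefficients are summable over $(j',s')$, which follows from the smoothness of $\nu$ together with the decay of $\alpha_{j'}$ in \eqref{eq:alpha}. Expanding the resulting fixed point in $\epsilon$ then yields \eqref{eq:ev5}--\eqref{eq:ev6} with a uniformly controlled $O(\epsilon^2)$ error.
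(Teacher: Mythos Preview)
Your proposal is correct and follows essentially the same route as the paper: diagonalize the $2\times2$ block via $p=V+v$, $q=V-v$ (the paper's $\theta^{\pm}=(V\pm v)/\sqrt{2}$), integrate with the appropriate one-sided Green's functions, recombine, and control the evanescent--evanescent coupling by a Neumann/contraction argument on the resulting integral operator. Two small points of comparison are worth noting. First, the paper works on the finite interval $(0,z_{\max})$, imposing the end condition $\theta_j^{(s)-}(z_{\max})=0$ rather than decay at infinity, and only afterward extends the integration limits using the exponential localization of $e^{-\beta_j|\zeta|}$; this is closer to the actual model (fluctuations supported in a bounded range) and avoids having to justify the decay condition \eqref{eq:RD5} inside the random segment. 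Second, the paper invests real effort in the boundedness of the coupling operator (their Lemma in the appendix), proving it in a specific weighted $\ell^2$ space with weights $\beta_j\big(\sqrt{k/\beta_j}\,\delta_{s1}+\sqrt{\beta_j/k}\,\delta_{s2}\big)$ and using two integrations by parts on $\nu$ together with discrete Young and Hilbert-transform inequalities; your sketch ``summability from smoothness of $\nu$ and decay of $\alpha_{j'}$'' points in the right direction but would need exactly this kind of argument to close. (Also, the particular solution of $\partial_z p+\beta_j p=\epsilon(F+f)$ carries a prefactor $\epsilon$, not $\epsilon/2$; the $1/2$ appears only after forming $V=(p+q)/2$.)
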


The first terms in these equations are as in ideal waveguides.  They
decay exponentially with $z$ and have a negligible contribution at
long ranges. The $O(\epsilon)$ terms capture the coupling with the
propagating modes and have long range effects in equations
(\ref{eq:RD6}-\ref{eq:RD7}). The remaining terms are negligible in the
limit $\epsilon \to 0$.

\subsection{Closed system for the propagating modes}
\label{sect:propag}
The substitution of the evanescent mode equations
(\ref{eq:ev5}-\ref{eq:ev6}) in (\ref{eq:RD6}-\ref{eq:RD7}) gives a
closed system of ordinary differential equations for the amplitudes of
the $N$ forward and backward going modes
\begin{align}
  \partial_z A_j^{(s)}(z) =& \epsilon\sum_{j' = 1}^N
  \sum_{s'=1}^{\mM_{j'}} \left[M_{AA,jj'}^{(ss')}(z) + \epsilon
    \widetilde m_{AA,jj'}^{(ss')}(z) \right] A_{j'}^{(s')}(z)
  e^{i (\beta_{j'}-\beta_j) z} + \nonumber \\
  & \epsilon\sum_{j' = 1}^N \sum_{s'=1}^{\mM_{j'}}
  \left[M_{AB,jj'}^{(ss')}(z) + \epsilon 
      \widetilde m_{AB,jj'}^{(ss')}(z) \right]
  B_{j'}^{(s')}(z )e^{-i (\beta_{j'}+\beta_j) z}+ O(\epsilon^3)\, ,
  \label{eq:C1}
\end{align}
and
\begin{align}
  \partial_z B_j^{(s)}(z) =& \epsilon\sum_{j' = 1}^N
  \sum_{s'=1}^{\mM_{j'}} \left[M_{BA,jj'}^{(ss')}(z) + \epsilon \,
    \widetilde m_{BA,jj'}^{(ss')}(z)\right] A_{j'}^{(s')}(z)
  e^{i (\beta_{j'}+\beta_j) z} + \nonumber \\
  & \epsilon\sum_{j' = 1}^N \sum_{s'=1}^{\mM_{j'}}
  \left[M_{BB,jj'}^{(ss')}(z) + \epsilon \,
    \widetilde m_{BB,jj'}^{(ss')}(z)\right] B_{j'}^{(s')}(z)
  e^{-i (\beta_{j'}-\beta_j) z} + 
  O(\epsilon^3)\, . \label{eq:C2}
\end{align}
Here we let 
\begin{align*}
\widetilde m_{AA,jj'}^{(ss')}(z)&= m_{AA,jj'}^{(ss')}(z)+
m_{AA,jj'}^{(ss')e}(z)\, , \\
\widetilde m_{AB,jj'}^{(ss')}(z)&= m_{AB,jj'}^{(ss')}(z)+
m_{AB,jj'}^{(ss')e}(z)\, , \\
\widetilde m_{BA,jj'}^{(ss')}(z)&= m_{BA,jj'}^{(ss')}(z)+
m_{BA,jj'}^{(ss')e}(z)\, ,\\
\widetilde m_{BB,jj'}^{(ss')}(z)&= m_{BB,jj'}^{(ss')}(z)+
m_{BB,jj'}^{(ss')e}(z)\, ,
\end{align*}
with the second terms due to the interaction via the evanescent modes.
They are written explicitly in appendix \ref{ap:2coeff}.

\subsection{Energy conservation}
\label{sect:encons}
Substituting equations (\ref{eq:RD1}-\ref{eq:RD2}) in the energy flux
(\ref{eq:P5}) and using Lemma \ref{lem.2} we obtain that 
\begin{align}
  \sum_{j=1}^N \sum_{s=1}^{\mM_j} \left[
    \left|A_j^{(s)}(z)\right|^2 - \left|B_j^{(s)}(z)\right|^2 \right] 
  =   
  ~ \sum_{j=1}^N \sum_{s=1}^{\mM_j} \left[
    \left|A_{j,o}^{(s)}\right|^2 - 
    \left|B_j^{(s)}(0+)\right|^2 \right] + O(\epsilon) \,.
\label{eq:P5r}
\end{align}
The evanescent modes do not contribute to leading order in the energy
flux, but they appear in the remainder $O(\epsilon)$. Consequently,
the energy carried by the propagating modes is not exactly 
conserved for $\epsilon >0$. However, energy conservation holds in the 
limit $\epsilon \to 0$, where the remainder becomes negligible. 
\section{The diffusion limit}
\label{sect:diff}
In this section we describe the limit $\epsilon \to 0$ of the
propagating mode amplitudes satisfying the system of equations
(\ref{eq:C1}-\ref{eq:C2}) for $z >0$, with initial conditions
(\ref{eq:RD3}) at $z = 0$ and end conditions (\ref{eq:RD4}) at $z =
z_{\rm max}$. 

Since $\partial_z A_j^{(s)}(z)$ and $\partial_z B_j^{(s)}(z)$ are
order $\epsilon$, it is clear that the fluctuations have no effect
over ranges $z$ that are of order one, i.e., similar to the
wavelength. If we let $z$ be of order $\epsilon^{-1}$, the right
hand-side in (\ref{eq:C1}-\ref{eq:C2}) becomes order one, but still
there is no net scattering effect in the limit $\epsilon \to 0$.  The
fluctuations average out because the expectation of the leading
coupling coefficients $M_{AA,jj'}^{(ss')}(z/\epsilon), \ldots,
M_{AA,jj'}^{(ss')}(z/\epsilon)$ is zero. See for example
\cite{khas1966limit,PK-75} and \cite[Chapter 6]{LAY_BOOK}.  We need
longer ranges, of order $\epsilon^{-2}$, to see cumulative scattering
effects, so we let $z = Z/\epsilon^2$ with $Z$ of order one, and
rename the mode amplitudes in this scaling as
\begin{equation}
  A_j^{\epsilon (s)}(Z) :=
  A_j^{(s)}\hspace{-0.05in}\left({Z}/{\epsilon^2}\right)\, , \qquad
  B_j^{\epsilon (s)}(Z) :=
  B_j^{(s)}\hspace{-0.05in}\left({Z}/{\epsilon^2}\right)\, ,
  \label{eq:DL1}
\end{equation}
for $j = 1, \ldots, N,$ and $1 \le s \le \mM_j$. Their $\epsilon \to
0$ limit is obtained with the diffusion approximation theorem
\cite{PK-74}.  The result is simpler under the forward scattering
approximation described in section \ref{sect:FSC}, which is valid when
the covariance (\ref{eq:f.18}) of $\nu(\vx)$ is
smooth in $z$. The limit of the forward going mode amplitudes is
described in detail in section \ref{sect:LIM}. This is the main result
of the section. We use it to analyze the long range cumulative
scattering effects of the random fluctuations in section
\ref{sect:transp}.

\subsection{The forward scattering approximation}
\label{sect:FSC}
The diffusion approximation theorem applies to initial value problems,
so we transform our system to such a problem using the random
propagator matrix ${\bf P}^\epsilon(Z)$. It equals the identity ${\bf
  I}$ at $Z = 0$ and relates the mode amplitudes at $Z >0$ to those at
$Z = 0$ as
\begin{equation}
\left( \begin{matrix}
{\bf A}^\epsilon(Z)\\
{\bf B}^\epsilon(Z)\end{matrix}\right) = 
{\bf P}^\epsilon(Z)\left(\begin{matrix}
{\bf A}_o \\
{\bf B}^\epsilon(0)\end{matrix}\right) \, .
\label{eq:DL2}
\end{equation}
Here ${\bf A}^\epsilon(Z)$ is the vector of components 
$A_j^{\epsilon (s)}(Z)$ for $j = 1, \ldots, N$, $1 \le s \le \mM_j$, and 
similar for ${\bf B}^\epsilon(Z)$. The backward going amplitudes are 
not known at $Z = 0$, but can be determined from the identity
\begin{equation}
  \left( \begin{matrix}
      {\bf A}^\epsilon(Z_{\rm max})\\
      {\bf 0}\end{matrix}\right) = 
  {\bf P}^\epsilon(Z_{\rm max})\left(\begin{matrix}
      {\bf A}_o \\
      {\bf B}^\epsilon(0)\end{matrix}\right) \, , 
  \qquad Z_{\rm max} = \epsilon^2 z_{\rm max}.
\label{eq:DL3}
\end{equation}

The diffusion approximation theorem \cite{PK-74} states that ${\bf
  P}^\epsilon(Z)$ converges in distribution as $\epsilon \to 0$ to a
matrix valued diffusion process ${\bf P}(Z)$.  That is to say, the
entries of ${\bf P}(Z)$ satisfy a system of stochastic differential
equations with initial condition ${\bf P}(0) = {\bf I}$.  We do not
need to write all the details of the limit for the analysis below. Let
us just note that it has the block structure
\[ {\bf P}(Z) = \left( \begin{matrix}
    {\bf P}_{AA}(Z) & {\bf P}_{AB}(Z)\\
    {\bf P}_{BA}(Z) & {\bf P}_{BB}(Z)
\end{matrix} \right),
\]
with entries determined by the $z$--Fourier transform $\hat
\cR_\nu(\bx,\bx',\beta)$ of the covariance (\ref{eq:f.18}), evaluated
at various values of the wavenumber $\beta$. Explicitly, for the
entries in the block ${\bf P}_{AA}(Z)$ that couple the $j$ and $j'$
forward going amplitudes, $\beta = \beta_j - \beta_{j'}$, because the
phases in the first sum in (\ref{eq:C1}) are proportional to $\beta_j
- \beta_{j'}$. Similarly, for the entries in the blocks ${\bf
  P}_{AB}(Z)$ and ${\bf P}_{BA}(Z)$ that couple the $j$ and $j'$
forward and backward going amplitudes, $\beta = \beta_j + \beta_{j'}$,
because the phases in the second sum in (\ref{eq:C1}) and the first
sum in (\ref{eq:C2}) are proportional to $\beta_j + \beta_{j'}$.
Thus, if the covariance is smooth enough in $z$, so that 
\begin{equation}
\left|\hat  \cR_\nu(\bx,\bx',\beta_j + \beta_{j'})\right| \ll 1\, , 
\qquad \forall ~ j, j' = 1, \ldots, N\, ,
\end{equation}
the forward and backward mode amplitudes are essentially uncoupled.
Considering that ${\bf B}^\epsilon(Z)$ vanishes at $Z_{\rm max}$, we
conclude that the backward going mode amplitudes are negligible, and
thus justify the forward scattering approximation.

\subsection{The coupled mode diffusion process}
\label{sect:LIM}
Equations (\ref{eq:C1}) simplify as 
\begin{align}
  \partial_Z {\bf A}^{\epsilon}(Z) &\approx \frac{1}{\epsilon} {\bf G}
  \left[{\bf
      A}^{\epsilon}(Z),\nu\left(\cdot,\frac{Z}{\epsilon^2}\right),
    \frac{Z}{\epsilon^2}\right] + {\bf g}\left[{\bf
      A}^{\epsilon}(Z),\nu\left(\cdot,\frac{Z}{\epsilon^2}\right),
    \frac{Z}{\epsilon^2}\right]\, , \quad Z >0\, ,
\label{eq:DL4}
\end{align}
with initial conditions $ {\bf A}^\epsilon(0) = {\bf A}_o , $ and
right hand-side
\begin{align}
{\bf G}
\left[{\bf A}^{\epsilon}(Z),\nu\left(\cdot,\frac{Z}{\epsilon^2}
  \right),\frac{Z}{\epsilon^2}\right] &= {\bf
  M}\left[\nu\left(\cdot,\frac{Z}{\epsilon^2}\right),
  \frac{Z}{\epsilon^2}\right]{\bf A}^{\epsilon}(Z)\, , \label{eq:DL5}\\
{\bf g} \left[{\bf
    A}^{\epsilon}(Z),\nu\left(\cdot,\frac{Z}{\epsilon^2}\right),
  \frac{Z}{\epsilon^2}\right] &= \widetilde {\bf
  m}\left[\nu\left(\cdot,\frac{Z}{\epsilon^2}\right),
  \frac{Z}{\epsilon^2}\right]{\bf A}^{\epsilon}(Z)\,.
\label{eq:DL6}
\end{align}
Here we let ${\bf M}$ be the matrix with entries
$M_{AA,jj'}^{(ss')}(Z/\epsilon^2)
e^{i(\beta_j-\beta_{j'})Z/\epsilon^2}$, and emphasize in the notation
that it depends on $Z/\epsilon^2$ via the fluctuations $\nu$ and the
phase. A similar notation applies to matrix $\widetilde {\bf m}$. The
approximation sign in (\ref{eq:DL4}) reminds us that we made the
forward scattering approximation and neglected the $O(\epsilon)$
remainder that plays no role in the limit $\epsilon \to 0$.

To apply the diffusion approximation theorem stated and proved in
\cite{PK-74} to (\ref{eq:DL4}), we rewrite the system in real form,
for the concatenated vector $({\bf A}_R^\epsilon,{\bf A}_I^\epsilon)$
of real and imaginary values of ${\bf A}^\epsilon$. We also recall
from complex differentiation that for any vector ${\bf a} = {\bf a}_R
+ i {\bf a}_I$, we have
\[
\nabla_{{\bf a}_R} = \nabla_{\bf a} + \nabla_{\overline{\bf a}}, \quad
\nabla_{{\bf a}_I} = i \left(\nabla_{\bf a} - \nabla_{\overline{\bf
      a}}\right)\, ,
\]
where the bar denotes complex conjugation. Therefore, if we let ${\bf
  G}_R$ and ${\bf G}_I$ be the real and imaginary parts of ${\bf G}$,
we can write 
\[ ({\bf G}_R,{\bf G}_I) \cdot (\nabla_{{\bf a}_R},\nabla_{{\bf a}_I})
= {\bf G} \cdot \nabla_{{\bf a}} + \overline{{\bf G}}\cdot
\nabla_{{\overline{\bf a}}}\, .
\]
With these observations we state in the next lemma the limit $\epsilon
\to 0$ given by the diffusion approximation theorem.

\begin{lemma}
\label{lem:limit}
The mode amplitudes $\{A_j^{\epsilon(s)}(Z)\}_{j=1, \ldots, N, 1 \le s
  \le \mM_j}$ converge in distribution as $\epsilon \to 0$ to a
diffusion Markov process denoted by $\{A_j^{(s)}(Z)\}_{j=1, \ldots, N,
  1 \le s \le \mM_j}$, with generator ${\mathcal G}$. It is defined on
smooth enough, scalar valued test functions $\varphi({\bf
  A},\overline{{\bf A}})$ as follows
\begin{align*} {\mathcal G} \varphi({\bf A},\overline{{\bf A}}) =&
  \lim_{T \to \infty} \int_0^T \frac{d \tau}{T} \int_0^\infty dz \,
  \EE\left\{ \left[{\bf G}\left[ {\bf A}, \nu(\cdot, 0), \tau\right]
      \cdot \nabla_{{\bf A}} + \overline{{\bf G}}\left[ {\bf A},
        \nu(\cdot, 0), \tau\right] \cdot \nabla_{\overline{\bf
          A}}\right] \times \right. \nonumber \\
  & \left. \hspace{0.38in}\left[{\bf G}\left[ {\bf A}, \nu(\cdot, z),
        \tau+z\right] \cdot \nabla_{{\bf A}} + \overline{{\bf
          G}}\left[ {\bf A}, \nu(\cdot, z), \tau+z\right] \cdot
      \nabla_{\overline{\bf A}}\right] \right\} \varphi({\bf
    A},\overline{\bf A}) + \nonumber \\
  & \lim_{T \to \infty} \int_0^T \frac{d \tau}{T} \, \EE\left\{
    \left[{\bf g}\left[ {\bf A}, \nu(\cdot, 0), \tau\right] \cdot
      \nabla_{{\bf A}} + \overline{{\bf g}}\left[ {\bf A}, \nu(\cdot,
        0), \tau \right] \right]\cdot \nabla_{\overline{\bf A}}
  \right\} \varphi({\bf A},\overline{\bf A}).
\end{align*}
\end{lemma}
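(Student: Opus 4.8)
The plan is to recognize the system (\ref{eq:DL4}) as an instance of the diffusion approximation setup of \cite{PK-74} and to verify that its hypotheses hold, after which the stated form of $\mathcal{G}$ is read off directly. The equation is already in the canonical form to which that theorem applies: its right-hand side is a zero-mean term of size $\epsilon^{-1}$, namely ${\bf G}[{\bf A},\nu(\cdot,Z/\epsilon^2),Z/\epsilon^2]$, plus an $O(1)$ term ${\bf g}[{\bf A},\nu(\cdot,Z/\epsilon^2),Z/\epsilon^2]$, both evaluated along the fast range variable $Z/\epsilon^2$. I would first record the three structural facts the theorem needs. First, the leading matrix ${\bf M}$ has zero mean, $\EE\{{\bf M}[\nu(\cdot,\tau),\tau]\}=0$: by the discussion following (\ref{eq:RD7}) the entries $M_{AA,jj'}^{(ss')}$ are linear functionals of $\nu$, and $\EE[\nu]=0$. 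Second, the coefficients are almost surely bounded with bounded moments, since the mode profiles $\bphi_j^{(s)}$ are smooth and bounded on $\Omega$ while $\nu$ has almost surely bounded derivatives, so ${\bf G}$ and ${\bf g}$ are bounded linear, respectively quadratic, functionals of $\nu$ and its cross-range derivatives. Third, the mixing property of $\nu$ transfers to the coupling coefficients, because these are local (pointwise-in-$z$) functionals of $\nu(\cdot,z)$; this is exactly the mixing hypothesis of \cite[Section 4.6.2]{kushner} required in \cite{PK-74}, and it guarantees integrability of the two-point correlations in the lag variable.

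Next I would handle the complex-valued state. Using the differentiation identities recorded just before the lemma, I rewrite the system for the concatenated real vector $({\bf A}_R^\epsilon,{\bf A}_I^\epsilon)$ and apply the theorem to that real system. The theorem returns the generator directly in the operator-product form, namely an iterated action of two first-order operators ${\bf G}\cdot\nabla$ integrated against the two-point statistics of the fast coefficient, plus the single $O(1)$ drift term $\EE[{\bf g}\cdot\nabla]$; because the generator is kept in this composed form, no separate diffusion-versus-drift bookkeeping is needed, as the noise-induced drift is already encoded in the action of the second operator on the coefficients of the first. Converting $({\bf A}_R,{\bf A}_I)$ back to $({\bf A},\overline{\bf A})$ via the stated identities turns $({\bf G}_R,{\bf G}_I)\cdot(\nabla_{{\bf A}_R},\nabla_{{\bf A}_I})$ into ${\bf G}\cdot\nabla_{{\bf A}}+\overline{{\bf G}}\cdot\nabla_{\overline{\bf A}}$, and likewise for ${\bf g}$, which reproduces exactly the two displayed contributions to $\mathcal{G}$.

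The main obstacle is the presence of \emph{two} fast scales in the coefficients: the stochastic field $\nu(\cdot,Z/\epsilon^2)$ and the deterministic rapidly oscillating phases $e^{i(\beta_j-\beta_{j'})Z/\epsilon^2}$. The generator must average over both, which is precisely why $\mathcal{G}$ pairs the decorrelation integral $\int_0^\infty dz$ in the lag of the $\nu$-correlations with the Cesàro phase average $\lim_{T\to\infty}T^{-1}\int_0^T d\tau$, which extracts the secular (non-oscillatory) part of the products of phases. Making this rigorous requires the version of the theorem in \cite{PK-74} that admits explicit almost-periodic dependence on the fast variable in addition to the stochastic dependence, and one must check that products of phase factors $e^{i(\beta_j-\beta_{j'}+\beta_l-\beta_m)\tau}$ average to zero under the $\tau$-integral unless the accumulated wavenumber vanishes. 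This non-resonance bookkeeping is clean here because the assumption $L_1/L_2\notin\mathbb{Q}$ forces the $\lambda_j$, hence the $\beta_j$, to be distinct (see (\ref{eq:P11})), so only the resonant index combinations survive the $\tau$-average and contribute to $\mathcal{G}$.
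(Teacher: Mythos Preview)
Your proposal is correct and follows exactly the paper's approach: the paper does not give a self-contained proof of this lemma but simply invokes the diffusion approximation theorem of \cite{PK-74} after rewriting the system in real form for $({\bf A}_R^\epsilon,{\bf A}_I^\epsilon)$ and recording the complex-differentiation identities that translate the generator back into the $({\bf A},\overline{\bf A})$ variables. Your write-up supplies more explicit verification of the hypotheses (zero mean of ${\bf M}$, boundedness, mixing, and the handling of the deterministic oscillatory phases via the Ces\`aro average) than the paper itself does, but the route is the same.
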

\subsection{Conservation of energy}
\label{sect:DLCons}
Recall the conservation relation (\ref{eq:P5r}), and rewrite it using
the forward scattering approximation as
\begin{equation}
  \sum_{j=1}^N \sum_{s=1}^{\mM_j} \left| A_j^{\epsilon(s)}(Z)
  \right|^2 = \sum_{j=1}^N \sum_{s=1}^{\mM_j} \left| A_{j,o} \right|^2
  + \mathcal{R}(\epsilon)\, ,
\label{eq:DLC1}
\end{equation}
with negligible remainder $\mathcal{R}(\epsilon)$ as $\epsilon \to 0$.
The diffusion limit gives that 
\begin{equation}
 \sum_{j=1}^N \sum_{s=1}^{\mM_j} \left| A_j^{\epsilon(s)}(Z)
  \right|^2  \stackrel{\epsilon \to 0}{\longrightarrow} 
\sum_{j=1}^N \sum_{s=1}^{\mM_j} \left| A_j^{(s)}(Z)
  \right|^2 = \sum_{j=1}^N \sum_{s=1}^{\mM_j} \left| A_{j,o} \right|^2\, ,
\label{eq:DLC2}
\end{equation}
where the convergence is in probability, because the limit is
deterministic.

\section{Cumulative scattering effects}
\label{sect:transp}
We use the limit stated in Lemma \ref{lem:limit} to derive the main
result of the paper: a detailed characterization of cumulative
scattering effects of the random fluctuations of the electric
permeability.

We begin in sections \ref{sect:moments1} and \ref{sect:moments2} with
the calculation of the first and second moments of the mode
amplitudes.  They determine the coherent part of the waves and the
intensity of their fluctuations.  Then, we start from the energy
conservation relation (\ref{eq:DLC2}) and derive in section
\ref{sect:CE} an important matrix identity, needed in sections
\ref{sect:LC} and \ref{sect:TR} to describe the loss of coherence of
the waves and the energy exchange between the modes. We also prove in
section \ref{sect:TR} that as the range grows, the waves scatter so
much that they enter the equipartition regime, where they forget all
the information about the source. We illustrate the results of the
analysis with numerical simulations.
\subsection{The mean  mode amplitudes}
\label{sect:moments1}
Let us denote by 
\begin{equation}
\left< {\bf A} \right>\hspace{-0.03in}(Z) = \EE\left\{{\bf
  A}(Z)\right\}\,,
\label{eq:E1}
\end{equation}
the expectation of the mode amplitudes with respect to their limit
distribution. Using the generator $\mathcal{G}$ in Lemma
\ref{lem:limit} and Kolmogorov's equation \cite[chapter
  8]{oksendal2003stochastic}, we obtain 
\begin{equation}
  \partial_Z\left< {\bf A}_j \right> \hspace{-0.03in} (Z) = {\bf Q}_j
  \left< {\bf A}_j\right> \hspace{-0.03in} (Z)\, , \quad Z>0\, ,
\label{eq:E2}
\end{equation}
with initial condition 
\begin{equation}
\left< {\bf A}_j\right> \hspace{-0.03in} (0) = {\bf A}_{j,o}\, .
\label{eq:E3}
\end{equation}
This is a block diagonal system of differential equations, for vectors
${\bf A}_j$ of components ${ A}_j^{(s)}$. There are $N$ blocks
${\bf Q}_j \in \mathbb{C}^{\mM_j \times \mM_j}$, indexed by $j = 1,
\ldots, N$. Each one of them is constant, with entries given by
\begin{equation} {\bf Q}_j^{(ss')} = \sum_{l=1}^N
  \sum_{q=1}^{\mM_l}\int_0^\infty d z\, \EE\left\{M_{AA,jl}^{(sq)}(z) 
M_{AA,lj}^{(qs')}(0) \right] e^{i (\beta_l-\beta_j)z} + 
 E\left\{m_{j}^{(ss')}(0)\right\}\, ,
\label{eq:E4}
\end{equation}
where we introduced the simplified notation 
\begin{equation}
m_{j}(z) := \widetilde m_{AA,jj}(z).
\label{eq:E4p}
\end{equation}
We give a few details of the calculation of ${\bf Q}_j$ in appendix
\ref{ap:explicit}, and use the result in the numerical simulations of
sections \ref{sect:LC} and \ref{sect:TR}. Here it suffices to point
out that the last term in (\ref{eq:E4}) is purely imaginary, so we can
write it as
\begin{equation}
E\left\{m_{j}(0)\right\} = i {\boldsymbol \kappa}_j\, ,
\label{eq:E6}
\end{equation}
with real matrix ${\boldsymbol \kappa}_j \in \mathbb{R}^{\mM_j \times
  \mM_j}$. This is the only term of ${\bf Q} = {\rm diag}\left({\bf
  Q}_1, \ldots, {\bf Q}_N\right)$ that is affected by the coupling of
the propagating modes with the evanescent ones.

The mean amplitudes are decoupled for different indexes $j$ of the
modes.  However, for each $j$ we have $\mM_j$ coupled transverse
electric and magnetic mode amplitudes, as described by the matrix
exponential in
\begin{equation}
  \left< {\bf A}_j\right> \hspace{-0.03in} (Z) = e^{{\bf Q}_jZ} {\bf A}_{j,o}\, , 
\qquad j = 1, \ldots, N.
\label{eq:E5}
\end{equation}
We expect from physical arguments that the right hand-side in
(\ref{eq:E5}) decays with $Z$, on some mode dependent range scales
${\mathcal S}_j^{(s)}$, the scattering mean free paths. The coherent
part of the amplitudes, the entries in $\left<{\bf
  A}_j\right> \hspace{-0.03in} (Z)$, become negligible beyond these
scales, and all the energy lies in their random fluctuations.

It is difficult to see the loss of coherence directly from
(\ref{eq:E4}). The expression of ${\bf Q}_j$ in appendix
\ref{ap:explicit} is useful for numerical calculations, but it is too
complicated to prove that the spectrum of ${\bf Q}_j$ lies in the left
half of the complex plane. However, the result follows from the energy
conservation relation (\ref{eq:DLC2}), as explained in section
\ref{sect:LC}.

\subsection{The mean powers}
\label{sect:moments2}
We denote the mean power matrices of the amplitudes of the modes
with wavenumber $\beta_j$ by 
\begin{equation} {\bf P}_j(Z) = \left(P_{j}^{ss'}(Z)\right)_{1\le
    s,s'\le \mM_j} := \EE \left\{ {\bf A}_j(Z) \otimes \overline{\bf
      A}_j(Z)\right\}.
\label{eq:MP1}
\end{equation}
They are Hermitian, positive definite matrices, satisfying a coupled
system of differential equations derived from the generator in Lemma
\ref{lem:limit} and Kolmogorov's equation. Explicitly, we have
\begin{align}
  \partial_Z {\bf P}_j(Z) = & {\bf Q}_j{\bf P}_j(Z) +
  {\bf P}_j(Z) {\bf Q}_j^\star + \nonumber \\
  & \sum_{l=1}^N \int_{-\infty}^ \infty dz \EE\left\{ M_{AA,jl}(z)
    {\bf P}_l(Z) M_{AA,jl}^\star(0)\right\} e^{i (\beta_l-\beta_j)z}\,
  ,
\label{eq:MP2}
\end{align}
for $Z >0$, with initial condition 
\begin{equation}
{\bf P}_j(0) = {\bf A}_{j,o} \otimes \overline{\bf A}_{j,o}.
\label{eq:MP3}
\end{equation}
The matrix ${\bf Q}_j$ is defined in (\ref{eq:E4}), and the star
superscript denotes complex conjugate and transpose.

Equations (\ref{eq:MP2}) describe the exchange of energy between the
modes and the loss of polarization of the waves. Say for example that
the source emits a single transverse electric mode indexed by $j$
\[ {\bf P}_{l}(0) = 
\de_{lj}\left(\begin{matrix}
    |A_{j,o}^{(1)}|^2 & 0 \\
    0 & 0
\end{matrix}\right), \quad \forall~ l = 1, \ldots, N\,.
\] 
Cumulative scattering distributes the energy to all propagating modes
for $Z>0$, as given by (\ref{eq:MP2}), and the wave loses its
initial polarization.
\subsection{Conservation of energy identity}
\label{sect:CE}
The conservation of energy relation (\ref{eq:DLC2}) states that 
the mean power matrices satisfy 
\begin{equation}
  \sum_{j=1}^N {\rm trace}[P_j(Z)] = \sum_{j=1}^N {\rm trace}[P_j(0)]
= \sum_{j=1}^N \sum_{s=1}^{\mM_j} \left| A_{j,o}^{(s)}\right|^2 .
\label{eq:MP4}
\end{equation}
Therefore, equations (\ref{eq:MP2}) and the properties of the
trace operator imply that 
\begin{equation}
  \sum_{j=1}^N {\rm trace}\hspace{-0.01in}\left[\left( {\bf Q}_j + 
    {\bf Q}_j^\star + {\bf C}_j \right) {\bf P}_j(Z) \right] = 0\,, 
\quad \forall ~ Z \ge 0\,,
 \label{eq:MP5}
\end{equation}
with Hermitian matrix ${\bf C}_j$ defined by
\begin{equation} 
{\bf C}_j = \sum_{l=1}^N \int_{-\infty}^ \infty dz\, 
  \EE\left\{ M_{AA,lj}^\star(z) M_{AA,lj}(0)\right\} e^{i
    (\beta_l-\beta_j)z}\,.
\label{eq:MP6}
\end{equation}
The terms in this sum are the power spectral densities of the
stationary, matrix valued processes $M_{AA,jl}(z)$, evaluated at the
wavenumber difference $\beta_j - \beta_l$. This implies that ${\bf
  C}_j$ is a positive definite matrix, as shown in appendix
\ref{ap:psd}.

The following lemma gives a matrix identity used in the next
sections to prove the loss of coherence of the waves and the
equipartition regime as $Z \to \infty$.
\begin{lemma}
\label{lem:MId}
The matrices ${\bf Q}_j$ and ${\bf C}_j$ defined by 
(\ref{eq:E4}) and (\ref{eq:MP6}) satisfy 
\begin{equation}
{\bf Q}_j + {\bf Q}_j^\star + {\bf C}_j = {\bf 0}\, , 
\qquad \forall ~ j = 1, \ldots, N.
\end{equation}
\end{lemma}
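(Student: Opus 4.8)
The plan is to deduce the pointwise (in $j$) vanishing of $\mathbf{H}_j := \mathbf{Q}_j + \mathbf{Q}_j^\star + \mathbf{C}_j$ from the single scalar identity (\ref{eq:MP5}), by exploiting the freedom we have in choosing the source, and hence the initial data $\{\mathbf{P}_j(0)\}_{j=1}^N$. Two structural ingredients drive the argument. First, each $\mathbf{H}_j$ is Hermitian: $\mathbf{Q}_j + \mathbf{Q}_j^\star$ is Hermitian by construction, and $\mathbf{C}_j$ is Hermitian by (\ref{eq:MP6}) (in fact positive definite, see appendix \ref{ap:psd}). Second, the derivation of (\ref{eq:MP5}) from (\ref{eq:MP2}) and the conservation law (\ref{eq:MP4}) uses only the trace algebra and $\partial_Z \sum_j {\rm trace}[\mathbf{P}_j] = 0$, never any specific feature of the source; hence (\ref{eq:MP5}) holds for the solution of (\ref{eq:MP2}) issued from an arbitrary admissible initial condition.

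First I would specialize to a single excited mode. Fix an index $j_0 \in \{1,\dots,N\}$ and a vector $\mathbf{a} \in \mathbb{C}^{\mM_{j_0}}$. Choosing a source with $\bJ$ in the span of the profiles $\{\bphi_{j_0}^{(s)}\}_{1 \le s \le \mM_{j_0}}$ and $\rJ_z = 0$, the formulas (\ref{eq:iniA}) together with the orthogonality relations (\ref{eq:orthog}) give $\mathbf{A}_{j,o} = \mathbf{0}$ for all $j \ne j_0$, while $\mathbf{A}_{j_0,o}$ sweeps out all of $\mathbb{C}^{\mM_{j_0}}$ as the source coefficients vary, since the normalization prefactors in (\ref{eq:iniA}) are nonzero. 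In particular we may arrange $\mathbf{A}_{j_0,o} = \mathbf{a}$. By (\ref{eq:MP3}) the initial power matrices are then $\mathbf{P}_j(0) = \mathbf{0}$ for $j \ne j_0$ and $\mathbf{P}_{j_0}(0) = \mathbf{a} \otimes \overline{\mathbf{a}}$.

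Next I would evaluate (\ref{eq:MP5}) at $Z = 0$. All terms with $j \ne j_0$ drop, and the outer-product trace identity gives
\[
0 = {\rm trace}\!\left[\mathbf{H}_{j_0}\,(\mathbf{a}\otimes\overline{\mathbf{a}})\right] = \overline{\mathbf{a}}^{\,T}\,\mathbf{H}_{j_0}\,\mathbf{a}.
\]
Since $\mathbf{a}$ is arbitrary, the form $\mathbf{a} \mapsto \overline{\mathbf{a}}^{\,T}\mathbf{H}_{j_0}\mathbf{a}$ vanishes identically, and over $\mathbb{C}$ this forces $\mathbf{H}_{j_0} = \mathbf{0}$ (polarize: replacing $\mathbf{a}$ by $\mathbf{a}+\mathbf{b}$ and then by $\mathbf{a}+i\mathbf{b}$ yields $\overline{\mathbf{b}}^{\,T}\mathbf{H}_{j_0}\mathbf{a} = 0$ for all $\mathbf{a},\mathbf{b}$). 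As $j_0$ is arbitrary, this establishes $\mathbf{Q}_j + \mathbf{Q}_j^\star + \mathbf{C}_j = \mathbf{0}$ for every $j = 1,\dots,N$.

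The only genuinely delicate point is the second ingredient: one must be sure that (\ref{eq:MP5}) holds for the single-mode initial data above, and not merely for one fixed physical source. I expect this to be the main thing to justify carefully, and I would do so by retracing the derivation of (\ref{eq:MP5}) — take the trace of (\ref{eq:MP2}), sum over $j$, use cyclic invariance of the trace and the reindexing $j \leftrightarrow l$ together with the stationarity of the processes $M_{AA,jl}$ to recognize the quadratic term as $\sum_j {\rm trace}[\mathbf{C}_j \mathbf{P}_j]$, and then invoke $\partial_Z \sum_j {\rm trace}[\mathbf{P}_j] = 0$ — while observing that none of these steps depends on how $\{\mathbf{P}_j(0)\}$ was produced. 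Everything else, namely the outer-product trace identity and the complex polarization step, is routine.
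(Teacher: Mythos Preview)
Your proof is correct and follows essentially the same route as the paper's: evaluate (\ref{eq:MP5}) at $Z=0$ with initial data concentrated at a single mode index $j_0$, so that the identity reduces to ${\rm trace}[\mathbf{H}_{j_0}(\mathbf{a}\otimes\overline{\mathbf{a}})]=0$ for arbitrary $\mathbf{a}$, and then use that rank-one Hermitian matrices span the space of all Hermitian matrices. The paper phrases this last step by exhibiting an explicit basis of $\mathfrak{X}_j$ consisting of four rank-one matrices, whereas you invoke polarization; these are the same argument, and your added remark that (\ref{eq:MP5}) must hold for arbitrary admissible initial data is exactly the point the paper is tacitly using.
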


\vspace{0.1in} 
\noindent \emph{Proof.}  The result is a consequence of the fact that
(\ref{eq:MP5}) holds for any correlation matrices ${\bf P}_j(Z)$ and
all $Z \ge 0$.  Indeed, let $\mathfrak{X}_j$ be the $\mM_j^2$
dimensional vector space of $\mM_j \times \mM_j$ Hermitian matrices
with inner product
  \[
  \left(U,V\right)_{\mathfrak{X}_j} = {\rm trace}[U V^\star]\, , \quad
  \forall ~ U, V \in \mathfrak{X}_j.
  \]
  Let also ${\mathfrak X} = {\mathfrak X}_1 \times {\mathfrak X}_1
  \ldots \times {\mathfrak X}_N$ be the vector space defined by the
  product of the spaces $ {\mathfrak X}_j$, with inner product
\[
\left( \textbf{U},\textbf{V} \right)_{\mathfrak X} = \sum_{j=1}^N \left( U_j,V_j
\right)_{\mathfrak X_j}, \quad \forall~ \textbf{U} = (U_1, \ldots, U_N), ~ ~ \textbf{V}
= (V_1, \ldots, V_N), ~ ~ U_j, V_j \in {\mathfrak X}_j\, .
\]
Equation (\ref{eq:MP5}) evaluated at $Z = 0$ becomes 
\[
\left( {\bf Q} + {\bf Q}^\star + {\bf C},{\bf P}_o \right)_{\mathfrak
  X} = 0\, , \qquad \forall ~ {\bf P}(0) = {\bf P}_o \in {\mathfrak
  X}.
\]
We can take in particular the initial conditions
\[ {\bf P}_{o} = ({\bf 0}, \ldots, {\bf 0}, {\bf P}_{j,o}, {\bf 0},
\ldots, {\bf 0})\,, \qquad  \forall ~{\bf P}_{j,o} = {\bf
  A}_{j,o} \otimes {\bf A}_{j,o}^\star \in {\mathfrak X}_j,
\] 
and conclude that
\[
\left( {\bf Q}_j + {\bf Q}_j^\star + {\bf C}_j,{\bf P}_{j,o}
\right)_{\mathfrak X_j} = 0.
\]
The statement of the lemma follows from this equation and the
observation that matrices like ${\bf P}_{j,o}$ span ${\mathfrak X}_j$.
For example,
\begin{align*}
\left(\begin{matrix} 1\\ 0\end{matrix}\right)(1,0) =  
\left(
\begin{matrix}
  1 & 0\\
  0 & 0
\end{matrix}
\right), \quad
\left(\begin{matrix} 0\\ 1\end{matrix}\right)(0,1) =\left(
\begin{matrix}
  0 & 0\\
  0 & 1
\end{matrix}
\right), \quad 
\left(\begin{matrix} 1\\ 1\end{matrix}\right)(1,1) =\left(
\begin{matrix}
  1& 1\\
  1 & 1
\end{matrix}
\right),\\  \left(\begin{matrix} i\\
    1\end{matrix}\right)(-i,1) = \left(
\begin{matrix}
  1 & i\\
  -i & 1
\end{matrix}
\right)\, ,
\end{align*}
is a basis of ${\mathfrak X}_j$. $ ~ \Box$

\subsection{The loss of coherence}
\label{sect:LC}
Lemma \ref{lem:MId} and equation (\ref{eq:E2}) give that 
\begin{align*}
\partial_Z \|\left< {\bf A}_j \right> \hspace{-0.03in} (Z)\|^2 = -
\left< {\bf A}_j \right>^\star \hspace{-0.03in} (Z) {\bf C}_j \left<
       {\bf A}_j \right> \hspace{-0.03in} (Z), \quad Z > 0, \qquad 
\|\left< {\bf A}_j \right> \hspace{-0.03in} (0)\|^2 = \|{\bf A}_{j,o}\|^2,
\end{align*}
where $\|\cdot\|$ is the Euclidian norm, and we recall that ${\bf
  C}_j$ is Hermitian, positive definite.  The result stated in the
next theorem follows from Gronwall's lemma:
\begin{theorem}
\label{thm:1} Let $\mu_{j,q}>0$ be the eigenvalues of ${\bf C}_j$ in increasing order, 
for all $ j = 1, \ldots, N$ and $1 \le q \le \mM_j$. We have that
\begin{equation}
e^{- \mu_{j,2} Z} \|{\bf A}_{j,o}\|^2 \le  \|\left< {\bf A}_j
\right> \hspace{-0.03in} (Z)\|^2 \le  e^{- \mu_{j,1} Z} \|{\bf A}_{j,o}\|^2, \qquad 
{\rm if} ~ \mM_j = 2,
\label{eq:LC}
\end{equation}
and 
\begin{equation}
\|\left< {\bf A}_j
\right> \hspace{-0.03in} (Z)\|^2 =  e^{- \mu_{j,1} Z} \|{\bf A}_{j,o}\|^2, \qquad 
{\rm if} ~ \mM_j = 1.
\label{eq:LCp}
\end{equation}
Thus, the mean amplitudes decay exponentially with $Z$, on mode
dependent range scales (scaled scattering mean free paths)
\begin{equation}
\cS_j = {1}/{\mu_{j,1}}\, .
\label{eq:SCMP}
\end{equation}
\end{theorem}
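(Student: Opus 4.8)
The plan is to derive a closed differential inequality (or equality) for the squared norm $\|\langle {\bf A}_j\rangle(Z)\|^2$ and then invoke Gronwall's lemma together with the spectral bounds on ${\bf C}_j$. The starting point is already supplied in the text immediately preceding the theorem: combining the mean-amplitude equation (\ref{eq:E2}), namely $\partial_Z\langle {\bf A}_j\rangle(Z) = {\bf Q}_j\langle {\bf A}_j\rangle(Z)$, with the matrix identity ${\bf Q}_j + {\bf Q}_j^\star + {\bf C}_j = {\bf 0}$ of Lemma \ref{lem:MId}, one obtains the energy-type relation
\begin{equation*}
\partial_Z \|\langle {\bf A}_j\rangle(Z)\|^2 = -\langle {\bf A}_j\rangle^\star(Z)\, {\bf C}_j\, \langle {\bf A}_j\rangle(Z), \qquad Z>0.
\end{equation*}
First I would verify this by direct computation: differentiate $\|\langle {\bf A}_j\rangle\|^2 = \langle {\bf A}_j\rangle^\star \langle {\bf A}_j\rangle$, substitute (\ref{eq:E2}) for each factor, collect the terms into $\langle {\bf A}_j\rangle^\star({\bf Q}_j^\star + {\bf Q}_j)\langle {\bf A}_j\rangle$, and replace ${\bf Q}_j^\star + {\bf Q}_j$ by $-{\bf C}_j$. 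Since ${\bf C}_j$ is Hermitian positive definite, the right-hand side is strictly negative, which already shows monotone decay.

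**Applying the Rayleigh quotient and Gronwall.**

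Next I would exploit the spectral structure of ${\bf C}_j$. Because ${\bf C}_j$ is Hermitian positive definite with eigenvalues $0 < \mu_{j,1} \le \mu_{j,2}$ (for $\mM_j = 2$), the Rayleigh quotient bounds give
\begin{equation*}
\mu_{j,1}\, \|\langle {\bf A}_j\rangle(Z)\|^2 \le \langle {\bf A}_j\rangle^\star(Z)\, {\bf C}_j\, \langle {\bf A}_j\rangle(Z) \le \mu_{j,2}\, \|\langle {\bf A}_j\rangle(Z)\|^2.
\end{equation*}
Substituting these into the differential equation yields the two-sided differential inequality
\begin{equation*}
-\mu_{j,2}\, \|\langle {\bf A}_j\rangle(Z)\|^2 \le \partial_Z \|\langle {\bf A}_j\rangle(Z)\|^2 \le -\mu_{j,1}\, \|\langle {\bf A}_j\rangle(Z)\|^2.
\end{equation*}
Gronwall's lemma (or direct integration of $\partial_Z \log\|\langle {\bf A}_j\rangle\|^2$) then gives the exponential envelope (\ref{eq:LC}) with initial value $\|{\bf A}_{j,o}\|^2$, as claimed. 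For the case $\mM_j = 1$ the matrix ${\bf C}_j$ is the scalar $\mu_{j,1}$, so both bounds coincide and the inequality becomes the equality (\ref{eq:LCp}); this is just the scalar linear ODE solved explicitly.

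**The main obstacle and remaining details.**

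The conceptually hard part of the argument is not here but has already been discharged in Lemma \ref{lem:MId}: the identity ${\bf Q}_j + {\bf Q}_j^\star + {\bf C}_j = {\bf 0}$ is what converts the abstract generator expression for ${\bf Q}_j$ (which alone gives no manifest sign information) into a statement controlled by the positive definite ${\bf C}_j$. Granting that identity and the positive definiteness of ${\bf C}_j$ (proved in appendix \ref{ap:psd} via the power-spectral-density interpretation), the present proof is essentially a routine Gronwall estimate. The only subtlety worth a line of care is that $\langle {\bf A}_j\rangle(Z)$ is complex valued, so the quadratic form $\langle {\bf A}_j\rangle^\star {\bf C}_j \langle {\bf A}_j\rangle$ is real precisely because ${\bf C}_j$ is Hermitian; this guarantees $\|\langle {\bf A}_j\rangle\|^2$ has a real derivative and the scalar Gronwall argument applies verbatim. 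Finally, the identification $\cS_j = 1/\mu_{j,1}$ in (\ref{eq:SCMP}) reads off the slowest decay rate, i.e.\ the upper envelope in (\ref{eq:LC}), as the scattering mean free path.
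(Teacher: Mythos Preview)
Your proof is correct and follows exactly the paper's approach: derive $\partial_Z\|\langle{\bf A}_j\rangle\|^2 = -\langle{\bf A}_j\rangle^\star {\bf C}_j \langle{\bf A}_j\rangle$ from (\ref{eq:E2}) and Lemma~\ref{lem:MId}, then apply the eigenvalue bounds on the Hermitian positive definite ${\bf C}_j$ together with Gronwall's lemma. Your write-up is in fact more detailed than the paper's, which states the differential identity and then simply invokes Gronwall.
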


\begin{figure}[!t]
\begin{minipage}{1\textwidth}
\centering
\includegraphics[width=0.49\textwidth]{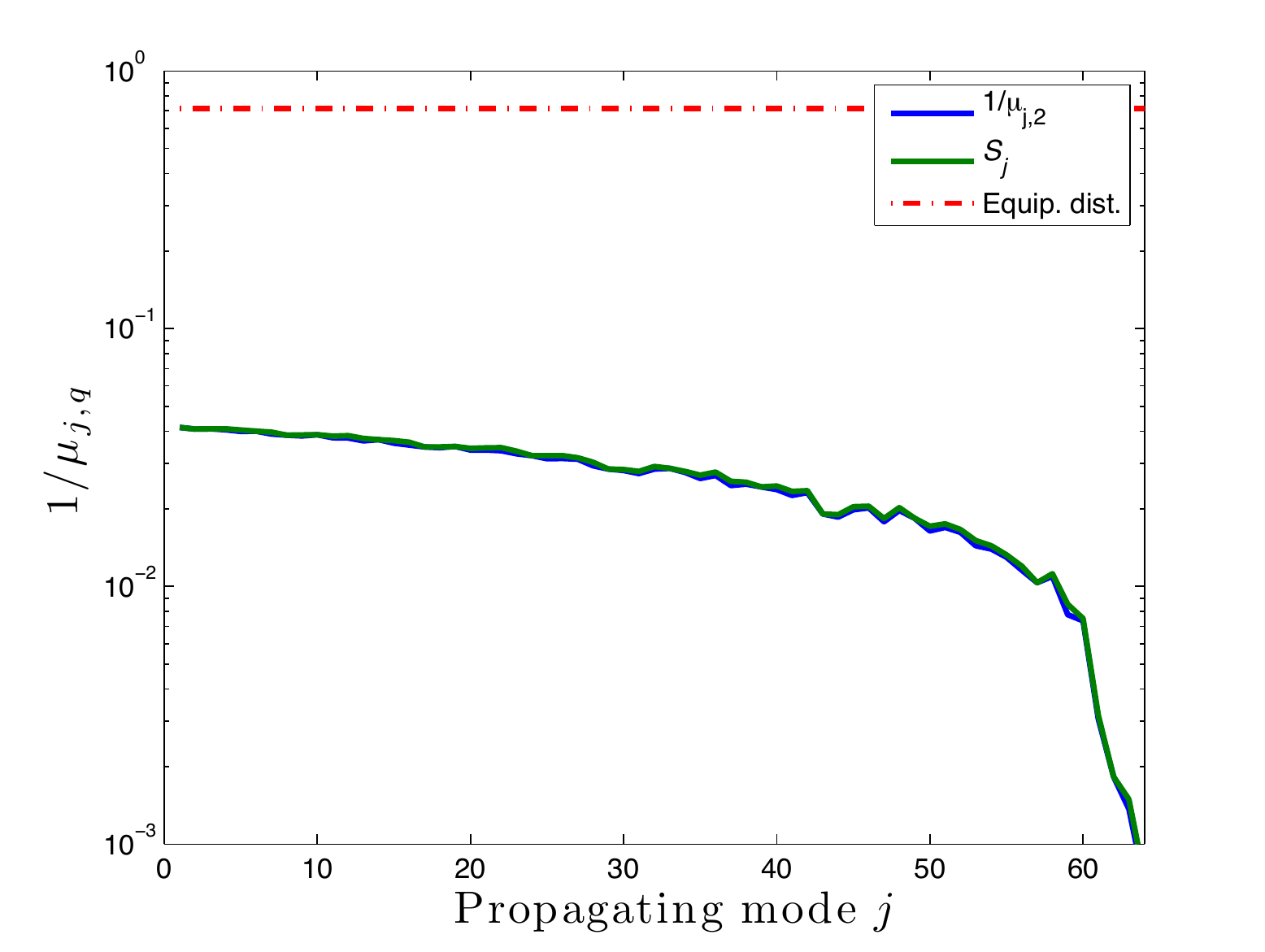}
\includegraphics[width=0.49\textwidth]{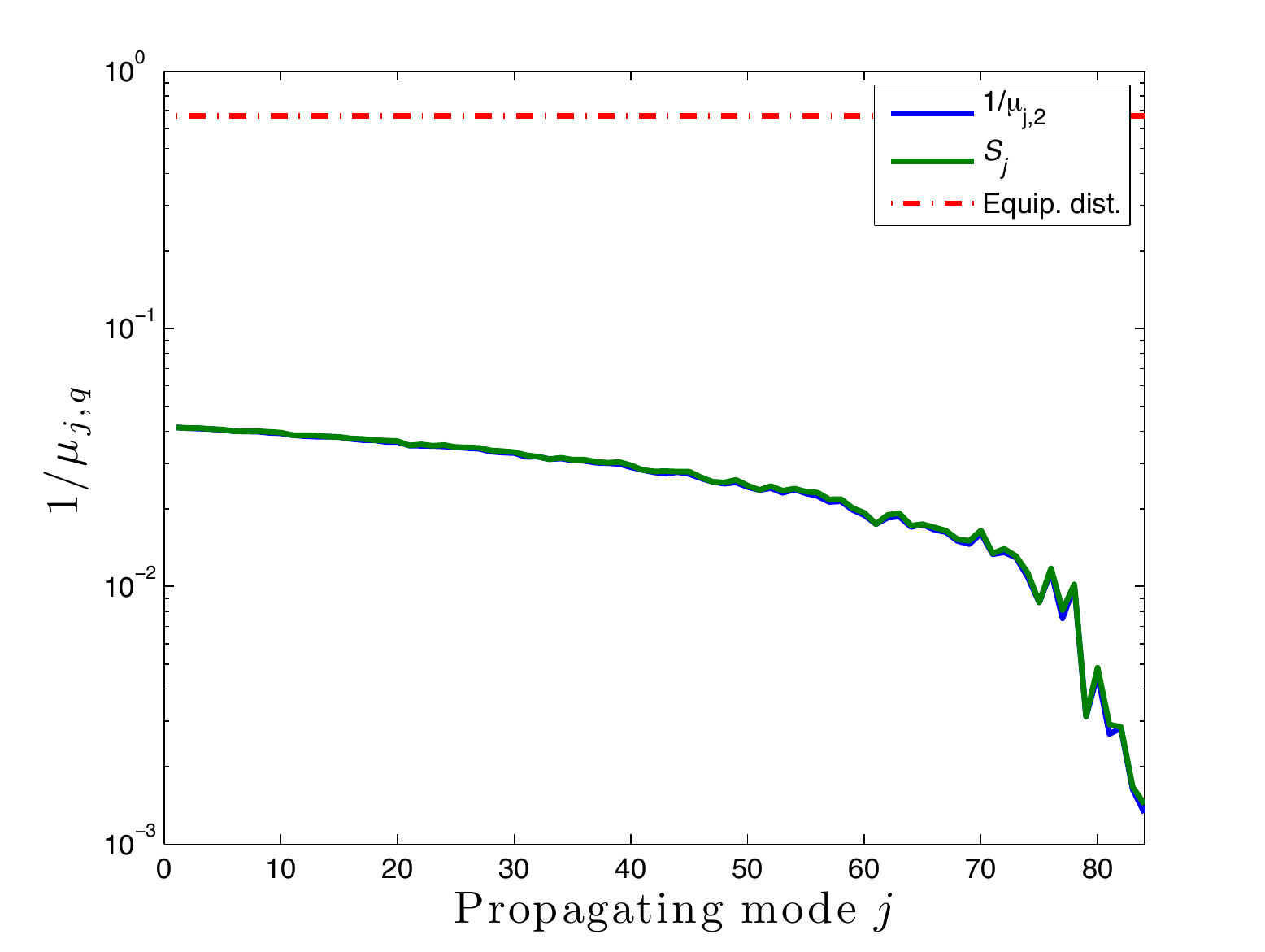}
\end{minipage}
\vspace{-0.1in}
\caption{We plot in green $\cS_j$, the reciprocal of the minimum
  eigenvalue of ${\bf C}_j$, and in blue the reciprocal of the maximum
  eigenvalue. The equipartition distance is shown in red.  We show
  results for two waveguides (from left to right) (1) $L_{1}=3.03$ and
  $L_{2} = 5.84$ giving $N = 64$, and (2) $L_{1}=4.08$ and
  $L_{2}=5.77$ giving $N = 84$. The abscissa is the mode index $j$ and
  the ordinate is in units of the wavelength $\lambda$.  }
\label{coherent}
\end{figure}

\noindent \textbf{Discussion and numerical illustration.} The decay of
the mean mode amplitudes is a manifestation of the loss of coherence
of the modes. This is a gradual process, with the last indexed modes
losing coherence faster than the first ones, as illustrated by the
numerical results displayed in Figure \ref{coherent}.  We plot $\cS_j
= 1/\mu_{j,1}$ in green and $1/\mu_{j,\mM_j}$ in blue. Note that
$\cS_j$ are the scaled scattering mean free paths, as follows from
(\ref{eq:DL1}). The actual scattering mean free paths are given by $
\cS_j^\epsilon = \cS_j/\epsilon^2$, and are much larger than the
wavelength $\lambda$.  The matrix ${\bf C}_j$ is computed as in
(\ref{eq:MP6}), using the coefficients defined in appendix \ref{ap:1},
for an isotropic random medium that is stationary in $x_1, x_2$ and
$z$, with covariance
\[
\EE[\nu(\vx) \nu(\vx')] = {\rm exp}\left( - \frac{|\vx-\vx'|^2}{2
  \ell^2}\right), \qquad \ell = \lambda.
\]
The left plot is for a waveguide with dimensions $L_1 = 3.03 \lambda$
and $L_2 = 5.84 \lambda$, so that $N = 64$.  In the right plot $L_1 =
4.08\lambda$ and $L_2 = 5.77 \lambda$, so that $N = 84$.

Note that for any $j$ the eigenvalues $\mu_{j,s}$ are almost the same
for $1\ll s \ll \mM_j$, indicating that the equality in (\ref{eq:LC})
holds independent of the multiplicity $\mM_j$. Moreover, $\cS_j$
decreases with $j$, and the rate of decrease acellerates for $j$ close
to $N$. The scale shown with red in Figure \ref{coherent} is the
equipartition distance, up to the $\ep^{-2}$ factor. This is the range
where cumulative scattering by the medium distributes the energy of
the waves uniformly over the modes, independent of their initial
state.  We give more details in the next section, but it is important
to note that the equipartition distance is larger, by a factor of ten,
than all the scattering mean free paths. This is very similar to the
result obtained for sound waves in random waveguides with straight
boundaries \cite[Figure 4.2]{QUANT-13}.

To interpret the results, let us note that the modes $\bphi_j^{(s)}
{\rm exp}(i \beta_j z)$, for $1 \le s \le \mM_j$, are superpositions
(component-wise) of the plane waves $ {\rm exp}( i \vec{\bf K}_j \cdot
\vx)$, with wave vectors
\[
\vec{\bf K}_j = \left(\pm {\pi j_1}/{L_1},\pm {\pi j_2}/{L_2},
  \beta_j\right),  \qquad j = (j_1,j_2).
\]
The plus and minus signs are due to the reflections of the waves at
the walls of the waveguide.  Recall that $\beta_j = \sqrt{k^2
  -\lambda_j}$, with eigenvalues $\lambda_j$ defined by (\ref{eq:P10})
and enumerated in increasing order. When $j$ is small, the wave vector
$\vec{\bf K}_j$ is almost aligned with the range axis, and the waves
propagate with large (group) range velocity
\[
{1}/{\beta'_j(\om)} = c_o \sqrt{1 - {\lambda_j}/{k^2}} \approx
c_o.
\]
They arrive quickly to range $Z$ because they travel along shorter
paths, with a small number of reflections at the walls, and are least
affected by the random medium. For the high index modes $\lambda_j
\approx k^2$, and the wave vectors $\vec{\bf K}_j$ are almost
orthogonal to the range axis.  The waves propagate very slowly along
range because they strike the waveguide walls many times. The
interaction with the random medium accumulates over the long travel
paths of these modes, and the waves lose coherence over shorter range
scales, as modeled by the small scattering mean free paths.

For any given $j$ the modes $\bphi_j^{(s)} {\rm exp}(i \beta_j z)$ are
the superposition of the same plane waves for $s = 1$ and $2$, so
their interaction with the medium is the same. This is why the
eigenvalues $\mu_{j,1}$ and $\mu_{j,2}$ are almost equal.

\subsection{The equipartition regime}
\label{sect:TR}
The transport of energy in the waveguides is modeled by the evolution
equations (\ref{eq:MP2}).  Our goal in this section is to describe
their solution in the limit $Z \to \infty$. 

 We begin by writing equations \eqref{eq:MP2} as
\begin{equation}\label{eq:TE1}
\partial_Z {\bf P}(Z) = \Upsilon\big(\textbf{P} \big)(Z) =
\Upsilon^{+}\big(\textbf{P} \big)(Z) - \Upsilon^{-}\big(\textbf{P}
\big)(Z),\quad Z>0,
\end{equation}
with initial condition ${\bf P}(0) = {\bf P}_o$. Here $\Upsilon,
\Upsilon^{\pm}:\mathfrak{X}\rightarrow\mathfrak{X}$ are linear
operators acting on the vector space $\mathfrak{X}$ defined in section
\ref{sect:CE}, with values in $\mathfrak{X}$. We have $\Upsilon=
\Upsilon^+ - \Upsilon^{-}$ and 
\begin{equation}\label{eq:TE2}
\Upsilon^{+}\big(\textbf{P} \big)_{j}(Z) =
\sum^{N}_{l=1}\Upsilon^{+}_{jl}\big(\textbf{P}_{l}\big)(Z)\,, \hspace{.4cm}
\Upsilon^{-}\big(\textbf{P} \big)_{j}(Z) =\sum^{N}_{l=1}
\Upsilon^{-}_{jl}\big({\bf P}_j \big)(Z),
\end{equation}
with operators
$\Upsilon^{\pm}_{jl}:\mathfrak{X}_{l}\rightarrow\mathfrak{X}_{j}$
acting on the spaces $\mathfrak{X}_l$ of Hermitian matrices defined in
section \ref{sect:CE}, and given by
\begin{align}\label{eq:TE3}
\Upsilon^{+}_{jl}\big(U\big)& = \int_{-\infty}^ \infty dz \EE\left\{
M_{AA,jl}(z)\,U\, M_{AA,jl}^\star(0)\right\} e^{i
  (\beta_l-\beta_j)z}\,,\nonumber\\ \Upsilon^{-}_{jl}\big( U \big) & =
-\Big( {\bf Q}_l\,U + U\, {\bf Q}_l^\star \Big) \delta_{jl}\,.
\end{align}
 We may think of $\Upsilon^{+}_{jl}$ and
$\Upsilon^{-}_{jl}$ as modeling the inflow/outflow of energy of the $j
\leftrightarrows l$ modes, because
\begin{equation}
\label{eq:TE3p}
\Upsilon^{+}_{jl}(U)\geq 0\,\quad\text{and}\quad
\text{trace}\big(\Upsilon^{-}_{jl}(U)\big)\geq0\,,
\end{equation}
for $1 \leq l, j \leq N,$ and all $U \in \mathfrak{C}_{l}$, the cone
of positive semidefinite matrices in $\mathfrak{X}_l$.
The limit of ${\bf P}(Z)$ as $Z \to \infty$ depends on the spectrum of
the operator $\Upsilon$, and in particular its kernel, described in the 
next theorem.
\begin{theorem}\label{thm:2}
The operator $\Upsilon$ has the following spectral properties: \\ (i)
The eigenvalues of $\Upsilon$ lie in $(-\infty,0]$. \\ (ii) The
  $\text{Kernel}\big(\Upsilon\big)$ is not trivial, it has an
  eigenbase, and it intersects the cone $\mathfrak{C} = \mathfrak{C}_1
  \times \mathfrak{C}_2 \times \ldots \mathfrak{C}_N \subset
  \mathfrak{X}$. \\ (iii) The $\text{Kernel}\big(\Upsilon\big)$ is one
  dimensional under the additional assumption that
\begin{equation}\label{eq:conirr}
\Upsilon^{+}_{jl}\big(U\big) > 0\,,\quad \forall\,0\neq
U\in\mathfrak{C}_{l} ~~ {\rm and} ~~  1 \leq j, l \leq N.
\end{equation}
\end{theorem}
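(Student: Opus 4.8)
The plan is to recognize that $\Upsilon$ is the generator of a \emph{completely positive, trace preserving} semigroup on $\mathfrak{X}$ — a Lindblad-type operator — and to read off all three assertions from this structure together with the self-adjointness forced by the symmetry relations of Appendix \ref{ap:1}. First I would rewrite $\Upsilon$ in the manifestly dissipative form $\Upsilon({\bf P})_j = \Upsilon^{+}({\bf P})_j - \tfrac12({\bf C}_j{\bf P}_j + {\bf P}_j{\bf C}_j) + i[{\bf H}_j,{\bf P}_j]$, where ${\bf Q}_j = -\tfrac12{\bf C}_j + i{\bf H}_j$ splits ${\bf Q}_j$ into Hermitian and anti-Hermitian parts, using Lemma \ref{lem:MId} to identify the Hermitian part with $-\tfrac12{\bf C}_j$. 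The operator $\Upsilon^{+}$ is completely positive: approximating $\Upsilon^{+}_{jl}(U)$ by $\EE\{\widehat{M}_T\,U\,\widehat{M}_T^\star\}$ with the windowed transform $\widehat{M}_T = T^{-1/2}\int_0^T e^{i(\beta_l-\beta_j)z}M_{AA,jl}(z)\,dz$, stationarity shows the approximants converge to (\ref{eq:TE3}) while each manifestly maps the cone $\mathfrak{C}_l$ into $\mathfrak{C}_j$. Trace preservation, $\sum_j{\rm trace}\,\Upsilon({\bf P})_j=0$, is exactly the identity (\ref{eq:MP5})--Lemma \ref{lem:MId}. Hence $e^{\Upsilon Z}$ preserves $\mathfrak{C}$ and the total trace, so it is a trace-norm contraction for every $Z\ge0$; its uniform boundedness already confines the spectrum to $\{{\rm Re}\,\lambda\le0\}$ with semisimple peripheral part.

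To upgrade this to assertion (i), that the eigenvalues are real and nonpositive, I would show $\Upsilon$ is self-adjoint for the inner product $(\cdot,\cdot)_{\mathfrak{X}}$. Using stationarity together with the symmetry $M_{AA,lj}(z)=M_{AA,jl}^\star(z)$ from Appendix \ref{ap:1}, a change of variables $z\mapsto -z$ in (\ref{eq:TE3}) yields $(\Upsilon^{+})^\star=\Upsilon^{+}$; the decay $-\tfrac12\{{\bf C}_j,\cdot\}$ is self-adjoint since ${\bf C}_j$ is Hermitian, and the only anti-self-adjoint contribution is the commutator $i[{\bf H}_j,\cdot]$. The polarization symmetry noted in Section \ref{sect:LC} — the two modes of a fixed $j$ see the \emph{same} medium — makes ${\bf H}_j$ a scalar multiple of the identity on each block, so $[{\bf H}_j,\cdot]\equiv{\bf 0}$ and $\Upsilon=\Upsilon^\star$. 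Reality of the spectrum follows, and combined with the contraction bound the eigenvalues lie in $(-\infty,0]$. Verifying ${\bf H}_j\propto{\bf I}$ cleanly from the coupling coefficients is the one delicate computational input here.

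Assertion (ii) is then immediate. A direct computation gives $\Upsilon^{+}({\bf I})_j=\sum_l\int\EE\{M_{AA,jl}(z)M_{AA,jl}^\star(0)\}e^{i(\beta_l-\beta_j)z}\,dz={\bf C}_j$, so $\Upsilon({\bf I})_j={\bf C}_j-{\bf C}_j+i[{\bf H}_j,{\bf I}]={\bf 0}$: the equipartition tuple ${\bf I}=(I,\dots,I)$ lies in the kernel and in the interior of $\mathfrak{C}$, giving nontriviality and the cone intersection. The existence of an eigenbasis follows from self-adjointness (or, absent it, from the semisimplicity of the peripheral spectrum of the bounded positive semigroup, which forbids Jordan blocks at $\lambda=0$).

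For assertion (iii) I would run a noncommutative Perron--Frobenius argument. Assumption (\ref{eq:conirr}) states that every block map $\Upsilon^{+}_{jl}$ is strictly positivity improving, making $e^{\Upsilon Z}$ irreducible and primitive on $\mathfrak{C}$: it couples all modes and maps the boundary of the cone into its interior. Given self-adjointness, the cleanest route is the Dirichlet form: for ${\bf P}\in\ker\Upsilon$ one has $0=({\bf P},\Upsilon{\bf P})_{\mathfrak{X}}=-\mathcal{D}({\bf P})$ with a nonnegative quadratic form $\mathcal{D}$ that, in analogy with the scalar rate equation $\partial_Z p_j=\sum_l\Gamma_{jl}(p_l-p_j)$, vanishes only when the populations are equalized across coupled modes, so (\ref{eq:conirr}) forces ${\bf P}\propto{\bf I}$; equivalently one invokes Frigerio-type uniqueness for irreducible quantum dynamical semigroups, whereby the simple top eigenvalue $0$ has positive eigenvector ${\bf I}$. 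I expect this step — making $\mathcal{D}$ explicit and translating the strict positivity (\ref{eq:conirr}) into the connectedness that pins the kernel down to ${\bf I}$ — to be the main obstacle, since it is where the matrix (rather than scalar) nature of the powers genuinely enters.
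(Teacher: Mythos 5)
Your Lindblad-form rewriting of $\Upsilon$ via Lemma \ref{lem:MId}, the complete positivity of $\Upsilon^{+}$ by windowed-transform approximation (this mirrors the argument of appendix \ref{ap:psd}), and the contraction/boundedness argument giving $\operatorname{Re}\lambda\le 0$ are all sound, and your Perron--Frobenius instinct for item (iii) is the same one the paper uses. But there is a genuine gap exactly at the point you flag as ``delicate,'' and it propagates through all three items. Nothing justifies ${\bf H}_j\propto{\bf I}$: the remark in section \ref{sect:LC} that both polarizations of a fixed $j$ ``see the same medium'' is a heuristic explaining why the two eigenvalues of ${\bf C}_j$ are \emph{approximately} equal (Figure \ref{coherent}), not an exact degeneracy, and nothing in appendix \ref{ap:explicit} makes the anti-Hermitian part of ${\bf Q}_j$ scalar. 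Likewise, the symmetry $M_{AA,lj}(z)=M_{AA,jl}^\star(z)$ that you invoke to get $(\Upsilon^{+})^\star=\Upsilon^{+}$ is not among the relations (\ref{eq:rel1})--(\ref{eq:rel4}) of appendix \ref{ap:1} and is never proved. So self-adjointness of $\Upsilon$ --- the load-bearing premise of your items (i), (ii) and of the Dirichlet-form version of (iii) --- is not established, and is almost certainly false here.

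The clearest symptom is your claim $\Upsilon({\bf 1})=0$. Trace preservation, i.e.\ (\ref{eq:MP5}) together with Lemma \ref{lem:MId}, says exactly that $\Upsilon^{\star}({\bf 1})=0$, \emph{not} that $\Upsilon({\bf 1})=0$; your computation $\Upsilon^{+}({\bf 1})_j={\bf C}_j$ silently swaps index order, since $\Upsilon^{+}_{jl}$ in \eqref{eq:TE3} is built from $M_{AA,jl}$ while ${\bf C}_j$ in \eqref{eq:MP6} is built from $M_{AA,lj}$, which again needs the unproven symmetry. The paper is careful about precisely this distinction: it proves ${\bf 1}\in{\rm Kernel}(\Upsilon^{\star})$ and deduces ${\rm Kernel}(\Upsilon)\neq\{0\}$ by dimension count; it produces the cone element of ${\rm Kernel}(\Upsilon)$ dynamically, as the $Z\to\infty$ limit of a solution started in $\mathfrak{C}$, rather than exhibiting it as ${\bf 1}$; and it runs Perron--Frobenius on $\Upsilon^{\star}$ conjugated to diagonal form. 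Its numerics (Figure \ref{stationary}) show the stationary state is only \emph{approximately} the identity, which is strong evidence that ${\bf 1}\notin{\rm Kernel}(\Upsilon)$ in general, i.e., that $\Upsilon\neq\Upsilon^{\star}$. The repairs are: reality of the eigenvalues needs no self-adjointness at all --- since $\Upsilon$ maps tuples of Hermitian matrices to tuples of Hermitian matrices, an eigenvector ${\bf U}\in\mathfrak{X}$ forces $\lambda\,{\bf U}=(\lambda\,{\bf U})^{\star}=\overline{\lambda}\,{\bf U}$ --- after which your contraction bound pins the spectrum in $(-\infty,0]$; the cone intersection in (ii) must be obtained as the long-$Z$ limit; and (iii) must apply the irreducibility/Perron--Frobenius (or Frigerio-type) argument to $\Upsilon^{\star}$ or to the semigroup itself, without anchoring the kernel to ${\bf 1}$.
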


\vspace{0.1in} For any initial condition ${\bf P}_o \in \mathfrak{C}$
we have ${\bf P}(Z) \in \mathfrak{C}$ for all $Z$, as shown in
appendix \ref{ap:cone}. This is why we are interested in the cone
$\mathfrak{C}$ of the space $\mathfrak{X}$. The assumption
\eqref{eq:conirr} says that there is positive flux of energy for all
the waveguide modes. It is the generalization of the condition stated
in \cite[Section 20.3.3]{LAY_BOOK} which gives the equipartition
regime for sound waves. The statement there is that the power spectral
density of the fluctuations of the wave speed does not vanish when
evaluated at the differences of the wavenumbers of the modes. Our
condition (\ref{eq:conirr}) is similar, but with $\mM_j \times \mM_j$
matrices.  The following corollary follows immediately from parts (i)
and (iii) of Theorem \ref{thm:2}.
\begin{corollary}\label{cor:1}
Suppose that condition \eqref{eq:conirr} holds, and let $\textbf{U}_o$
be the unique vector that spans
$\text{Kernel}\big(\Upsilon\big)\cap\mathfrak{C}$, normalized by
$\|\textbf{U}_o\| = \|\textbf{P}_o\|$. We have
\begin{equation*}
\big| \textbf{P}(Z) - \textbf{U}_o \big| \leq
C\,(1+Z)^{\mathfrak{m}_{_\Upsilon}}\, e^{\lambda_{_\Upsilon}\,Z},
\end{equation*}
where $\lambda_{_\Upsilon}$ is the smallest (in magnitude) nonzero
eigenvalue of $\Upsilon$, and $\mathfrak{m}_{_\Upsilon}$ is its
multiplicity.  
\end{corollary}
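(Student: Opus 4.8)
The plan is to read \eqref{eq:TE1} as a linear, autonomous ODE $\partial_Z \textbf{P}(Z) = \Upsilon\,\textbf{P}(Z)$ on the finite dimensional real inner product space $\mathfrak{X}$ of section \ref{sect:CE}, whose solution is $\textbf{P}(Z) = e^{\Upsilon Z}\textbf{P}_o$. Since each $\Upsilon^{\pm}_{jl}$ maps Hermitian matrices to Hermitian matrices, $\Upsilon$ is a genuine operator on $\mathfrak{X}$, and by Theorem \ref{thm:2}(i) its spectrum is real and contained in $(-\infty,0]$. The entire argument then reduces to the standard long range analysis of a matrix exponential via the Jordan decomposition of $\Upsilon$, the only inputs being the spectral facts already furnished by Theorem \ref{thm:2}; this is why the result is stated as a corollary.

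First I would split $\mathfrak{X}$ into $\Upsilon$-invariant subspaces. By Theorem \ref{thm:2}(ii) the eigenvalue $0$ is semisimple, so its generalized eigenspace coincides with $\text{Kernel}(\Upsilon)$, which by part (iii) and condition \eqref{eq:conirr} is one dimensional and spanned by $\textbf{U}_o\in\mathfrak{C}$. Writing $\Pi_0$ for the Riesz spectral projection onto this eigenspace, I obtain the invariant decomposition $\mathfrak{X} = \text{Kernel}(\Upsilon)\oplus E_-$, with $E_- = (I-\Pi_0)\mathfrak{X}$ carrying the strictly negative part of the spectrum. Semisimplicity of $0$ gives $\Upsilon\Pi_0 = 0$, hence $e^{\Upsilon Z}\Pi_0 = \Pi_0$, and therefore
\[
\textbf{P}(Z) = \Pi_0\textbf{P}_o + e^{\Upsilon Z}(I-\Pi_0)\textbf{P}_o .
\]

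Next I would estimate the second term. On $E_-$ every eigenvalue $\lambda$ of $\Upsilon$ is real and satisfies $\lambda \le \lambda_{_\Upsilon} < 0$, where $\lambda_{_\Upsilon}$ is the eigenvalue closest to the origin. A Jordan block of eigenvalue $\lambda$ and size $p$ contributes to $e^{\Upsilon Z}$ a factor $e^{\lambda Z}$ times a polynomial in $Z$ of degree $p-1$; bounding the block sizes at $\lambda_{_\Upsilon}$ by its multiplicity $\mathfrak{m}_{_\Upsilon}$ and absorbing the faster decaying eigenvalues into the constant yields $\big|e^{\Upsilon Z}(I-\Pi_0)\textbf{P}_o\big| \le C(1+Z)^{\mathfrak{m}_{_\Upsilon}} e^{\lambda_{_\Upsilon} Z}\,|\textbf{P}_o|$. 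Combined with the previous display this is exactly the claimed bound, provided the constant limit $\Pi_0\textbf{P}_o$ is identified with $\textbf{U}_o$.

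That identification is, in my view, the only delicate point, and I would flag it as the main obstacle. Since $\Pi_0\textbf{P}_o$ lies in the one dimensional $\text{Kernel}(\Upsilon)$, it is a scalar multiple $c\,\textbf{U}_o$; the scalar $c$ is pinned down by the conservation of total energy \eqref{eq:MP4}, equivalently by $\text{trace}\,\Upsilon(\cdot)\equiv 0$, which makes $\sum_j \text{trace}[P_j(Z)]$ invariant along the flow. Passing to $Z\to\infty$ and using continuity of the trace gives $\sum_j \text{trace}[(\Pi_0\textbf{P}_o)_j] = \sum_j \text{trace}[P_j(0)]$, fixing $c$ and hence identifying $\Pi_0\textbf{P}_o$ with the equipartition state $\textbf{U}_o$. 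I would make explicit that it is this conserved total energy, rather than the Frobenius norm $|\cdot|$ itself (which is not preserved by the non self adjoint flow $e^{\Upsilon Z}$), that fixes the scale of $\textbf{U}_o$ recorded by the normalization $\|\textbf{U}_o\| = \|\textbf{P}_o\|$ in the statement.
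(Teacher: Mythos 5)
Your proposal is correct and follows essentially the same route as the paper: the paper's own argument (embedded in the proof of Theorem \ref{thm:2}, part (ii)) writes $\textbf{P}(Z)$ in the generalized-eigenvector expansion \eqref{eq:TE5} and deduces the bound \eqref{eq:TE6}, which, combined with the one-dimensionality of the kernel from part (iii), is precisely your Riesz-projection/Jordan-block decomposition of $e^{\Upsilon Z}\textbf{P}_o$. Your closing observation that the limit $\Pi_0\textbf{P}_o$ is pinned to $\textbf{U}_o$ by the conserved total trace \eqref{eq:MP4}, rather than by the (non-conserved) Frobenius norm, is a careful point the paper leaves implicit in its normalization $\|\textbf{U}_o\| = \|\textbf{P}_o\|$, and it strengthens rather than alters the argument.
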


We display in Figure \ref{stationary} the element ${\bf U}_o \in
\text{Kernel}\big(\Upsilon\big)$ for the same two random waveguides
considered in Figure \ref{coherent}.  We normalize ${\bf U}_o$ so that
its maximum entry is equal to one. Because ${\bf U}_o$ is a
concatenation of $N$ matrices of size $\mM_j \times \mM_j$, we embed
it in a square matrix for display purposes. The entries of interest in
the square matrices displayed in Figure \ref{stationary} are the
$\mM_j \times \mM_j$ blocks along the diagonal. We note from the
figure that the result is almost the matrix identity. Therefore,
Corollary \ref{cor:1} says that in the limit $Z \to \infty$, the
energy is distributed uniformly over the modes, independent of the
initial mode power distribution ${\bf P}_o$. This is the
\emph{equipartition regime}, and it is reached when the waves travel
beyond the \emph{equipartition distance}
\begin{equation}
L_{\rm eq} = 1/|\lambda_{_{\Upsilon}}|.
\label{eq:EQUIPL}
\end{equation}

\begin{figure}[!t]
\begin{minipage}{1\textwidth}
\centering \includegraphics[width=0.45\textwidth]{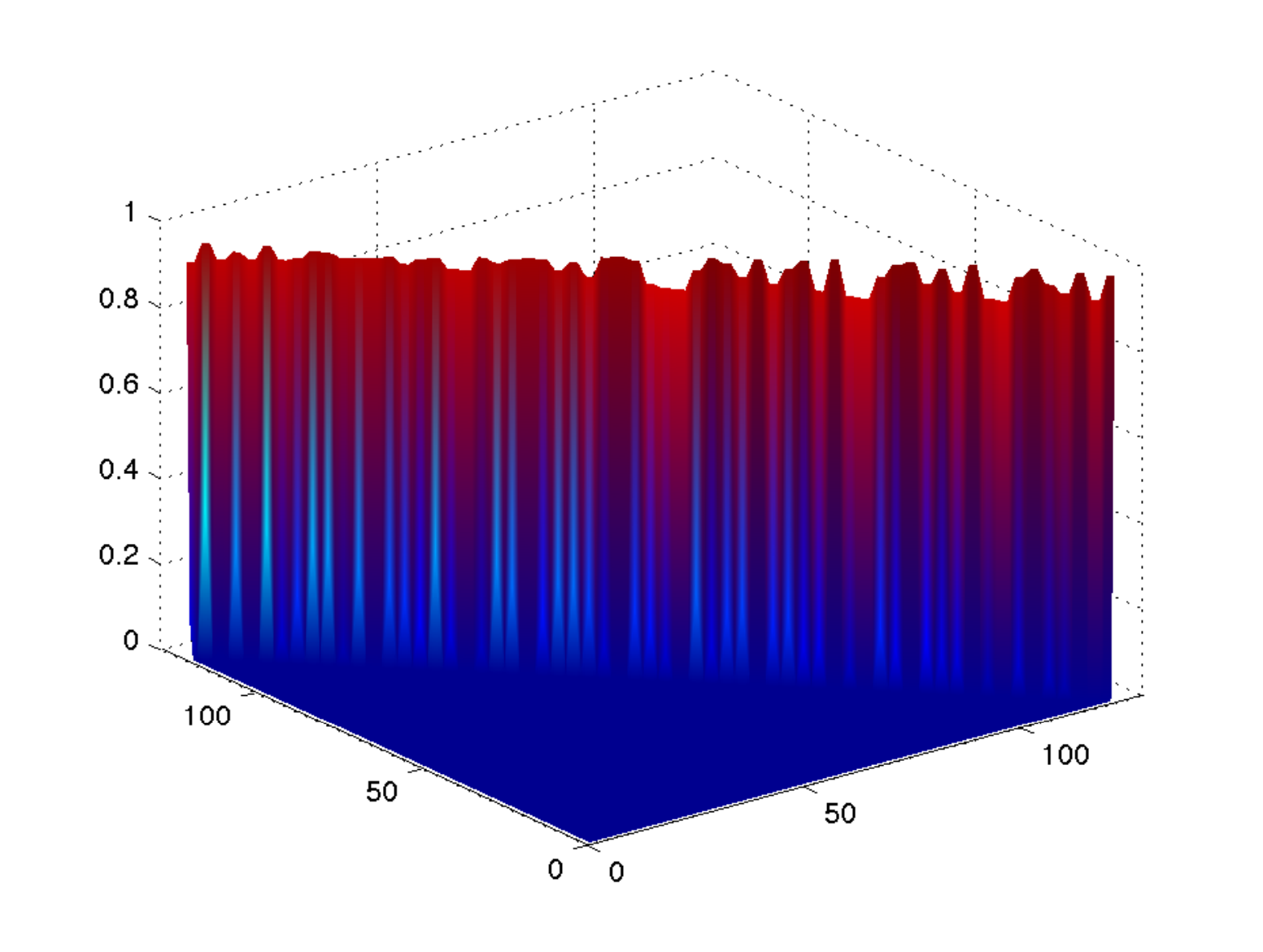}
\includegraphics[width=0.45\textwidth]{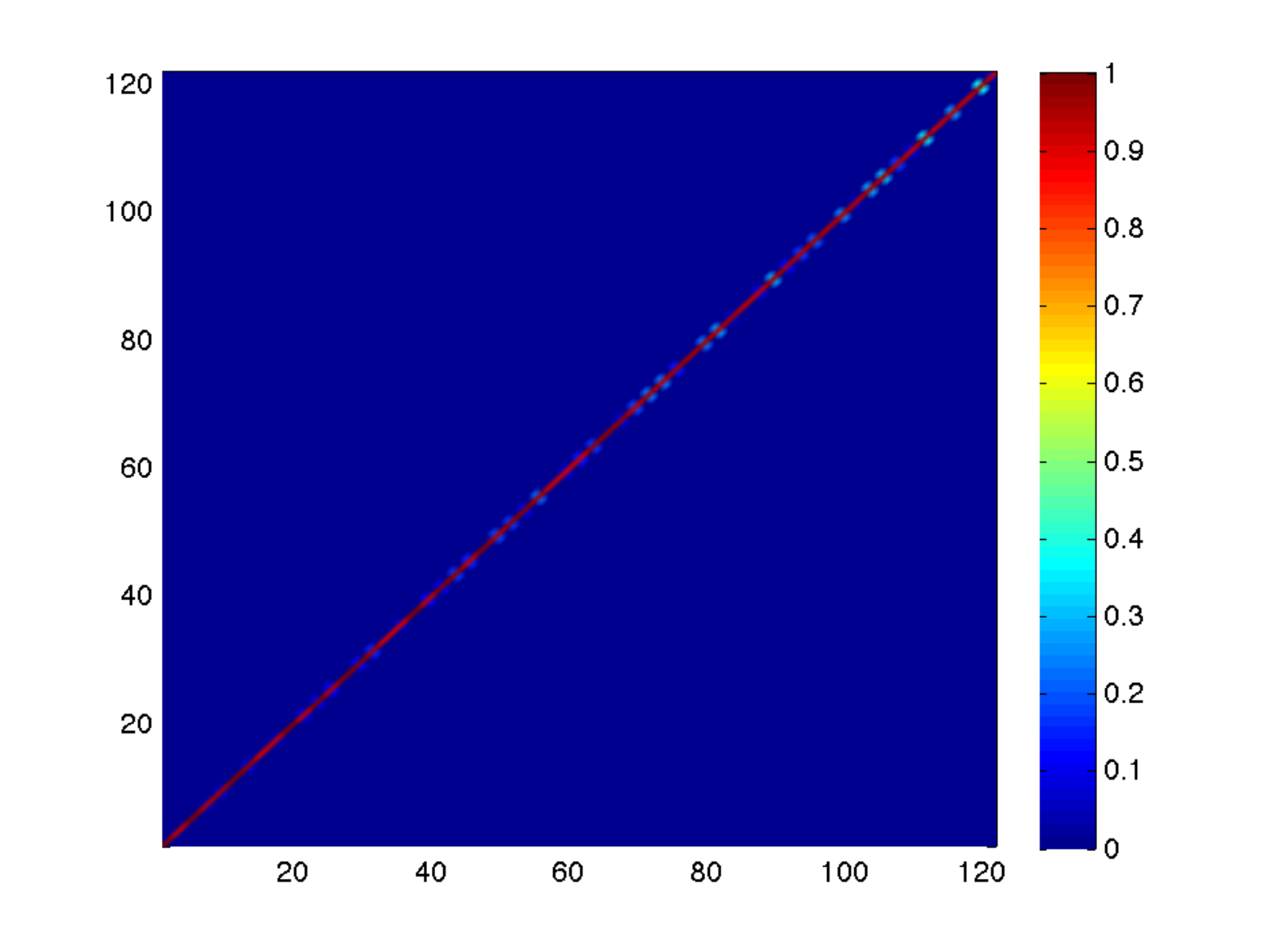}
\end{minipage}
\begin{minipage}{1\textwidth}
\centering
\includegraphics[width=0.45\textwidth]{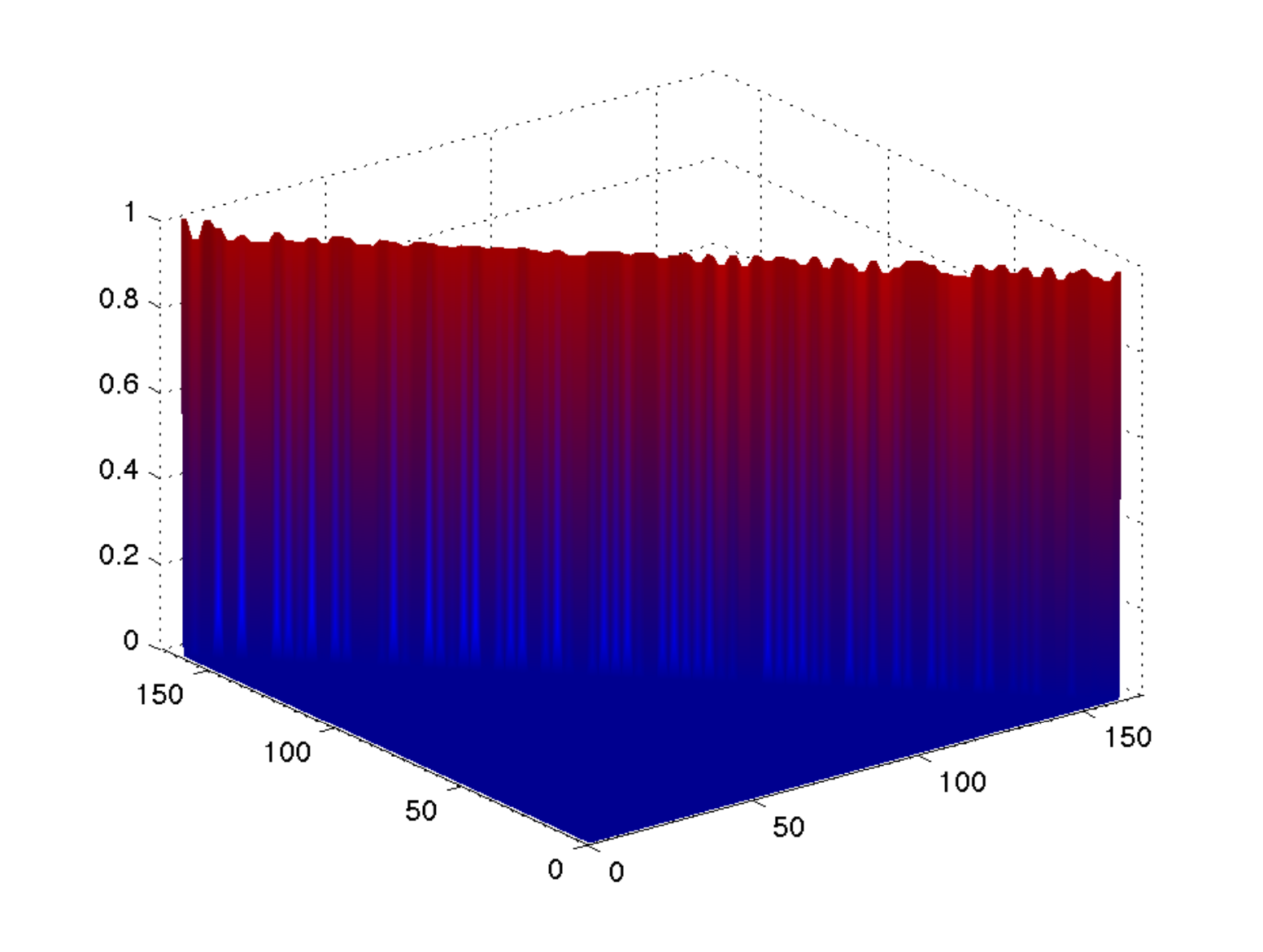}
\includegraphics[width=0.45\textwidth]{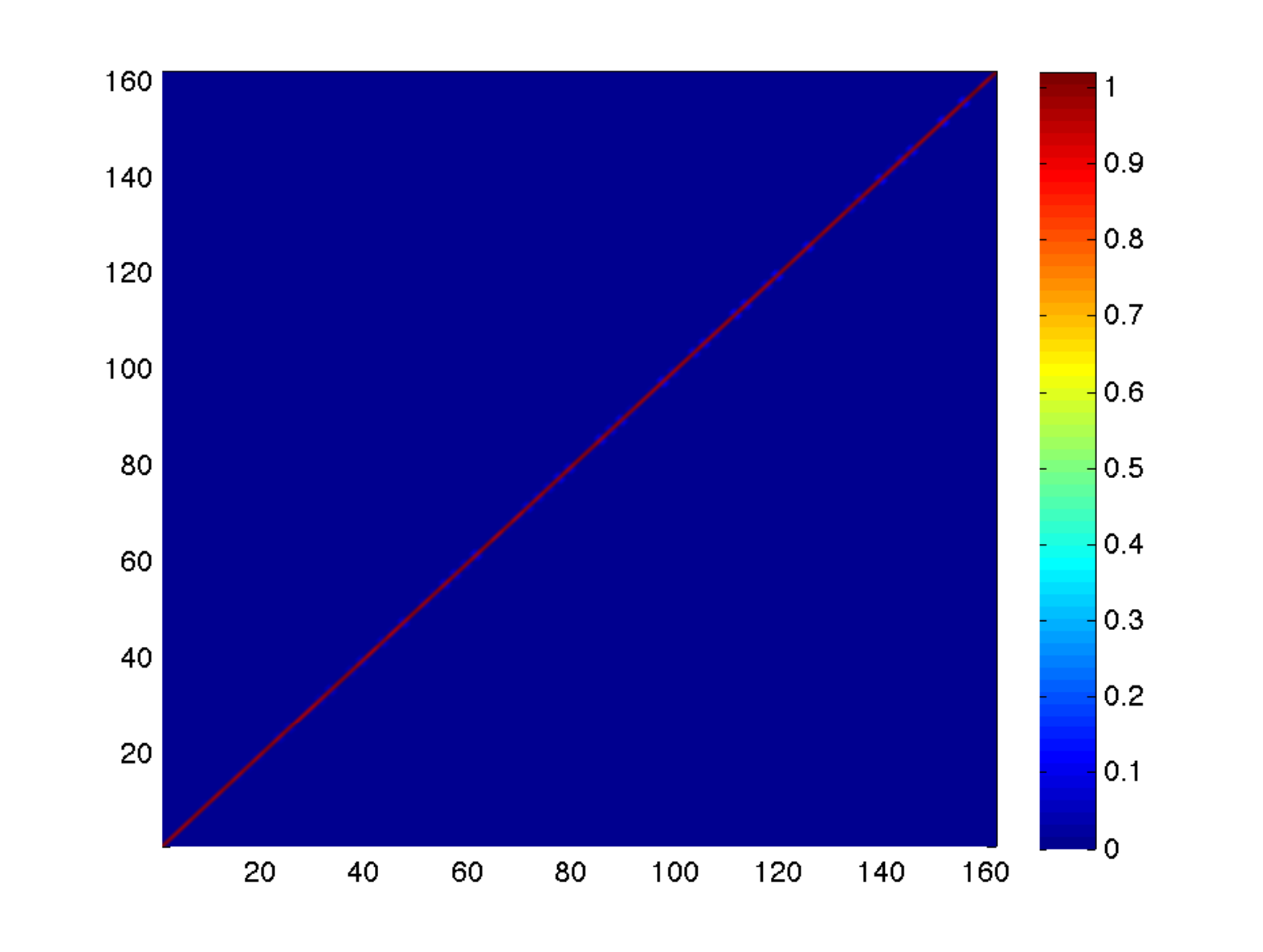}
\end{minipage}
\vspace{-0.1in}
\caption{The element (matrix) ${\bf U}_o \in
  \text{Kernel}\big(\Upsilon\big)$ for the same two waveguides
  considered in Figure \ref{coherent}. The top plots are for the
  waveguide with $N = 64$ modes and the bottom plots for the waveguide
  with $N = 84$ modes.  We display ${\bf U}_o$ from the lateral side
  (left) to show the magnitude of its entries (left), and from above
  (right) to show that it is almost diagonal. }
\label{stationary}
\end{figure}

\noindent \textbf{Proof of Theorem \ref{thm:2}:}

\textit{Item (i).} Let $0\neq\lambda\in\mathbb{C}$ be an eigenvalue of
$\Upsilon$ and $\textbf{U}\in\mathfrak{X}$ an associated eigenvector.
Therefore, $\Upsilon\big(\textbf{U}\big)=\lambda\,\textbf{U}$, or
componentwise,
\begin{equation*}
\lambda\,U_{j}=\Upsilon\big(\textbf{U}\big)_{j}=\Upsilon\big(\textbf{U}\big)^{\star}_{j}=
\big(\lambda\, U_{j}\big)^{\star}=\overline{\lambda}\,U_{j},\quad
1\leq j \leq N,
\end{equation*}
where we used definitions (\ref{eq:TE2}-{\ref{eq:TE3}) to obtain the
  second equality.  Consequently, the eigenvalues of $\Upsilon$ are
  real valued.  To see that they cannot be positive, we use that
  $\textbf{P}(Z) \in \mathfrak{C}$ for all $Z$ as shown in appendix
  \ref{ap:cone}, and the conservation of energy
\[
\sum^{N}_{j=1}\text{trace}\big(\textbf{P}_{j}(Z)\big)=\sum^{N}_{j=1}\text{trace}
\big(\textbf{P}_{j,o}\big)\,,\quad Z\geq0.
\]
Since $\textbf{P}_{j}(Z)\in\mathfrak{C}_{j}$, we have $ 0 \leq
P^{ss}_{j}(Z) \leq \text{trace}\big(\textbf{P}_{j}(Z) \big), $ for all
$Z \ge 0$, $1\leq j \leq N$ and $1\leq s \leq
\mathfrak{M}_{j}$. Moreover
\begin{equation*}
\big| P^{12}_{j}(Z) \big| \leq \sqrt{P^{11}_{j}(Z)P^{22}_{j}(Z)}\leq
\tfrac{1}{2}\text{trace}\big(\textbf{P}_{j}(Z) \big),\quad Z\geq 0,
\end{equation*}
when $\mathfrak{M}_{j}=2$, where we used the Cauchy-Schwarz
inequality.  Therefore 
\[ \sup_{Z\geq0}\big|\textbf{P}(Z)\big| \leq
\sum^{N}_{j=1}\text{trace}(\textbf{P}_o), \qquad \forall \, {\bf P}_o
\in \mathfrak{C}.  \] Now consider an arbitrary initial state in
$\mathfrak{X}$, not necessarily in $\mathfrak{C}$. We denote such a
state by $\widetilde{{\bf P}}_o$ to distinguish it from a physical
initial mode power state that is necessarily in $\mathfrak{C}$, and
the corresponding solution of \eqref{eq:TE1} by $\widetilde{P}(Z)$.
Since any $\widetilde{\bf P}_o$ can be writen as a linear combination of
elements in $\mathfrak{C}$, and \eqref{eq:TE1} is linear,
$\widetilde{P}(Z)$ is a linear combination of solutions with initial
states in $\mathfrak{C}$, which are bounded as shown above. We
conclude that all trajectories are bounded, independent of the initial
state.

Now, let $\textbf{U}$ be an eigenvector of $\Upsilon$ for an
eigenvalue $\lambda$, and set $\widetilde{\bf P}_0 = \textbf{U}$.  The
solution of \eqref{eq:TE1} is  $\widetilde{\bf P}(Z)=
\textbf{U}\,e^{\lambda Z}$ and it is uniformly bounded if and only if
$\lambda\leq0$.

\textit{Item (ii).} To characterize $\text{Kernel}(\Upsilon)$ we
rewrite the conservation of energy as
\begin{equation}\label{eq:TE4}
\big( \Upsilon(\textbf{U}), \textbf{1} \big)_{\mathfrak{X}} =\big(
\textbf{U}, \Upsilon^{\star}(\textbf{1}) \big)_{\mathfrak{X}} =
0\,,\quad \forall\,\textbf{U}\in\mathfrak{X},
\end{equation}
where $\textbf{1}$ is the vector of concatenated $\mM_j \times \mM_j$
identity matrices, for $j = 1, \ldots, N$, and $\Upsilon^\star$ is the
adjoint of $\Upsilon$.  We conclude that $\textbf{1} \in
\text{Kernel}\big(\Upsilon^{*}\big)$, and therefore that the kernel of
$\Upsilon$ is not trivial.

We also infer from the boundedness of the solutions of \eqref{eq:TE1}
that $\text{Kernel}\big(\Upsilon\big)$ has an eigenbase. Otherwise the
solutions could grow polynomially in $Z$.  

Now let us write the solutions of \eqref{eq:TE1} as
\begin{equation}\label{eq:TE5}
\textbf{P}(Z) = \textbf{U}_{o} + \sum^{\mathfrak{n}}_{j=1}
\sum^{\mathfrak{m}_{j}}_{q=0} c_{jq}\,\textbf{U}_{jq}\; (Z+1)^{q}
e^{\lambda_{j}Z}, 
\end{equation}
where $\textbf{U}_{o}\in\text{Kernel}\big(\Upsilon\big)$, and $n$ is
the number of distinct eigenvalues $\lambda_j < 0$ of $\Upsilon$, with
multiplicity $m_j$.  The finite sequences
$\{\textbf{U}_{jq}\}^{\mathfrak{m}_{j}}_{q=1}$ are linear independent
sets of generalized eigenvectors corresponding to the eigenvalue
$\lambda_{j}$.  Equation \eqref{eq:TE5} implies 
\begin{equation}\label{eq:TE6}
\big| \textbf{P}(Z) - \textbf{U}_o \big| \leq
C\,(1+Z)^{\mathfrak{m}_{_\Upsilon}}\, e^{\lambda_{_\Upsilon}\,Z},
\end{equation}
where we denote by $\lambda_{_\Upsilon}$ the eigenvalue with smallest
magnitude and $m_{_\Upsilon}$ its multiplicity.  Now, take $0 \neq
\textbf{P}_{o} \in\mathfrak{C}$, and therefore,
$\textbf{P}(Z)\in\mathfrak{C}$ for all $Z\geq0$.  Letting
$Z\rightarrow\infty$ in \eqref{eq:TE6} we conclude that
$\textbf{U}_{o}\in\mathfrak{C}$ i.e.,
$\textbf{U}_o\in\text{Kernel}\big(\Upsilon\big)\cap\mathfrak{C}$.
Moreover, $\textbf{U}_o \ne 0$ by the conservation of energy.

\textit{Item (iii).} We prove that condition (\ref{eq:conirr})
guarantees a one dimensional kernel of the adjoint operator
$\Upsilon^\star$, and therefore of $\Upsilon$.  We consider first 
a large family of linear mappings defined in terms of $\Upsilon^{*}$
and restricted to  the subspace
$\mathfrak{D}\subset\mathfrak{X}$ of vectors of diagonal matrices. We
show that this family has the one dimensional kernel ${\rm
  span}\{{\bf 1}\}$ in $\mathfrak{D}$. Then we show that ${\rm Kernel}(\Upsilon^\star)
= {\rm span}\{{\bf 1}\}$ in $\mathfrak{X}$.

To define the family of mappings, we recall that the
elements ${\bf U} = (U_1, \ldots, U_N)$ of $\mathfrak{X}$ are vectors
of Hermitian matrices which are diagonalized by similarity
transformations with orthogonal matrices of their eigenvectors.  For
any vector $\textbf{V} = (V_1, \ldots, V_N)$ of orthogonal matrices
we define the transformation
$\varphi_{\textbf{V}}:\mathfrak{X}\rightarrow\mathfrak{X}$ by
\begin{equation*}
\varphi_{\textbf{V}}\big(\textbf{U}\big) =
\textbf{V}^{\star}\,\textbf{U}\,\textbf{V} :=
\big(V^{\star}_{1}U_{1}V_{1},\cdots,V^{\star}_{N} U_{N}V_{N}\big),
\end{equation*}
with dimensions of $U_{j}$ and $V_{j}$ assumed to match for each
$1\leq j\leq N$. We also define the family of operators
$\Upsilon^\star_{\textbf{V}}:\mathfrak{X}\rightarrow\mathfrak{X}$ by
$\Upsilon^\star_{\textbf{V}}= \varphi_{\textbf{V}}^{-1}\circ
\Upsilon^{\star} \circ \varphi_{\textbf{V}}$, or more explicitly,
\begin{equation}\label{eq:TE7}
\Upsilon^\star_{\textbf{V}}\big(\textbf{U}\big) = \textbf{V}\,
\Upsilon^{\star}\big(\textbf{V}^{\star}\,\textbf{U}\,\textbf{V}
\big)\, \textbf{V}^{\star}\,,\quad \textbf{U}\in\mathfrak{X}.
\end{equation} 
Its restriction to the subspace $\mathfrak{D}\subset\mathfrak{X}$ of
vectors of diagonal matrices is denoted by
\begin{equation*}
\Upsilon^\star_{\textbf{V}\big|\mathfrak{D}}\big(\textbf{D}\big)=
\Upsilon^\star_{\textbf{V}}\big(\textbf{D}\big)\,\quad\text{for
  any}\;\; \textbf{D}\in\mathfrak{D},
\end{equation*}
and we wish to prove that 
\begin{equation}\label{eq:TE7-1}
{\rm Kernel}\big(\Upsilon^\star_{\textbf{V}\big|\mathfrak{D}}\big)
= {\rm span}\{{\bf
  1}\}, \qquad \forall \, {\bf V}.
\end{equation} 
Note that $\Upsilon^\star_{\textbf{V}\big|\mathfrak{D}}(\textbf{1}) = 0$ by 
the definition \eqref{eq:TE7} and $\Upsilon^{\star}({\bf 1}) = 0$. 

The statement of the theorem is implied by (\ref{eq:TE7-1}).  Indeed,
take an arbitrary
$\textbf{U}\in\text{Kernel}\big(\Upsilon^{\star}\big)$.  Since
$\mathbf{U}$ is a vector of Hermitian matrices, there exists a vector
$\textbf{V}$ of orthogonal matrices and a vector $\textbf{D}$ of
diagonal matrices such that $\textbf{U}=\textbf{V}^{*}\,\textbf{D}
\textbf{V}$.  Then
$\textbf{D}\in\text{Kernel}\big(\Upsilon^\star_{\textbf{V}\big|\mathfrak{D}}\big)$
and \eqref{eq:TE7-1} implies that $\textbf{D}=\alpha\,\mathbf{1}$ for
some $\alpha\in\mathbb{R}$. Consequently,
\[
\textbf{U}={\bf V}^\star \alpha {\bf 1} {\bf V} = \alpha\,\mathbf{1},
\]
which means that ${\rm Kernel}(\Upsilon^\star) = {\rm span}\{{\bf
  1}\}$.

The proof of \eqref{eq:TE7-1} is based on the Perron--Frobenius
theorem for irreducible matrices.  We start by computing the matrix
representation of the operator $\Upsilon^\star_{\textbf{V}}$ using the
properties of the adjoint operators $\Upsilon^{\pm
  \,\star}_{jl}:\mathfrak{X}_{j}\rightarrow\mathfrak{X}_{l}$ that
define $\Upsilon^{\star}$.  Then, we extract a suitable irreducible
matrix $\Lambda_{\textbf{V}}$ from it to apply the Perron--Frobenius theorem.

The matrix representation of $\Upsilon^\star _{\textbf{V}}$ consists
of $N^{2}$ blocks, where the $lj$--block is the matrix representation
of the operator $\Upsilon^{+ \star}_{\textbf{V} lj} - \Upsilon^{-
  \star}_{\textbf{V} lj}$, with $\Upsilon^{\pm \star}_{\textbf{V}
  lj}:\mathfrak{X}_{j}\rightarrow\mathfrak{X}_{l}$ defined naturally,
in light of \eqref{eq:TE7}, as
\begin{equation*}
\Upsilon^{\pm \star}_{\textbf{V} lj}(U) = V_{l}\,
\Upsilon^{\pm\,\star}_{jl}\big( V^{\star}_{j}\, U\, V_{j} \big)\,
V^{\star}_{l}\,,\quad U\in\mathfrak{X}_{j}.
\end{equation*}
Recall that the dimension of $\mathfrak{X}_{j}$ is
$\mathfrak{M}^{2}_{j}$, so the $lj$--block has dimension
$\mathfrak{M}^{2}_{l}\times \mathfrak{M}^{2}_{j}$. To be more precise,
consider the case $\mathfrak{M}_{j} = \mathfrak{M}_{l} = 2$, and write
explicitly
\[
\Upsilon^{\pm \star}_{\textbf{V} lj} = \Big(\upsilon^{\pm\,ss'}_{lj}
\Big)\,,\quad 1\leq s \leq \mathfrak{M}^{2}_{l}\,,\quad 1\leq s' \leq
\mathfrak{M}^{2}_{j},
\]
In terms of the canonical basis for $\mathfrak{X}_{j}$
\begin{equation*}
\left\{
\Big(\begin{array}{cc}
1 & 0 \\
0 & 0
\end{array}\Big),
\Big(\begin{array}{cc}
0 & 0 \\
0 & 1
\end{array}\Big),
\tfrac{1}{\sqrt{2}}\Big(\begin{array}{cc}
0 & 1 \\
1 & 0
\end{array}\Big),
\tfrac{1}{\sqrt{2}}\Big(\begin{array}{cc}
0 & i \\
-i & 0
\end{array}\Big)
 \right\} =: \big\{E_{1},E_{2},E_{3},E_{4}\big\},
\end{equation*}
we have
\begin{align}\label{eq:TE9}
\upsilon^{+\,ss'}_{lj}&=\big( E_{s},\Upsilon^{+
  \star}_{\textbf{V}{lj}}(E_{s'})\big)_{\mathfrak{X}_{l}}
\nonumber\\ &=\big( E_{s},V_{l}\, \Upsilon^{+\,\star}_{jl}\big(
V^{\star}_{j}\, E_{s'}\, V_{j} \big)\, V^{\star}_{l}
\big)_{\mathfrak{X}_{l}}
\nonumber\\ &=\big(V^{\star}_{l}\,E_{s}\,V_{l},
\Upsilon^{+\,\star}_{jl}\big( V^{\star}_{j}\, E_{s'}\,V_{j}\big)
\big)_{\mathfrak{X}_{l}}.
\end{align}
Note that $E_{1}$ and $E_{2}$ are positive semidefinite matrices, and
so are $V^{\star}_{l}\,E_1\,V_{l}$ and $V^{\star}_{j}\, E_2\,V_{j}$.
It follows from the explicit expression of $
\Upsilon^{+\,\star}_{jl}$ computed from \eqref{eq:TE3} as
\begin{equation}
\label{eq:defUpStar}
\Upsilon^{{+\,\star}}_{jl}(U) = \int_{-\infty}^{\infty} dz \, \EE\left\{ M_{{AA,jl}}^{\star}(z) U 
M_{AA,jl}(0)\right\}e^{-i(\beta_{l}-\beta_{j})z},
\end{equation}
 and
equation \eqref{eq:TE9}, that
\begin{equation}\label{eq:TE10}
\upsilon^{+\,ss'}_{lj} \geq 0\,,\quad 1\leq s,s'\leq 2.
\end{equation}
The remaining entries of $\Upsilon^{+ \star}_{\textbf{V} lj}$ do not
have a definite sign, in general.  The entries of $\Upsilon^{-
  \star}_{\textbf{V} lj}$ are computed similarly, and take the form
\begin{align}\label{eq:TE11}
\upsilon^{-\,ss'}_{lj} &=\big(
E_{s},\Upsilon^{-\star}_{\textbf{V}{lj}}(E_{s'})
\big)_{\mathfrak{X}_{l}} \nonumber\\ &=\big( E_{s}, V_{l}\,
\Upsilon^{-\,\star}_{jl} \big(V^{\star}_{j}\, E_{s'}\, V_{j} \big)\,
V^{\star}_{l} \big)_{\mathfrak{X}_{l}} \nonumber\\ &=\big(
E_{s},\big(V_{l}\,\textbf{C}_{l}\, V^{\star}_{l}\big)\,
E_{s}\big)_{\mathfrak{X}_{l}}\,\delta_{ss'}\,\delta_{jl}\geq 0\,,\quad
1\leq s,s'\leq 2.
\end{align}
We used that $E_{s}E_{s'}=E_{s}\delta_{ss'}$ for $s, s' \in\{1,2\}$
for the third equality, and properties of positive definite matrices
in the last inequality.  In summary, the $lj$--block is of the form
\begin{equation}
\label{eq:TE11C}
 \Big(\upsilon^{+\,ss'}_{lj} \Big) - \Big(\upsilon^{-\,ss'}_{lj} \Big)
 = \Bigg(\begin{array}{cc} \boldsymbol\upsilon^{+}_{lj} & \cdot
   \\ \cdot & \cdot
\end{array}\Bigg)-
\Bigg(\begin{array}{cc}
\boldsymbol\upsilon^{-}_{ll} & \cdot \\
\cdot & \cdot
\end{array}\Bigg)\,\delta_{lj},
\end{equation}
with $2\times2$ matrices $\boldsymbol\upsilon^{\pm}_{lj}$ with
nonnegative entries, and nonnegative diagonal matrices
$\boldsymbol\upsilon^{-}_{ll}$. Cases where $\mM_j$ or $\mM_l$ or both
are equal to one are handled similarly.

Now, define the $\mM \times \mM$ square matrix
\begin{equation}
\Lambda_{\textbf{V}}=\text{Block}\big(\boldsymbol\upsilon^{+}_{lj}\big)-
\text{Block}\big(\boldsymbol\upsilon^{-}_{lj}\big),
\end{equation}
where $\mM = \displaystyle \sum_{j=1}^N \mathfrak{M}_{j}$, and observe
that the dimension of its kernel satisfies
\begin{equation}\label{eq:TE12}
1\leq\text{Dim}\big\{\text{Kernel}
\big(\Upsilon^\star_{\textbf{V}\big|\mathfrak{D}}\big)\big\}\leq
\text{Dim}\big\{\text{Kernel}\big(\Lambda_{\textbf{V}}\big)\big\}.
\end{equation}
This is because for any
$\textbf{D}\in\text{Kernel}\big(\Upsilon^\star_{\textbf{V}\big|\mathfrak{D}}\big)$,
by construction of $\Lambda_{\textbf{V}}$, the $\mM$ vector formed by
the diagonal entries of $\textbf{D}$ lies in
$\text{Kernel}\big(\Lambda_{\textbf{V}}\big)$.  In particular,
$\big(1,1,\cdots,1\big)\in\text{Kernel}\big(\Lambda_{\textbf{V}}\big)$ so the 
entries in each row of $\Lambda_{\textbf{V}}$ sum to zero. Then, 
Gershgorin's circle theorem and the special structure of $\Lambda_{\textbf{V}}$ 
give that any eigenvalue $\lambda_{\textbf{V}}$ of $\Lambda_{\textbf{V}}$  satisfies
\[
\lambda_{\textbf{V}}  \le v_{jj}^{+ ss}  - v_{jj}^{-} + v_{jj}^{+ ss'} + 
\sum_{l\ne j}\sum_{s''=1}^{\mM_{l}}v_{jl}^{+ss''} = 0,
\]
for all $1 \le s,s' \le \mM_{j}$ and  $s \ne s'$.  This shows that the largest 
eigenvalue of $\lambda_{\textbf{V}}$ is zero.

Consider now the matrix
\begin{equation*}
\Lambda_{\textbf{V}}+\alpha \textbf{I},
\end{equation*}
with $\alpha$ a sufficiently large positive real number such that the diagonal is positive.  
The spectrum of this matrix is a translation by $\alpha$ of the spectrum of $\Lambda_{\textbf{V}}$, 
therefore, by the previous discussion $\alpha$ must be its largest positive real eigenvalue. 
Assuming that
  $\text{Block}\big(\boldsymbol\upsilon^{+}_{lj}\big)$ is irreducible,
  $\Lambda_{\textbf{V}}+\alpha \textbf{I}$ is a Perron--Frobenius matrix implying that
  its largest positive real eigenvalue, in this case $\alpha$, is simple, and therefore, zero is a simple eigenvalue of $\Lambda_{\textbf{V}}$. 
  Thus, the kernel of $\Lambda_{\textbf{V}}$ is one dimensional and so is the kernel of 
 $\Upsilon^\star_{\textbf{V}\big|\mathfrak{D}}$,  by \eqref{eq:TE12}. 
This is precisely \eqref{eq:TE7-1}.

It remains to show  that the condition \eqref{eq:conirr}
 ensures that 
  $\text{Block}\big(\boldsymbol\upsilon^{+}_{lj}\big)$ is irreducible,
  regardless of the transformation $\varphi_\textbf{V}$. We already know that 
 this is a matrix with nonnegative entries, but we need to show that 
 each entry is positive.  From
\eqref{eq:TE9} we observe that this means that\footnote{For dimension one, it is
  understood that the basis is given by $\{E_{1}\}=\{1\}$.}
\begin{equation*}
\upsilon^{+,
  ss'}_{lj}=\text{trace}\Big(E_{s}\,V_{l}\,\Upsilon^{+\,\star}_{jl}\big(
V^{\star}_{j}\, E_{s'}\,V_{j}\big)\,V^{\star}_{l}\Big)> 0\,,\quad
1\leq s \leq \mathfrak{M}_{l},\;\; 1\leq s' \leq \mathfrak{M}_{j},
\end{equation*} 
which is equivalent to saying that the diagonal entries of the matrix
\begin{equation*}
V_{l}\,\Upsilon^{+\,\star}_{jl}\big( V^{\star}_{j}\,
E_{s'}\,V_{j}\big)\,V^{\star}_{l}\,,\quad 1 \leq s' \leq
\mathfrak{M}_{j},
\end{equation*}
are positive.  To see that this is true, observe that $0\neq V^{\star}_{j}\,
E_{s'}\,V_{j}\in\mathfrak{C}_{j}$ for $1 \leq s' \leq
\mathfrak{M}_{j}$,  and that condition \eqref{eq:conirr} and definition \eqref{eq:defUpStar} imply
\begin{equation*}
\Upsilon^{+\,\star}_{jl}\big( V^{\star}_{j}\, E_{s'}\,V_{j}\big) >
0\,,\quad 1 \leq s' \leq \mathfrak{M}_{j}.
\end{equation*}
The result follows immediately from this. $\Box$

\section{Summary}
\label{sect:summary}
We presented a rigorous analysis of electromagnetic wave propagation
in waveguides with rectangular cross-section. The dielectric materials
that fill the waveguides are lossless isotropic, and contain numerous
weak inhomogeneities (imperfections). Consequently, their electric
permittivity $\ep(\vx)$ has small fluctuations in $\vx$ that are
uncertain in applications, which is why we model them with a random
process. The main result of the paper is a detailed characterization
of long range cumulative scattering effects in random waveguides.

Our method of analysis decomposes the electromagnetic wave field in
transverse electric and magnetic modes, which are propagating and
evanescent waves. The modes are coupled by scattering in the random
medium, so their amplitudes are random processes. They satisfy a
stochastic system of equations driven by the random fluctuations of
the permittivity $\ep(\vx)$, and can be analyzed at long range using
the diffusion approximation theorem. The result is a detailed
characterization of the loss of coherence of the modes, the
depolarization of the waves due to scattering, and the transport of
energy by the modes. Loss of coherence means that the expectation of
the mode amplitudes (the coherent part) is overwhelmed by their random
fluctuations (the incoherent part) once the waves travel beyond
distances called scattering mean free paths.  These are range scales
that depend on the modes, the wavelength and the covariance of the
fluctuations of $\ep(\vx)$. Our analysis of long range transport of
energy shows how scattering in the random medium redistributes the
energy among the waveguide modes. In particular, it identifies a range
scale, called the equipartition distance, beyond which the energy is
uniformly distributed among the waveguide modes, independent of the
initial conditions.

Our results have applications in long range communications and imaging
in waveguides. See for example the imaging and time reversal studies
\cite{BIT-10,QUANT-13,G-1} that are based on the theory of sound wave
propagation in random waveguides developed in
\cite{kohler77,Dozier,GS-08,ABG-12,BG-13}. Here we extended the theory
to electromagnetic wave propagation in random waveguides.

\section*{Acknowledgements}
The work of R. Alonso was partially supported by the AFOSR Grant
FA9550-12-1-0117 and the ONR Grant N00014-12-1-0256.  The work of L.
Borcea was partially supported by the AFOSR Grant FA9550-12-1-0117,
the ONR Grant N00014-12-1-0256 and by the NSF Grants DMS-0907746,
DMS-0934594.

\appendix
\section{The coupling coefficients}
\label{ap:1}
Let us denote by
\begin{align}
  \mathbb{O}_{\bD\bD}^{\epsilon} = \left[\partial_z \nu(\vx) -
  \frac{\epsilon}{2} \partial_z \nu^2(\vx) \right] I  \,,
\end{align}
and 
\begin{align}
  \mathbb{O}_{\bD\bU}^{\epsilon} = i k \nu(\vx) {\bf I} - \frac{i}{k}\nabla 
\nu(\vx) \nabla \cdot + \frac{i \epsilon}{2 k} \nabla 
\nu^2(\vx) \nabla \cdot\, ,
\end{align}
the perturbation operators in (\ref{eq:P3}) acting on $\bD$ and $\bU$,
where ${\bf I}$ is the identity. Similarly, we let
$\mathbb{O}_{\bU\bD}^{\epsilon}$ be the perturbation operator in
(\ref{eq:P4}) acting on $\bU$.  The coupling coefficients in equations
(\ref{eq:RD6}-\ref{eq:RD7}) are linear combinations of
\begin{align*}
  \alpha_{jj'}^{(ss')\epsilon}(z) &= \left(\sqrt{\frac{\beta_j}{k}}
    \de_{s1} + \sqrt{\frac{k}{\beta_j}} \de_{s2}\right) \left<
    \bphi_j^{(s)}, \mathbb{O}_{\bD\bD}^{\epsilon} \bphi_{j'}^{(s')}
  \right> \left( \sqrt{\frac{k}{\beta_{j'}}}
    \de_{s'1} + \sqrt{\frac{\beta_{j'}}{k}} \de_{s'2} \right)\,, \\
  \gamma_{jj'}^{(ss')\epsilon}(z) &= \left(\sqrt{\frac{\beta_j}{k}}
    \de_{s1} + \sqrt{\frac{k}{\beta_j}} \de_{s2}\right) \left<
    \bphi_j^{(s)}, \mathbb{O}_{\bD\bU}^{\epsilon} \bphi_{j'}^{(s')}
  \right> \left(\sqrt{\frac{\beta_{j'}}{k}} \de_{s'1} +
    \sqrt{\frac{k}{\beta_{j'}}} \de_{s'2}\right)\, ,
  \\
  \eta_{jj'}^{(ss')\epsilon}(z) &= \left(
      \sqrt{\frac{k}{\beta_j}} \de_{s1} + \sqrt{\frac{\beta_j}{k}}
    \de_{s2}\right) \left< \bphi_j^{(s)},
    \mathbb{O}_{\bU\bD}^{\epsilon} \bphi_{j'}^{(s')} \right>
  \left(\sqrt{\frac{k}{\beta_{j'}}} \de_{s'1} +
    \sqrt{\frac{\beta_{j'}}{k}} \de_{s'2} \right)\,,
\end{align*}
where $\left< \cdot, \cdot \right>$ denotes inner product in
$L^2(\Omega)$.  These expressions become after integration by parts
\begin{align}
  \alpha_{jj'}^{(ss')\epsilon}(z) = &\left(\sqrt{\frac{\beta_j}{k}}
    \de_{s1} + \sqrt{\frac{k}{\beta_j}} \de_{s2}\right)\left(
    \sqrt{\frac{k}{\beta_{j'}}} \de_{s'1} +
    \sqrt{\frac{\beta_{j'}}{k}} \de_{s'2} \right)
  \times \nonumber \\
  &\left[\partial_z \Psi_{jj'}^{(ss')}(z) - \frac{\epsilon}{2} \partial_z
    \psi_{jj'}^{(ss')}(z)\right]\,, \label{eq:defalpha}
\end{align}
and 
\begin{align}
  \gamma_{jj'}^{(ss')\epsilon}(z) =&
  \frac{i}{k}\left(\sqrt{\frac{\beta_j}{k}} \de_{s1} +
    \sqrt{\frac{k}{\beta_j}} \de_{s2}\right)
  \left(\sqrt{\frac{\beta_{j'}}{k}} \de_{s'1} +
    \sqrt{\frac{k}{\beta_{j'}}} \de_{s'2}\right) \times \nonumber \\
  &\hspace{-0.4in}\left\{ (k^2-\lambda_{j'} \de_{s'2})
    \Psi_{jj'}^{(ss')}(z) + \Theta_{jj'}^{(ss')}(z) + \frac{ \epsilon}{2}
    \left[\lambda_{j'} \de_{s'2}\psi_{jj'}^{(ss')}(z) -
      \theta_{jj'}^{(ss')}(z)\right]\right\}\, , \label{eq:defgamma}
\end{align}
and 
\begin{align}
  \eta_{jj'}^{(ss')\epsilon}(z) = \frac{i \lambda_j \de_{s1} }{
    \sqrt{k \beta_j}} \left( \sqrt{\frac{k}{\beta_{j'}}} \de_{s'1} +
    \sqrt{\frac{\beta_{j'}}{k}} \de_{s'2} \right) \left[
    \Psi_{jj'}^{(ss')}(z) - \epsilon \psi_{jj'}^{(ss')}(z)\right] \, ,
  \label{eq:defeta}
\end{align}
where we introduced the notation
\begin{align}
  \Psi_{jj'}^{(ss')}(z) &= \int_{\Omega} d \bx \, \nu(\vx)
  \phi_j^{(s)}(\bx)
  \cdot \phi_{j'}^{(s')}(\bx)\, , \nonumber \\
  \psi_{jj'}^{(ss')}(z) & = \int_{\Omega} d \bx \, \nu^2(\vx)
  \phi_j^{(s)}(\bx) \cdot \phi_{j'}^{(s')}(\bx)\, , \label{eq:PP}
\end{align}
and 
\begin{align}
  \Theta_{jj'}^{(ss')}(z) &= \int_{\Omega} d \bx \, \nu(\vx)\, \nabla \cdot
  \phi_j^{(s)}(\bx)
  \nabla \cdot \phi_{j'}^{(s')}(\bx)\, , \nonumber \\
  \theta_{jj'}^{(ss')}(z) &= \int_{\Omega} d \bx \, \nu^2(\vx)\, \nabla \cdot
  \phi_j^{(s)}(\bx) \nabla \cdot \phi_{j'}^{(s')}(\bx)\, .
  \label{eq:DD}
\end{align} 
Note that since $\nabla\cdot \phi_j^{(1)} = 0$, we have 
\begin{equation}
\label{eq:structD}
\Theta_{jj'}^{(ss')}(z) = \de_{s2} \de_{s'2} \Theta_{jj'}^{(22)}(z)\,,
\end{equation}
and similar for $\theta_{jj'}^{(ss')}(z)$.

The coupling coefficients in (\ref{eq:RD6}) are given by 
\begin{align}
  M_{AA,jj'}^{(ss')}(z) + \epsilon \, m_{AA,jj'}^{(ss')}(z) &=
  \frac{1}{2} \left[\alpha_{jj'}^{(ss')\epsilon}(z) +
    \gamma_{jj'}^{(ss')\epsilon}(z) +
    \eta_{jj'}^{(ss')\epsilon}(z)\right]\, ,
  \label{eq:MAA}\\
  M_{AB,jj'}^{(ss')}(z) + \epsilon \, m_{AB,jj'}^{(ss')}(z) &=
  \frac{1}{2} \left[\alpha_{jj'}^{(ss')\epsilon}(z) -
    \gamma_{jj'}^{(ss')\epsilon}(z) +
    \eta_{jj'}^{(ss')\epsilon}(z)\right] \, ,
  \label{eq:MAB}\\
M_{AV,jj'}^{(ss')}(z) + \epsilon \, m_{AV,jj'}^{(ss')}(z) &=
  \frac{1}{2} \left[\alpha_{jj'}^{(ss')\epsilon}(z)  +
    \eta_{jj'}^{(ss')\epsilon}(z)\right] \, ,
  \label{eq:MAV}\\
  M_{Av,jj'}^{(ss')}(z) + \epsilon \, m_{Av,jj'}^{(ss')}(z) &=
  -\frac{i}{2}(-1)^{s'}
  \gamma_{jj'}^{(ss')\epsilon}(z) \, ,
  \label{eq:MAv}
\end{align}
and those in (\ref{eq:RD7}) are 
\begin{align}
  M_{BA,jj'}^{(ss')}(z) + \epsilon \, m_{BA,jj'}^{(ss')}(z) &=
  \frac{1}{2} \left[\alpha_{jj'}^{(ss')\epsilon}(z) +
    \gamma_{jj'}^{(ss')\epsilon}(z) -
    \eta_{jj'}^{(ss')\epsilon}(z)\right]\, ,
  \label{eq:MBA}\\
  M_{BB,jj'}^{(ss')}(z) + \epsilon \, m_{BB,jj'}^{(ss')}(z) &=
  \frac{1}{2} \left[\alpha_{jj'}^{(ss')\epsilon}(z) -
    \gamma_{jj'}^{(ss')\epsilon}(z) -
    \eta_{jj'}^{(ss')\epsilon}(z)\right] \, ,
  \label{eq:MBB}\\
M_{BV,jj'}^{(ss')}(z) + \epsilon \, m_{BV,jj'}^{(ss')}(z) &=
  \frac{1}{2} \left[\alpha_{jj'}^{(ss')\epsilon}(z)  -
    \eta_{jj'}^{(ss')\epsilon}(z)\right] \, ,
  \label{eq:MBV}\\
  M_{Bv,jj'}^{(ss')}(z) + \epsilon \, m_{Bv,jj'}^{(ss')}(z) &=-
  \frac{i}{2} (-1)^{s'}\gamma_{jj'}^{(ss')\epsilon}(z) \, .
  \label{eq:MBv}
\end{align}
Because $\gamma_{jj'}^{(ss')\epsilon}(z)$ and
$\eta_{jj'}^{(ss')\epsilon}(z)$ are imaginary, we obtain the relations
\begin{align}
  M_{BB,jj'}^{(ss')}(z) &= \overline{M_{AA,jj'}^{(ss')}(z)}\, , \qquad
  m_{BB,jj'}^{(ss')}(z) = \overline{
    m_{AA,jj'}^{(ss')}(z)}\, ,  \label{eq:rel1}\\
  M_{BA,jj'}^{(ss')}(z) &= \overline{M_{AB,jj'}^{(ss')}(z)}\, , \qquad
  m_{BA,jj'}^{(ss')}(z) = \overline{
    m_{AB,jj'}^{(ss')}(z)}\, ,\label{eq:rel2}\\
  M_{BV,jj'}^{(ss')}(z)&=\overline{M_{AV,jj'}^{(ss')}(z)}\, , \qquad
  m_{BV,jj'}^{(ss')}(z) = \overline{
    m_{AV,jj'}^{(ss')}(z)}\, , \label{eq:rel3}\\
  M_{Bv,jj'}^{(ss')}(z) &= \overline{M_{Av,jj'}^{(ss')}(z)}\, , \qquad
  m_{Bv,jj'}^{(ss')}(z) =\overline{ m_{Av,jj'}^{(ss')}(z)}\, .
  \label{eq:rel4}
 \end{align}

 Similarly, the coupling coefficients in equations
 (\ref{eq:ev1}-\ref{eq:ev2}) satisfied by the evanescent modes are
 given by
\begin{align}
  M_{VA,jj'}^{(ss')}(z) + \epsilon \, m_{VA,jj'}^{(ss')}(z) &=
  \alpha_{jj'}^{(ss')\epsilon}(z) + 
    \gamma_{jj'}^{(ss')\epsilon}(z)\, ,
    \label{eq:MVA}\\
    M_{vA,jj'}^{(ss')}(z) + \epsilon \, m_{vA,jj'}^{(ss')}(z) &= i (-1)^s
    \eta_{jj'}^{(ss')\epsilon}(z) \, ,
    \label{eq:MvA}\\
    M_{VV,jj'}^{(ss')}(z) + \epsilon \, m_{VV,jj'}^{(ss')}(z) &=
    \alpha_{jj'}^{(ss')\epsilon}(z) \, ,
    \label{eq:MVV}\\
    M_{Vv,jj'}^{(ss')}(z) + \epsilon \, m_{Vv,jj'}^{(ss')}(z) &= -i
    (-1)^{s'} \gamma_{jj'}^{(ss')\epsilon}(z) \, ,
    \label{eq:MVv}\\
    M_{vV,jj'}^{(ss')}(z) + \epsilon \, m_{Vv,jj'}^{(ss')}(z) &= i(-1)^s
     \eta_{jj'}^{(ss')\epsilon}(z) \, ,
  \label{eq:MvV}
\end{align}
and by 
\begin{align}
  M_{VB,jj'}^{(ss')}(z) &= \overline{M_{VA,jj'}^{(ss')}(z)}\, , \qquad
  m_{VB,jj'}^{(ss')}(z) = \overline{m_{VA,jj'}^{(ss')}(z)}\, ,
    \label{eq:MVB}\\
M_{vB,jj'}^{(ss')}(z) & = \overline{M_{vA,jj'}^{(ss')}(z)}\, ,
\qquad m_{vB,jj'}^{(ss')}(z) = \overline{m_{vA,jj'}^{(ss')}(z)}\, .
  \label{eq:MvB}
\end{align}
In equations (\ref{eq:ev1}-\ref{eq:ev2}) we use only the leading part 
of these coefficients denoted by the capital letter $M$, as in 
$M_{VA,jj'}^{(ss')}(z)$. 

\section{Analysis of the evanescent modes}
\label{ap:2}
Consider the $2 \times 2$ system
\begin{equation}
\partial_z \bcV_j^{(s)}(z) + \beta_j \left(\begin{array}{cc}
0 ~ ~ & 1\\
1 ~ ~ &0 
\end{array} \right) \bcV_j^{(s)}(z) = \epsilon \bcF_j^{(s)}(z,\bcV)\, ,
\label{eq:A2.1}
\end{equation}
 for vectors 
\begin{equation}
\bcV_j^{(s)}(z) = \left( \begin{array}{c} V_j^{(s)}(z) \\ v_j^{(s)}(z)
  \end{array} \right)\, , 
\end{equation}
which we string together in the infinite vector $\bcV(z)$. The system
(\ref{eq:ev1}-\ref{eq:ev2}) is of this form, with right hand-side
\begin{equation}
\bcF_j^{(s)}(z,\bcV) = 
\left( \begin{array}{c} 
F_j^{(s)}(z)\\
f_j^{(s)}(z)\end{array} \right)  + \sum_{j'>N} \sum_{s'=1}^{\mM_{j'}}
\left( \begin{array}{cc} 
M_{VV,jj'}^{(ss')}(z) ~ ~ &  M_{Vv,jj'}^{(ss')}(z) \\
M_{vV,jj'}^{(ss')}(z) & 0 \end{array}\right) \bcV_{j'}^{(s')}(z)\,.
\label{eq:A2}
\end{equation}
We neglect the $O(\epsilon^2)$ remainder because it plays no role in
our setup.

We diagonalize (\ref{eq:A2.1}) by writing $\bcV^{(s)}_j$ in the
orthonormal basis $\{ \bu^+, ~ \bu^- \}$ of $\mathbb{R}^2$,  where 
$
\bu^\pm = \frac{1}{\sqrt{2}} \left(\begin{array}{c} 
1 \\ \pm 1 \end{array} \right)\, \label{eq:A2.3}
$
are the eigenvectors of matrix $\left(\begin{array}{cc}
    0 ~ ~ & 1\\
    1 ~ ~ &0 \end{array} \right)$ for eigenvalues $\pm 1$.
Explicitly, we write
\begin{equation}
\bcV_j^{(s)}(z) = \theta_{j}^{(s)+} (z) \bu^+ + 
 \theta_{j}^{(s)-} (z) \bu^-\, , \label{eq:A2.4}
\end{equation}
where $\theta_j^{(s)\pm}(z)$ are decaying and growing evanescent 
waves, satisfying
\begin{align}
\left[\partial_z \pm \beta_j \right] \theta_{j}^{(s)\pm} (z) = 
\epsilon \bu^\pm \cdot \bcF_j^{(s)}(z,\bcV)\, ,  \label{eq:A2.5}
\end{align}
the initial condition
\begin{equation}
  \theta_j^{(s)+}(0+) = \bu^+ \cdot \bcV_j^{(s)}(0+) = 
\sqrt{2} E_{j,o}^{(s)}\,,\label{eq:A2.6}
\end{equation}
and the end condition
\begin{equation}
\theta_j^{(s)-}(z_{\rm max}) =0.\label{eq:A2.7}
\end{equation}
We obtain after integrating equations (\ref{eq:A2.5}) and using 
(\ref{eq:A2.4}) that 
\begin{align}
  V_{j}^{(s)}(z) =& E_{j,o}^{(s)} e^{-\beta_j z} +
  \frac{\epsilon}{\sqrt{2}} \int_0^z d \zeta \bu^+ \cdot
  \bcF_j^{(s)}(\zeta,\bcV) e^{-\beta_j(z-\zeta)} - \nonumber \\&
  \frac{\epsilon}{\sqrt{2}} \int_z^{z_{\rm max}} d \zeta \, \bu^-
  \cdot \bcF_j^{(s)}(\zeta,\bcV) e^{\beta_j(z-\zeta)}\,,
\label{eq:A2.8}
\end{align} 
and 
\begin{align}
  v_{j}^{(s)}(z) =& E_{j,o}^{(s)} e^{-\beta_j z} +
  \frac{\epsilon}{\sqrt{2}} \int_0^z d \zeta \bu^+ \cdot
  \bcF_j^{(s)}(\zeta,\bcV) e^{-\beta_j(z-\zeta)} + \nonumber \\&
  \frac{\epsilon}{\sqrt{2}} \int_z^{z_{\rm max}} d \zeta \, \bu^-
  \cdot \bcF_j^{(s)}(\zeta,\bcV) e^{\beta_j(z-\zeta)}\,,
\label{eq:A2.9}
\end{align} 
for $j > N$ and $1 \le s \le \mM_j$. 

Equations (\ref{eq:A2.8}-\ref{eq:A2.9}) form an infinite system of
integral equations for the vector $\bcV(z)$, which we write in compact
form as
\begin{equation}
\left[ \mathbb{I} - \epsilon \mathbb{Q} \right] \bcV(z) =
{\bf F}(z)\, , 
\label{eq:A2.10}
\end{equation}
with infinite vector ${\bf F}(z)$ given by the terms in the right
hand-side of (\ref{eq:A2.8}-{\ref{eq:A2.9}) that are independent of
  $\bcV$.  The operator in the left hand-side is a perturbation of the
  identity $\mathbb{I}$, with $\mathbb{Q}$ the linear integral
  operator that takes the infinite vector $\bcV(z)$ and returns the
  infinite vector obtained by concatenating the entries
\begin{align*}
  \frac{1}{\sqrt{2}} \int_0^z d \zeta \, \sum_{j'>N}
  \sum_{s'=1}^{\mM_{j'}} \bu^+ \cdot 
\left(\begin{array}{cc}
M_{VV,jj'}^{(ss')}(\zeta) ~ ~ & 
    M_{Vv,jj'}^{(ss')}(\zeta)\\
      M_{vV,jj'}^{(ss')}(\zeta) & 0 \end{array} \right) 
  \bcV_{j'}^{(s')}(\zeta) e^{-\beta_j(z-\zeta)}\, \mp \\
\frac{1}{\sqrt{2}}
    \int_z^{z_{\rm max}} d \zeta \, \sum_{j'>N} \sum_{s'=1}^{\mM_{j'}}
      \bu^- \cdot \left(\begin{array}{cc}
          M_{VV,jj'}^{(ss')}(\zeta) ~ ~ & M_{Vv,jj'}^{(ss')}(\zeta)\\
          M_{vV,jj'}^{(ss')}(\zeta) & 0 \end{array} \right) 
      \bcV_{j'}^{(s')}(\zeta) e^{\beta_j(z-\zeta)}\, ,
\end{align*}
for $j > N$ and $1 \le s \le \mM_j$. We show in Lemma \ref{lem:bdd}
that $\mathbb{Q}$ is a bounded linear operator, so we can solve
(\ref{eq:A2.10}) using Neumann series
\[
\bcV(z) = \left[ \mathbb{I} + \epsilon \mathbb{Q} + \ldots \right]{\bf
  F}(z)\,,
\]
to obtain
\begin{align}
  V_{j}^{(s)}(z) =& E_{j,o}^{(s)} e^{-\beta_j z} + \frac{\epsilon}{2}
  \int_0^z d \zeta \left[ F_j^{(s)}(\zeta) + f_j^{(s)}(\zeta)\right]
  e^{-\beta_j(z-\zeta)} - \nonumber \\& \frac{\epsilon}{2}
  \int_z^{z_{\rm max}} d \zeta \left[F_j^{(s)}(\zeta)
    -f_j^{(s)}(\zeta)\right] e^{\beta_j(z-\zeta)} + O(\epsilon^2)\,,
\label{eq:A2.11}
\end{align} 
and 
\begin{align}
  v_{j}^{(s)}(z) =& E_{j,o}^{(s)} e^{-\beta_j z} + \frac{\epsilon}{2}
  \int_0^z d \zeta \left[ F_j^{(s)}(\zeta) + f_j^{(s)}(\zeta)\right]
  e^{-\beta_j(z-\zeta)} + \nonumber \\& \frac{\epsilon}{2}
  \int_z^{z_{\rm max}} d \zeta \left[F_j^{(s)}(\zeta)
    -f_j^{(s)}(\zeta)\right] e^{\beta_j(z-\zeta)} + O(\epsilon^2)\,,
\label{eq:A2.12}
\end{align} 
for $j > N$ and $1 \le s \le \mM_j$.  The result in Lemma \ref{lem.2}
follows from (\ref{eq:A2.11}-\ref{eq:A2.12}) and the approximations
\[
\int_0^z d\zeta \, \psi(\zeta) e^{-\beta_j(z-\zeta)} = 
\int_{-z}^0  \hspace{-0.1in}dt \, \psi(z + t)e^{-\beta t} \approx 
\int_{-\infty}^0  \hspace{-0.1in}dt \,  \psi(z + t)e^{-\beta t}\, ,
\]
and 
\[
\int_{z}^{z_{\rm max}}  \hspace{-0.1in}d \zeta \, \psi(\zeta) e^{\beta_j(z-\zeta)}
= \int_0^{z_{\rm max}-z}  \hspace{-0.1in}dt \, \psi(z + t) e^{-\beta_j |t|} 
\approx \int_0^\infty dt\, \psi(z + t) e^{-\beta_j |t|} \, ,
\]
for an arbitrary bounded function $\psi(z)$. The error in these
approximations is similar to $e^{-\beta_j z}$, and we can neglect it
for large $z$.
\subsection{Coupling of the propagating modes via the evanescent ones}
\label{ap:2coeff}
The substitution of the evanescent components defined in Lemma
\ref{lem.2} in equations (\ref{eq:RD6}-\ref{eq:RD7}) gives a closed
system of equations for the amplitudes of the propagating modes.  The
effect of the evanescent modes is captured by the coefficients
$\epsilon^2 m_{AA,jj'}^{(ss')e}(z)$, $\epsilon^2
m_{AB,jj'}^{(ss')e}(z)$, $\epsilon^2 m_{BA,jj'}^{(ss')e}(z)$ and
$\epsilon^2 m_{BB,jj'}^{(ss')e}(z)$ in equations
(\ref{eq:C1}-\ref{eq:C2}). We write explicitly just the first of them
\begin{align}
  m_{AA,jj'}^{(ss')e}(z)=& \frac{1}{2}\sum_{l>N} \sum_{q = 1}^{\mM_l}
  \left[ \int_{-\infty}^\infty \hspace{-0.1in}d \zeta \, M_{AV,jl}^{(sq)}(z)
    M_{vA,lj'}^{(qs')}(z+\zeta) e^{i \beta_j' \zeta - \beta_l|\zeta|}
    + \right. \nonumber \\
  & \int_{-\infty}^\infty \hspace{-0.1in} d \zeta \, M_{Av,jl}^{(sq)}(z)
  M_{VA,lj'}^{(qs')}(z+\zeta)
  e^{i \beta_j' \zeta - \beta_l|\zeta|} + \nonumber \\
  & \int_{0}^\infty \hspace{-0.1in}d
  \zeta \, M_{AV,jl}^{(sq)}(z) \sum_{r=0}^1 (-1)^r M_{VA,lj'}^{(qs')}(z-
  (-1)^r\zeta) e^{-i(-1)^r \beta_j'\zeta - \beta_l\zeta} +
  \nonumber \\
  &\left. \int_{0}^\infty  \hspace{-0.1in} d
  \zeta \, M_{Av,jl}^{(sq)}(z) \sum_{r=0}^1 (-1)^r
  M_{vA,lj'}^{(qs')}(z-(-1)^r\zeta) e^{-i (-1)^r \beta_j' \zeta
    -\beta_l\zeta} \right]\, .
\label{eq:MAAe}
\end{align}
The other three coefficients are similar.

\subsection{Proof that $\mathbb{Q}$ is a bounded operator}
\label{ap:2proof}
The operator $\mathbb{Q}$ is defined by
\begin{equation*}
\big(\mathbb{Q}\,\boldsymbol{\mathcal{V}}\big)^{(s)}_{j}(z) = \left(
\begin{array}{c}
\big(\mathbb{Q}\,V\big)^{(s)}_{j}(z) \\
\big(\mathbb{Q}\,v\big)^{(s)}_{j}(z)
\end{array}\right), 
\end{equation*}
for $j>N$ and $1\leq s\leq \mathfrak{M}_{j}$, where
\begin{align*}
  \big(\mathbb{Q}\,V\big)^{(s)}_{j}(z) =
  &\frac{1}{2}\int^{z}_{0}d\zeta\,
  e^{-\beta_{j}(z-\zeta)}\sum_{j'>N}\sum^{\mathfrak{M}_{j'}}_{s'=1}
  \Big[\big[\alpha^{(ss')}_{jj'}(\zeta) + i\big(-1\big)^{s}
  \eta^{(ss')}_{jj'}(\zeta)\big] V^{(s')}_{j'}(\zeta) -\\
  & \hspace{1.8in}i(-1)^{s'}\gamma^{(ss')}_{jj'}(\zeta)\,v^{(s')}_{j'}
  (\zeta) \Big] - \\
  & \frac{1}{2}\int^{z_{\text{max}}}_{z}\hspace{-0.1in}d\zeta\,
  e^{-\beta_{j}(\zeta - z)}\sum_{j'>N}\sum^{\mathfrak{M}_{j'}}_{s'=1}
  \Big[\big[\alpha^{(ss')}_{jj'}(\zeta) - i\big(-1\big)^{s}
  \eta^{(ss')}_{jj'}(\zeta)\big] V^{(s')}_{j'}(\zeta) -\\
  & \hspace{1.8in}i(-1)^{s'}\gamma^{(ss')}_{jj'}(\zeta)\,v^{(s')}_{j'}
  (\zeta) \Big].
\end{align*}
The components $\big(\mathbb{Q}v\big)^{(s)}_{j}(z)$ are similar, with
addition (instead of subtraction) of the integrals above.

\begin{lemma}
\label{lem:bdd}
The linear operator $\mathbb{Q}$ is bounded in the space of square
summable sequences of $L^{2}(\mathbb{R}^{+})$ vector--functions with
$w$--weights, equipped with the norm
\begin{equation*}
  \| \boldsymbol{\mathcal{V}} \|_{w} = \left(
    \sum_{j>N}\sum^{\mathfrak{M}_{j}}_{s=1} 
    \big(\beta_{j}\,w^{(s)}_{j}\| 
    \boldsymbol{\mathcal{V}}^{(s)}_{j} \|_{L^{2}(\mathbb{R}^{+})} 
    \big)^{2} \right)^{1/2} \;,\quad w^{(s)}_{j} = 
  \left(\sqrt{\frac{k}{\beta_{j}}}\delta_{s1} + 
    \sqrt{\frac{\beta_{j}}{k}}\delta_{s2} \right).
\end{equation*}
\end{lemma}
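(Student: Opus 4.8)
The plan is to view $\mathbb{Q}$ as a composition of a family of one–dimensional convolution operators in $z$ (one for each output mode $j>N$, with rate $\beta_j$) acting on a discrete coupling operator in the mode indices $(j,s)$, and to bound it by Young's inequality in $z$ followed by a Schur test in the indices. First I would split every component into its causal part $\int_0^z$ and anti–causal part $\int_z^{z_{\rm max}}$. Each is a convolution against the one–sided exponential $e^{-\beta_j t}$, $t\ge 0$, whose $L^1$ norm is $1/\beta_j$; hence by Young's inequality each integral maps $L^2(\mathbb{R}^+)$ into itself with operator norm at most $1/\beta_j$. This is the decisive step: it produces a gain of $1/\beta_j$ in the output mode, which cancels the factor $\beta_j$ that multiplies $\|(\mathbb{Q}\bcV)^{(s)}_j\|_{L^2}$ in the weighted norm $\|\cdot\|_w$.

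Next I would bound the coupling coefficients pointwise in $z$. Since $\nu$ has almost surely bounded derivatives, the coefficients $\alpha_{jj'}^{(ss')}$, $\gamma_{jj'}^{(ss')}$, $\eta_{jj'}^{(ss')}$ are controlled uniformly in $z$ by the overlap integrals $\Psi_{jj'}^{(ss')}$, $\Theta_{jj'}^{(ss')}$ and their $z$–derivatives. The weights $w^{(s)}_j$ are designed precisely to cancel the normalization prefactors in (\ref{eq:defalpha})--(\ref{eq:defeta}): using $\big(\sqrt{\beta_j/k}\,\de_{s1}+\sqrt{k/\beta_j}\,\de_{s2}\big)=1/w^{(s)}_j$, the effective discrete kernel obtained after the Young estimate and after dividing and multiplying by the input weight $\beta_{j'}w^{(s')}_{j'}$ equals $\frac{w^{(s)}_j}{\beta_{j'}w^{(s')}_{j'}}\sup_z|\mathrm{coeff}_{jj'}^{(ss')}|$. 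For the $\alpha$ (that is, $M_{VV}$) term this reduces to $\beta_{j'}^{-1}\sup_z|\partial_z\Psi_{jj'}^{(ss')}|$, which is already small since $|\partial_z\Psi|\le C$ by normalization of $\bphi$. The delicate contributions come from the $\eta$ (that is, $M_{vV}$) term and from the $\Theta$ part of the $\gamma$ (that is, $M_{Vv}$) term: there the factors $\lambda_j$ and $\nabla\cdot\bphi^{(2)}_{j'}\sim\sqrt{\lambda_{j'}}$ survive, and after inserting the weights and using $\lambda_j\approx\beta_j^2$ the effective kernel grows like $(\beta_j/\beta_{j'})$ times a mode overlap of $\nu$.

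The main obstacle is therefore to show that these $(\beta_j/\beta_{j'})$–weighted kernels still pass a Schur test, and this is exactly where the smoothness of $\nu$ enters. Writing $\lambda_j\Psi_{jj'}^{(ss')}=-\int_\Omega\nu\,(\Delta\bphi_j^{(s)})\cdot\bphi_{j'}^{(s')}$ and integrating by parts twice, using $\Delta\bphi_{j'}^{(s')}=-\lambda_{j'}\bphi_{j'}^{(s')}$ and the boundary conditions (\ref{eq:P9}) to dispose of boundary terms, gives the commutator identity
\begin{equation*}
(\lambda_j-\lambda_{j'})\,\Psi_{jj'}^{(ss')}(z)=-\int_\Omega d\bx\,\bphi_j^{(s)}\cdot\big[(\Delta\nu)\,\bphi_{j'}^{(s')}+2(\nabla\nu\cdot\nabla)\bphi_{j'}^{(s')}\big],
\end{equation*}
and an analogous identity for $\Theta_{jj'}^{(ss')}$. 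Because $\nu\in C^2$ with bounded derivatives, the right-hand side is controlled by $C(1+\sqrt{\lambda_{j'}})$ through Cauchy--Schwarz and $|\nabla\bphi_{j'}|\sim\sqrt{\lambda_{j'}}|\bphi_{j'}|$; iterating the identity a second time, which the $C^2$ regularity permits, upgrades this to decay of order $|\lambda_j-\lambda_{j'}|^{-2}$ in the overlap integrals. This decay is what beats the $\beta_j/\beta_{j'}$ growth. With it in hand, the final step is to verify that the weighted row and column sums of the discrete kernel are finite and uniformly bounded; here one must control the sums $\sum_{j'>N}$ against the two–dimensional Weyl density of the eigenvalues $\lambda_{j'}$, and the $|\lambda_j-\lambda_{j'}|^{-2}$ decay together with the $1/\beta_{j'}$ gains is precisely what renders these sums convergent. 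The Schur test then yields $\|\mathbb{Q}\bcV\|_w\le C\,\|\bcV\|_w$ with $C$ independent of $\bcV$ and, almost surely, of the realization of $\nu$.
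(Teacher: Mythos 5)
Your opening step coincides with the paper's: splitting $\mathbb{Q}$ into its causal and anti\-causal parts and applying Young's inequality in $z$, with $\beta_j\,\|e^{-\beta_j z}\|_{L^1(\mathbb{R}^+)}=1$ exactly canceling the factor $\beta_j$ in the weighted norm. The argument diverges, and breaks down, at the point where you must sum over the mode indices. Your commutator identity produces decay in the \emph{eigenvalue differences} $|\lambda_j-\lambda_{j'}|$, whereas the paper's integration by parts in $\bx$ against the explicit trigonometric eigenfunctions produces decay in the \emph{integer index differences} $(n-n')$, $(l-l')$. This distinction is decisive. Even granting your best-case bound $|\Psi_{jj'}^{(ss')}|\lesssim \lambda_{j'}/|\lambda_j-\lambda_{j'}|^{2}$, the Schur row sums you must control behave like $\sum_{j'}\sqrt{\lambda_j\lambda_{j'}}/|\lambda_j-\lambda_{j'}|^{2}$, and these are \emph{not} uniformly bounded in $j$: in two dimensions the spectrum $\lambda_j=(\pi j_1/L_1)^2+(\pi j_2/L_2)^2$ has near-degenerate pairs, and for irrational $(L_1/L_2)^2$ the gaps $|\lambda_j-\lambda_{j'}|$ become arbitrarily small along subsequences. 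Weyl-density counting only controls the \emph{average} gap, not these small denominators; closing your Schur test would require a Diophantine condition on the aspect ratio that the lemma does not assume. The paper avoids the issue entirely: since $n-n'$ and $l-l'$ are nonzero integers, the kernels $\hat\nu_{x_1x_2}(n-n',l-l',z)/[(n-n')(l-l')]$ have no small denominators, the $j'$-sums are genuine discrete convolutions with $\ell^1$ (or Hilbert-transform type, Lemma \ref{lem:tech}) kernels, and Young's inequality for discrete convolutions gives the uniform bound.

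Two further steps in your proposal are also incorrect as stated. First, the boundary conditions (\ref{eq:P9}) do \emph{not} dispose of the boundary terms in your commutator identity: Green's identity applied componentwise leaves boundary integrals involving derivatives of $\nu$ restricted to $\partial\Omega$ (the eigenfunction components and their normal derivatives do not both vanish on the lateral sides). These surviving terms are precisely the quantities $\nu^{c}_{x_1x_2}$, $\hat\nu^{b}_{x_1^2x_2}$, $\hat\nu^{b}_{x_1x_2^2}$ in the paper's proof, which require the trace theorem to control. Second, iterating the commutator identity does not come for free under $C^2$ regularity: after one application the integrand already contains $\Delta\nu$ and $\nabla\nu\cdot\nabla\bphi_{j'}^{(s')}$, so a second application generates $\Delta^2\nu$ and third-order derivatives of $\nu$. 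This is consistent with the paper's final estimate $\|\mathbb{Q}\|\le C\sup_{z\ge 0}\|\nu(\cdot,z)\|_{H^4(\Omega)}$; four derivatives, not two, are what the quadratic decay costs.
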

\begin{proof}
We present in detail the estimation of the most critical terms in
$\mathbb{Q}$, involving the processes $\gamma^{(ss')}_{jj'}(z)$.  The
remaining terms are treated similarly.  We rewrite the processes
$\gamma^{(ss')}_{jj'}(z)$ as
\begin{align}
  i\big(-1\big)^{s}\gamma^{(ss')}_{jj'}(z)&= \frac{w^{(s')}_{j'}
  }{ w^{(s)}_{j} }\Big(\frac{\beta_{j'}}{k} \delta_{s'1} -
  \frac{k}{\beta_{j'}}\delta_{s'2} \Big)
  \frac{1}{k}\Big[\big(k^{2} -
  \delta_{s'2}\lambda_{j'}\big)\Psi_{jj'}^{(ss')}(z) + \Theta_{jj'}^{(ss')}(z)
 \Big] \nonumber \\
  &=:\frac{w^{(s')}_{j'} }{ w^{(s)}_{j}
  }\Big(\frac{\beta_{j'}}{k}\delta_{s'1} -
  \frac{k}{\beta_{j'}}\delta_{s'2}
  \Big)\,\tilde{\gamma}^{(ss')}_{jj'}(z), \label{AEe1}
\end{align}
and introduce the auxiliary operator,
\begin{align*}
  \big(\tilde{\mathbb{Q}}_{\gamma} v\big)^{(s)}_{j}(z) &=
  \frac{1}{2}\int^{z}_{0}d\zeta\,e^{-\beta_{j}(z-\zeta)}
  \sum_{j'>N}\sum^{\mathfrak{M}_{j'}}_{s'=1}
  \Big(\frac{\beta_{j'}}{k}\delta_{s'1} - \frac{k}{\beta_{j'}}
  \delta_{s'2}\Big)\tilde{\gamma}^{(ss')}_{jj'}(\zeta)\,v^{(s')}_{j'}
  (\zeta) .
\end{align*}
We show that this operator is bounded in the space of square summable
sequences of $L^{2}(\mathbb{R}^{+})$ vector--functions equipped with
the norm
\begin{equation*}
 \| \boldsymbol{\mathcal{V}} \|
  = \left( \sum_{j>N}\sum^{\mathfrak{M}_{j}}_{s=1} \big(\beta_{j}\|
    \boldsymbol{\mathcal{V}}^{(s)}_{j} \|_{L^{2}(\mathbb{R}^{+})}
    \big)^{2} \right)^{1/2}.
\end{equation*}
Indeed,  Young's inequality for convolutions implies
\begin{align}\label{AEe2}
  \| \tilde{\mathbb{Q}}_{\gamma} v \|^{2} &= \sum_{j>N}
  \sum^{\mathfrak{M}_{j}}_{s=1}\beta^{2}_{j}\,\|
  \big(\tilde{\mathbb{Q}}_{\gamma}v\big)^{(s)}_{j}(z)\|^{2}_{L^{2}
    (\mathbb{R}^{+})} \nonumber \\
  & \hspace{-0.5in}\leq \frac{1}{2^{p}}\sum_{j>N}
  \sum^{\mathfrak{M}_{j}}_{s=1}\beta^{2}_{j}
  \| e^{-\beta_{j} z} \|^{2}_{L^{1}(\mathbb{R}^{+})}  \Big\| \sum_{j'>N}\sum^{\mathfrak{M}_{j'}}_{s'=1}
  \Big(\frac{\beta_{j'}}{k}\delta_{s'1} - \frac{k}{\beta_{j'}}
  \delta_{s'2}\Big)\tilde{\gamma}^{(ss')}_{jj'}(z)\,v^{(s')}_{j'}(z)
  \Big\|^{2}_{L^{2}(\mathbb{R}^{+})}  \nonumber \\
  &\hspace{-0.5in}\leq \frac{1}{2}\sum_{j>N} \sum^{\mathfrak{M}_{j}}_{s=1}
  \Big(\text{a}^{(s)}_{j} + \text{b}^{(s)}_{j}\Big),
\end{align}
where we used that $\beta_{j}\, \| e^{-\beta_{j} z}
\|_{L^{1}(\mathbb{R}^{+})}=1$, and let
\begin{align}
  \text{a}^{(s)}_{j} :&= \Big\|
  \sum_{j'>N}\sum^{\mathfrak{M}_{j'}}_{s'=1}
  \frac{\beta_{j'}}{k}\,\delta_{s'1}\,\tilde{\gamma}^{(ss')}_{jj'}(z)\,
  v^{(s')}_{j'}(z)  \Big\|^{2}_{L^{2}(\mathbb{R}^{+})} \nonumber \\
  &= \int_{\mathbb{R}^{+}}dz\,\Big|\sum_{j'>N}\Psi_{jj'}^{(s1)}(z)
  \beta_{j'}v^{(1)}_{j'}(z) \Big|^{2}, \label{eq:defa}
\end{align}
and
\begin{align}
  \text{b}^{(s)}_{j} :&= \Big\|
  \sum_{j'>N}\sum^{\mathfrak{M}_{j'}}_{s'=1}
  \frac{k}{\beta_{j'}}\,\delta_{s'2}\,\tilde{\gamma}^{(ss')}_{jj'}(z)\,
  v^{(s')}_{j'}(z) \Big\|^{2}_{L^{2}(\mathbb{R}^{+})} \nonumber \\ & =
  \int_{\mathbb{R}^{+}}dz\,\Big|\sum_{j'>N}
  \sum^{\mathfrak{M}_{j'}}_{s'=1} \frac{\de_{s'2}}{\beta^2_{j'}} \Big[
    \big(k^{2} - \delta_{s1}\,\lambda_{j'}\big)\Psi_{jj'}^{(ss')}(z) +
    \nonumber \\ & \hspace{3cm}
    \delta_{s2}\big(\Theta_{jj'}^{(ss')}(z) -
    \lambda_{j'}\Psi_{jj'}^{(ss')}(z)\big) \Big]
  \beta_{j'}v^{(s')}_{j'}(z) \Big|^{2}\, . \label{eq:defb}
\end{align}

To estimate $\text{a}^{(s)}_j$, set $j=(n,l)$ and $j'=(n',l')$ and use
integration by parts to obtain for $n\neq n'$ and $l\neq l'$
\begin{align*}
  \Psi_{jj'}^{(s1)}(z) &= \int_{\Omega}d\textbf{x}\,\nu(\vx)\,
  \phi^{(s)}_{j}(\textbf{x})\cdot\phi^{(1)}_{j'}(\textbf{x}) \\&= \frac{
    \pi^2}{(L_{1}L_{2})^2} \frac{l\,n'}{
    \sqrt{\lambda_{j}\lambda_{j'}}}\int_{\Omega}d\textbf{x}\,
  \nu(\vx)\,\cos\big(\tfrac{\pi(n-n')x_1}{L_{1}}\big)\,
  \cos\big(\tfrac{\pi(l-l')x_2}{L_{2}}\big) + \cdots \\
  & =: \frac{l\,n'\,\hat \nu_{x_1x_2}(n-n',l-l',z) }{2\sqrt{L_1
      L_2}\,(n-n')(l-l')\sqrt{\lambda_{j}\lambda_{j'}}} + \cdots.
\end{align*}
Here we let 
\[
\hat \nu_{x_1 x_2}(n,l,z) = \frac{2}{\sqrt{L_1 L_2}} \int_\Omega d \bx
\, \partial^2_{x_1x_2} \nu(\vx) \sin \left(\frac{\pi n
    x_1}{L_1}\right) \sin \left(\frac{\pi l
    x_2}{L_2}\right)\, .
\]
The dots stand for three similar terms in the case $s=2$ and four in
the case $s=1$, having the rest of index combinations $n\pm n'$ and $l
\pm l'$.  For the cases $n=n'$ or $l=l'$ we assume for convenience
that the integral of $\nu$ over $x_1$ or $x_2$ is zero.  Then, with
the understanding that the sums are performed over $n\neq n'$ or
$l\neq l'$, we have
\begin{equation}\label{AEe3}
  \text{a}^{(s)}_{j} \leq \frac{1}{L_{1} L_{2}} \int_{\mathbb{R}^{+}}dz\, 
  \Big|\sum_{j'>N}\frac{l\,n'\,\hat \nu_{x_1x_2}(n-n',l-l',z) }{
    (n-n')(l-l')\sqrt{\lambda_{j}\lambda_{j'}}}\,
  \beta_{j'}v^{(1)}_{j'}(z)\Big|^{2} + \cdots,
\end{equation}
where the $j'$--sum can be regarded as a discrete convolution
\begin{align*}
  \sum_{j'>N}\frac{l\,n'\,\hat \nu_{x_1x_2}(n-n',l-l',z) }{
    (n-n')(l-l')\sqrt{\lambda_{j}\lambda_{j'}}}\,\beta_{j'}\,&v^{(1)}_{j'}
  (z) \\
  = \frac{l}{\sqrt{\lambda_{j}}}&\Big(\frac{\hat \nu_{x_1x_2}(n',l',z)
  }{n'\,l'} \Big) \star
  \Big(\frac{n'\,\beta_{j'}}{\sqrt{\lambda_{j'}}}\,v^{(1)}_{j'}(z)
  \Big)(j)\, .
\end{align*}
Thus, invoking Young's inequality for discrete sums, and the simple
inequality 
\[l\,n' \leq
\frac{L_{1}L_{2}}{\pi^2}\sqrt{\lambda_{j}\lambda_{j'}},\] 
we obtain that 
\begin{align}\label{AEe4}
  \frac{1}{2}\sum_{j>N} \sum^{\mathfrak{M}_{j}}_{s=1}
  \text{a}^{(s)}_{j} &\leq C\,\int_{\mathbb{R}^{+}} dz\,
  \left(\sum_{n\neq 0}\sum_{l\neq0} \Big|\frac{\hat\nu_{x_1x_{2}}(n,l,z)
    }{n\,l}\Big| \right)^{2}\,\sum_{j>N} \big|\beta_{j}\,v^{(1)}_{ j
  }(z)\big|^{2}
  \nonumber\\
  &\leq C\, \Big(\sum_{n\neq0}
  \frac{1}{n^{2}}\Big)^{2}\,\sup_{z\geq0}\big\|\hat
  \nu_{x_1x_2}(n,l,z)
  \big\|^{2}_{\ell^{2}}\,\big\|v\big\|^{2}\,,
\end{align}
with constant $C:=C\big(p,L_{1},L_{2}\big)$. The last inequality follows
from Cauchy-Schwarz, and 
\[
\big\|\hat
  \nu_{x_1x_2}(n,l,z)
  \big\|^{2}_{\ell^{2}} = \sum_{n\ne 0} \sum_{l \ne 0} 
\big|\hat \nu_{x_1x_2}(n,l,z)\big|^{2}\, .
\] 
Moreover, since the set
$\big\{\tfrac{2}{\sqrt{L_{1}L_{2}}}\sin\big(\tfrac{\pi\,n}{L_{1}}x_{1}\big)
\sin\big(\tfrac{\pi\,l}{L_{2}}x_{2}\big)\big\}_{n,l}$ is a subset of
the orthonormal Fourier basis, we have 
\begin{equation*}
  \big\|\hat \nu_{x_1x_2}(n,l,z)
  \big\|_{\ell^{2}} \leq \|\partial^{2}_{x_1 x_2}
  \nu(\textbf{x}, z)\|_{L^{2}(\Omega)} \leq \|
  \nu(\textbf{x}, z)\|_{H^{2}(\Omega)},\quad \forall z\geq0.
\end{equation*}

It remains to estimate the term $\text{b}^{(s)}_{j}$ given by
(\ref{eq:defb}). The first term in the $j'$--sum is similar to that in
$\text{a}^{(s)}_{j}$, estimated above, because
$\big|k^{2}-\lambda_{j'}\big| = \beta^{2}_{j'}$.  The second term in
the sum appears only for $s = s' = 2$,
\begin{align*}
  \Theta_{jj'}^{(22)}(z) - \lambda_{j'}\Psi_{jj'}^{(22)}(z) &=
  \int_{\Omega}d\textbf{x}\,\nu(\vx)\,
  \nabla\cdot\phi^{(2)}_{j}\,\nabla\cdot\phi^{(2)}_{j'} -
  \lambda_{j'}\int_{\Omega}d\textbf{x}\,\nu(\vx)\,\phi^{(2)}_{j}\cdot\phi^{(2)}_{j'}
  \\ &= \frac{\sqrt{\lambda_{j'}}}{\sqrt{\lambda_{j}}L_1 L_2}
  \Big(\lambda_{j} - \Big(\frac{\pi}{L_{1}}\Big)^{2} n\,n' -
  \Big(\frac{\pi}{L_{2}}\Big)^{2} l\,l'\Big)\times
  \\ &\hspace{0.2in}\int_\Omega d \bx \, \nu(\vx) \cos\big(
  \frac{\pi(n-n') x_1}{L_1} \big) \cos\big( \frac{\pi(l-l') x_2}{L_2}
  \big) + \ldots,
\end{align*}
with the dots denoting similar terms, as before. Using integration
by parts twice,
\begin{align*}
 \Theta_{jj'}^{(22)}(z) - \lambda_{j'}\Psi_{jj'}^{(22)}(z)
 &=\frac{\sqrt{\lambda_{j'}}}{\pi^{2}\sqrt{\lambda_{j}}(n - n')(l -
   l')} \Big(\lambda_{j} - \Big(\frac{\pi}{L_{1}}\Big)^{2} n\,n' -
 \Big(\frac{\pi}{L_{2}}\Big)^{2} l\,l'\Big)\times
 \\ & \hspace{1.cm}\int_{\Omega}d\textbf{x}\,\partial^2_{x_1
   x_2}\nu(\vx)\,\sin\big(\tfrac{\pi(n-n')}{L_{1}}x_{1}\big)\,
 \sin\big(\tfrac{\pi(l-l')}{L_{2}} x_{2}\big) + \cdots \\ & =
 \frac{L_{1}L_{2}\sqrt{\lambda_{j'}}}{\pi^{4}\sqrt{\lambda_{j}}(n-n')^{2}
   (l-l')^{2}} \Big(\lambda_{j} - \Big(\frac{\pi}{L_{1}}\Big)^{2}
 n\,n' - \Big(\frac{\pi}{L_{2}}\Big)^{2} l\,l'\Big)
 \times\\ & \hspace{-2cm}\Big[\nu^{c}_{x_1 x_2}(z) +
   \hat\nu^{b}_{x^2_1 x_2}(n-n',z) +\hat\nu^{b}_{x_1 x^2_2}(l-l',z) +
   \hat\nu_{x^2_1 x^2_2}(n-n',l-l',z) \Big] + \cdots\,,
\end{align*}
where we let, 
\begin{align*}
\nu^{c}_{x_1
  x_2}(z):&=(-1)^{n-n'+l-l'}\partial^2_{x_1x_2}\nu(L_1,L_2,z) -
(-1)^{l-l'}\partial^2_{x_1 x_2}\nu(0,L_2,z) - \\ & \hspace{1.6cm}
(-1)^{n-n'}\partial^2_{x_1x_2}\nu(L_1,0,z) + \partial^2_{x_1
  x_2}\nu(0,0,z)\,, \\ \hat\nu^{b}_{x^2_1 x_2}(n-n',z):&=- (-1)^{l-l'}
\int_0^{L_1} d x_1 \, \partial^3_{x_1^2 x_2} \nu(x_1,L_2,z)\cos\big(
\frac{\pi(n-n') x_1}{L_1} \big) + \\ &\hspace{1.6cm} \int_0^{L_1} d
x_1 \, \partial^3_{x_1^2 x_2} \nu(x_1,0,z) \cos\big( \frac{\pi(n-n')
  x_1}{L_1} \big)\,, \\ \hat\nu^{b}_{x_1 x^2_2}(l-l',z):&=-
(-1)^{n-n'} \int_0^{L_2} d x_2 \, \partial^3_{x_1 x_2^2}
\nu(L_1,x_2,z)\cos\big( \frac{\pi(l-l') x_2}{L_2} \big) +
\\ &\hspace{1.6cm}\int_0^{L_2} d x_2 \, \partial^3_{x_1 x_2^2}
\nu(0,x_2,z) \cos\big( \frac{\pi(l-l') x_2}{L_2} \big)\,,
\\ \hat\nu_{x^2_1 x^2_2}(n-n',l-l',z):&=\int_\Omega d \bx \,
\partial^4_{x_1^2 x_2^2} \nu(\vx)\cos\big( \frac{\pi(n-n') x_1}{L_1}
\big)\cos\big( \frac{\pi(l-l') x_2}{L_2} \big).
\end{align*}
As before, the formula applies for $n\neq n'$ and $l \neq l'$.  The
cases with equality may be assumed null without loss of generality.
For the three terms considered below note that
\begin{multline*}
\frac{\sqrt{\lambda_{j'}}}{\sqrt{\lambda_{j}}(n-n')^{2} (l-l')^{2}}
\Big[\lambda_{j} - \Big(\frac{\pi}{L_{1}}\Big)^{2} n\,n' -
  \Big(\frac{\pi}{L_{2}}\Big)^{2} l\,l'\Big] =
\\ \frac{\sqrt{\lambda_{j'}}}{\sqrt{\lambda_{j}}}
\left[\Big(\frac{\pi}{L_{1}}\Big)^{2}\frac{n}{(n - n')(l - l')^{2}} -
  \Big(\frac{\pi}{L_{2}}\Big)^{2}\frac{l}{(n-n')^{2}(l - l')}\right].
\end{multline*}

\textit{(i) Term with $\nu^{c}_{x_1 x_2}(z)$. } This term is
controlled using Lemma \ref{lem:tech}.  For example, the critical term
that decays linearly in the index $n$ satisfies
\begin{align*}
\sum_{j>N}\Big|\sum_{j'>N}\frac{n\,\sqrt{\lambda_{j'}}\, \nu^{c}_{x_1
    x_2}(z)\big(\beta_{j'}v^{(2)}_{j'}(z)
  \big)}{\beta^{2}_{j'}\sqrt{\lambda_{j}}(n-n')(l-l')^{2}}\, \Big|^{2}
&\leq \Big|\frac{L_{1}}{\pi}\,
\nu^{c}_{x_{1}x_{2}}(z)\Big|^{2}\,\Big\|\frac{1}{n\,l^{2}}
\star\frac{\sqrt{\lambda_{j}}\,v^{(2)}_{j}(z)}{\beta_{j}}
\Big\|^{2}_{\ell^{2}}\\ &\leq C\,\Big|\frac{L_{1}}{\pi}\,
\nu^{c}_{x_{1}x_{2}}(z)\Big|^{2}\,\sum_{j>N} \big|
\beta_{j}\,v^{(2)}_{j}(z) \big|^{2}.
\end{align*}
The term with linear decay in the index $l$ is similar.

\textit{(ii) Terms with $\hat\nu^{b}_{x^2_1 x_2}(n-n',z)$ and
  $\hat\nu^{b}_{x_1 x^2_2}(l-l',z)$.} These are controlled using
either Lemma \ref{lem:tech}, or the standard Young's inequality for
discrete convolutions. For instance
\begin{align*}
\sum_{j>N}\Big|\hspace{-0.05in}
\sum_{j'>N}\frac{n\,\sqrt{\lambda_{j'}}\, \hat\nu^{b}_{x^{2}_1
    x_2}(n-n', z)\big(\beta_{j'}v^{(2)}_{j'}(z) \big)}{
  \beta^{2}_{j'}\sqrt{\lambda_{j}}(n-n')(l-l')^{2}}\,
\Big|^{2} \hspace{-0.05in}&\leq
\Big|\frac{L_{1}}{\pi}\Big|^{2}\,\Big\|\frac{\hat\nu^{b}_{x^{2}_1
    x_2}(n,
  z)}{n\,l^{2}}\star\frac{\sqrt{\lambda_{j}}\,v^{(2)}_{j}(z)}{\beta_{j}}
\Big\|^{2}_{\ell^{2}}\\ & \hspace{-3cm} \leq C\,
\Big|\frac{L_{1}}{\pi}\Big|^{2}\,\left(\sum_{n\neq0}\sum_{l\neq0}
\Big|\frac{\hat\nu^{b}_{x^{2}_1 x_2}(n,
  z)}{n\,l^{2}}\Big|\right)^{2}\sum_{j>N} \big|
\beta_{j}\,v^{(2)}_{j}(z) \big|^{2}\\ & \hspace{-3cm} \leq
C\,\Big(\sum_{n\neq0}\frac{1}{n^{2}}\Big)^{2}\big\|
\hat\nu^{b}_{x^{2}_1 x_2}(n, z) \big\|^{2}_{\ell^{2}} \sum_{j>N} \big|
\beta_{j}\,v^{(2)}_{j}(z) \big|^{2}.
\end{align*}

\textit{(iii) Term with $\hat\nu_{x^2_1 x^2_2}(n-n',l-l',z)$.}  This
term is estimated using Young's inequality for discrete convolutions.
For instance,
\begin{align*}
\sum_{j>N}\Big|\sum_{j'>N}&\frac{n\,\sqrt{\lambda_{j'}}\,
  \hat\nu_{x^{2}_1 x^{2}_2}(n-n', l - l',
  z)\big(\beta_{j'}v^{(2)}_{j'}(z)
  \big)}{\beta^{2}_{j'}\sqrt{\lambda_{j}}(n-n')(l-l')^{2}}\, 
\Big|^{2}  \\ 
&\leq
\Big|\frac{L_{1}}{\pi}\Big|^{2}\,\Big\|\frac{\hat\nu_{x^{2}_1
    x^{2}_2}(n,l,
  z)}{n\,l^{2}}\star\frac{\sqrt{\lambda_{j}}\,v^{(2)}_{j}(z)}{\beta_{j}}
\Big\|^{2}_{\ell^{2}}\\ &\leq C\,
\Big|\frac{L_{1}}{\pi}\Big|^{2}\,\left(\sum_{n\neq0}\sum_{l\neq0}
\Big|\frac{\hat\nu_{x^{2}_1 x^{2}_2}(n,l,
  z)}{n\,l^{2}}\Big|\right)^{2}\sum_{j>N} \big|
\beta_{j}\,v^{(2)}_{j}(z) \big|^{2}\\  &\leq
C\,\Big(\sum_{n\neq0}\frac{1}{n^{2}}\Big)^{2}\big\| \hat\nu_{x^{2}_1
  x^{2}_2}(n,l, z) \big\|^{2}_{\ell^{2}} \sum_{j>N} \big|
\beta_{j}\,v^{(2)}_{j}(z) \big|^{2}\, ,
\end{align*}
with constant $C:=C(w, L_1,L_2)$.  

Additionally, note that by the trace theorem
\begin{equation*}
\big|\nu^{c}_{x_{1}x_{2}}(z)\big|\leq
C\,\big\|\boldsymbol\tau\nu(\textbf{x},z)\big\|_{H^{3}(\partial\Omega)}\leq
C\,\big\|\nu(\textbf{x},z)\big\|_{H^{4}(\Omega)}\,,\quad \forall
z\geq0,
\end{equation*}
where $\boldsymbol\tau$ is the trace operator.  The processes
$\hat\nu^{b}_{x^2_1 x_2}(n,z)$ and $\hat\nu_{x^2_1 x^2_2}(n,l,z)$ are
Fourier coefficients of the process
$\partial^{3}\nu_{x^{2}_{1}x_{2}}(\vx)$,
$\textbf{x}\in\partial\Omega$, and
$\partial^{4}_{x^{2}_{1}x^{2}_{2}}\nu(\vx)$, $\textbf{x}\in\Omega$,
respectively, so we have
\begin{align*}
\big\|\hat\nu^{b}_{x^2_1 x_2}(n,z)\big\|_{\ell^{2}} &\leq
\|\boldsymbol\tau\nu(\textbf{x}, z)\|_{H^{3}(\partial\Omega)}\,,\quad
\text{and} \\ \big\|\hat\nu_{x^2_1 x^2_2}(n,l,z)\big\|_{\ell^{2}}
&\leq \|\nu(\textbf{x}, z)\|_{H^{4}(\Omega)}\,,\quad \forall z\geq0.
\end{align*}
Thus, we conclude that
\begin{equation}\label{AEe5}
  \frac{1}{2}\sum_{j>N} \sum^{\mathfrak{M}_{j}}_{s=1} \text{b}^{(s)}_{j} 
  \leq C\, \Big(\sum_{n\neq0} \frac{1}{n^{2}}\Big)^{2}\,\sup_{z\geq0}
  \big\|\nu(\textbf{x},z)\big\|^{2}_{H^{4}(\Omega)}\,\big\|v\big\|^{2}\,,
\end{equation}
for yet another constant $C:=C(w,L_{1}, L_{2})$.  Gathering \eqref{AEe2},
\eqref{AEe4} and \eqref{AEe5}, we obtain that the operator
$\tilde{\mathbb{Q}}_{\gamma}$ is bounded  
\begin{equation*}
  \| \tilde{\mathbb{Q}}_{\gamma}\| \leq C\,\sup_{z\geq0}
  \|\nu(\textbf{x}, z)\|_{H^{4}(\Omega)}.
\end{equation*}

Using the bounded auxilliary operator $\tilde{\mathbb{Q}}_{\gamma}$,
we now define the operator $\mathbb{Q}_{\gamma}$ that enters the
expression of $\mathbb{Q}$,
\begin{align*}
  \big(\mathbb{Q}_{\gamma}\,v\big)^{(s)}_{j}(z) :&= \frac{i}{2}
  \int^{z}_{0}d\zeta\,e^{-\beta_{j}(z-\zeta)}\sum_{j'>N}
  \sum^{\mathfrak{M}_{j'}}_{s'=1}(-1)^{s'}\gamma^{(ss')}_{jj'}(\zeta)\,
  v^{(s')}_{j'}(\zeta)\\
  &=\frac{1}{w^{(s)}_{j}}\big(\tilde{\mathbb{Q}}_{\gamma}\,
  \tilde{v}\big)^{(s)}_{j}(z)\,,\quad\;\; \tilde{v}^{(s)}_{j} =
  w^{(s)}_{j}\,v^{(s)}_{j}.
\end{align*}
It is bounded in the $w$--norm
\begin{equation*}
  \|\mathbb{Q}_{\gamma}v\|_{w}=\|\tilde{\mathbb{Q}}_{\gamma}
  \tilde{v}\|\leq\|\tilde{\mathbb{Q}_{\gamma}}\|\,\|\tilde{v}\|=
  \|\tilde{\mathbb{Q}_{\gamma}}\|\,\|v\|_{w}\, .
\end{equation*}
The remaining terms defining $\mathbb{Q}$ are estimated similarly.
\end{proof}

\vspace{0.1in}
\begin{lemma}
\label{lem:tech}
Let $j=(n,l)\in \mathbb{Z}\times\mathbb{Z}$ and define the convolution operator
\begin{equation*}
T(v)(j) = \Big(\frac{\kappa(l)}{n}\ast v\Big)(j),
\end{equation*}
for a sequence $v=\{v_{j}\}$. It satisfies the bound 
$ 
\big\|T(v)\big\|_{\ell^2}\leq \pi\big\| \kappa\big\|_{\ell^1}\big\|v\|_{\ell^2}.
$
\end{lemma}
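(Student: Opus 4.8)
The plan is to exploit the product structure of the kernel. Writing $j=(n,l)$, the operator reads
\[
T(v)(n,l) = \sum_{n'\neq 0}\sum_{l'}\frac{\kappa(l')}{n'}\,v(n-n',l-l'),
\]
where the $n'$-sum omits $n'=0$. Since the kernel factors as $\frac{1}{n'}\cdot\kappa(l')$, the operator factors as a composition $T=H\circ C$, where $C$ convolves in the $l$-variable only, $(Cv)(n,l)=\sum_{l'}\kappa(l')\,v(n,l-l')$, and $H$ is the discrete Hilbert transform acting in the $n$-variable only, $(Hw)(n,l)=\sum_{n'\neq 0}\frac{1}{n'}\,w(n-n',l)$. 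Because the two factors act on different indices they commute, and it suffices to bound each on $\ell^2(\mathbb{Z}^2)$. First I would dispatch $C$ by Young's inequality for discrete convolutions, applied in $l$ for each fixed $n$ and then summed in $n$:
\[
\|Cv\|_{\ell^2}^2=\sum_n\|(Cv)(n,\cdot)\|_{\ell^2}^2\le\|\kappa\|_{\ell^1}^2\sum_n\|v(n,\cdot)\|_{\ell^2}^2=\|\kappa\|_{\ell^1}^2\,\|v\|_{\ell^2}^2,
\]
so $\|C\|_{\ell^2\to\ell^2}\le\|\kappa\|_{\ell^1}$. The factor $\|\kappa\|_{\ell^1}$ in the statement comes entirely from this elementary step.

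The heart of the matter, and the source of the constant $\pi$, is the bound $\|H\|_{\ell^2\to\ell^2}\le\pi$, i.e. Hilbert's inequality for the discrete Hilbert transform. Since $H$ acts only in $n$, it is enough to prove the one-dimensional estimate and then sum over the parameter $l$ exactly as above. For $w$ of finite support (a dense subset of $\ell^2$) this is cleanest on the Fourier side: writing $\hat w(\theta)=\sum_n w(n)e^{-in\theta}$, the transform of $Hw$ equals $m(\theta)\,\hat w(\theta)$ with symbol
\[
m(\theta)=\sum_{n'\neq 0}\frac{e^{-in'\theta}}{n'}=-2i\sum_{n'=1}^\infty\frac{\sin(n'\theta)}{n'}=-i(\pi-\theta),\qquad \theta\in(0,2\pi),
\]
by the standard sawtooth Fourier series. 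Hence $\|m\|_{L^\infty(\mathbb{T})}=\sup_{\theta\in(0,2\pi)}|\pi-\theta|=\pi$, and Plancherel gives $\|Hw\|_{\ell^2}^2=\frac{1}{2\pi}\int_0^{2\pi}|m(\theta)|^2|\hat w(\theta)|^2\,d\theta\le\pi^2\|w\|_{\ell^2}^2$; density then extends the bound to all of $\ell^2$. Composing the two estimates yields $\|Tv\|_{\ell^2}=\|H(Cv)\|_{\ell^2}\le\pi\|Cv\|_{\ell^2}\le\pi\|\kappa\|_{\ell^1}\|v\|_{\ell^2}$, which is the claim.

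The hard part, and the one point I would defend carefully, is the multiplier identity for $H$: since $1/n'$ is not summable, the kernel lies outside $\ell^1$ and the symbol $m$ is only conditionally convergent, so the interchange of summation must be justified on the dense class of finitely supported $w$. This is legitimate because for such $w$ the expression $Hw(n)=\sum_{m\neq n}\frac{w(m)}{n-m}$ is a genuinely finite sum, and it lies in $\ell^2$ since $Hw(n)\sim\hat w(0)/n$ as $|n|\to\infty$; moreover $m=-i(\pi-\theta)$ is the $L^2(\mathbb{T})$ limit of the partial sums, so $\widehat{Hw}=m\,\hat w$ holds as an identity in $L^2(\mathbb{T})$, after which Plancherel applies verbatim. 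Everything else — the separation of variables and the two appeals to Young's inequality — is routine; alternatively one may simply invoke the classical fact that the discrete Hilbert transform has operator norm exactly $\pi$ on $\ell^2(\mathbb{Z})$ and skip the multiplier computation altogether.
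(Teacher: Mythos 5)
Your proof is correct and takes essentially the same route as the paper: both factor $T$ as the discrete Hilbert transform acting in $n$ composed with the $\kappa$-convolution acting in $l$, bounding the former by $\pi$ and the latter by Young's inequality, exactly as in your two-step estimate. The only difference is that you prove the norm-$\pi$ bound for the discrete Hilbert transform from scratch via the multiplier $m(\theta)=-i(\pi-\theta)$ and Plancherel, whereas the paper simply cites this classical fact.
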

\\\\
\emph{Proof:} Let $j'=(n',l')$ and define
$
\xi(n',l) := \sum_{l'}\kappa(l-l')v_{j'},
$
so that
\begin{equation*}
T(v)(j)=\sum_{n\neq n'} \frac{\xi(n',l)}{n-n'}.
\end{equation*}
Compute the $\ell^{2}$--norm,
\begin{equation*}
\big\|T(v)\big\|^{2}_{\ell^2}=\sum_{l}\left(\sum_{n}\Big|\sum_{n\neq
  n'} \frac{\xi(n',l)}{n-n'} \Big|^{2}\right)\leq
\pi^{2}\sum_{l}\big\|\xi(\cdot,l)\big\|^{2}_{\ell^{2}},
\end{equation*}
where we have used that, for each $l$, the discrete Hilbert transform
\cite{Gra} is bounded with norm $\pi$.  Thus,
\begin{equation*}
\big\|T(v)\big\|^{2}_{\ell^2}\leq\pi^{2}\sum_{l}\sum_{n}\big|\xi(n,l)
\big|^{2}=\pi^{2}\sum_{n}\left(\sum_{l}\Big|\sum_{l'}\kappa(l-l')v_{(n,l')}
\Big|^{2}\right).
\end{equation*}
Using Young's inequality, for each fixed $n$, leads to
\begin{equation*}
\big\|T(v)\big\|^{2}_{\ell^2}\leq\pi^{2}\Big(\sum_{l}\big|\kappa(l)
\big|\Big)^{2}\sum_{n}\big\|v(n,\cdot)\big\|^{2}_{\ell^{2}}=\pi^{2}\big\|
\kappa\big\|^{2}_{\ell^1}\big\|v\|^{2}_{\ell^2}. ~ ~ \Box
\end{equation*}

\section{Calculation of the matrix ${\bf Q}_j$}
\label{ap:explicit}
The expression of $\EE\{\widetilde m_{jj}(0)\}$ follows by direct 
calculation from 
\[
m_{j}(z) = m_{AA,jj}(z) + m_{AA,jj}^e(z)\,,
\]
definitions (\ref{eq:MAA}) and (\ref{eq:MAAe}), and integration by
parts. 

To write the contribution of the evanescent modes, let $\Psi_{jj'}(z)$
and $\Theta_{jj'}(z)$ be the matrices in $\mathbb{R}^{\mM_j \times
  \mM_{j'}}$ with entries given by the leading order stationary
processes in (\ref{eq:PP}-\ref{eq:DD}). They satisfy the
symmetry relations
\[
\Psi_{jj'}(z) = \Psi_{j'j}^T(z)\, , \qquad \Theta_{jj'}(z) = \Theta_{j'j}(z) =
\Theta_{j'j}^T(z)\, ,
\]
and we recall from (\ref{eq:structD}) that $\Theta_{jj'}(z)$ has only one
non-zero entry, for $(s,s') = 2$.  We obtain after straightforward 
calculations that
\begin{equation}
  \EE\left\{ m_{AA,jj}^{e}(0)\right\} = 
  \frac{i}{2} \left( \begin{array}{cc}
      \sqrt{\beta_j/k} & 0 \\
      0 & \sqrt{k/\beta_j} \end{array} \right) \mathcal{M}_j^{e}
  \left( \begin{array}{cc}
      \sqrt{k/\beta_j} & 0 \\
      0 & \sqrt{\beta_j/k} \end{array} \right)\, , \label{eq:AE1}
\end{equation}
where ${\mathcal M}_j^e$ is the $\mM_j \times \mM_j$ matrix with  entries
\begin{align}
  \mathcal{M}_j^{(ss')e} =&\sum_{l > N} \sum_{q=1}^{\mM_l}
  \frac{1}{\beta_j} \left[ \lambda_j \de_{s1} \EE \left\{
      \Psi_{jl}^{(sq)}(0) \Psi_{lj}^{(qs')}(0) \right\} - \EE\left\{
      \Psi_{jl}^{(sq)}(0) \Theta_{lj}^{(qs')}(0)
    \right\}\right] + \nonumber \\
  & \sum_{l > N} \sum_{q=1}^{\mM_l} \int_0^\infty ds \, e^{-\beta_l z}
  \sin (\beta_j z) \Big[ \EE\left\{ \Theta_{jl}^{(sq)}(0) \Psi_{lj}^{(qs')}(z)
  \right\} +
  \nonumber \\
  &\hspace{0.6in} \left(1 + \frac{\lambda_j}{\beta_j^2}
    \de_{s1}\right) \EE\left\{ \Psi_{jl}^{(sq)}(0) \Theta_{lj}^{(qs')}(z)
  \right\}\Big] + \nonumber \\
  & \sum_{l > N} \sum_{q=1}^{\mM_l} \int_0^\infty ds \, e^{-\beta_l z}
  \cos (\beta_j z) \Big[\frac{1}{\beta_j \beta_l} \EE\left\{
    \Theta_{jl}^{(sq)}(0) \Theta_{lj}^{(qs')}(z)
  \right\} + \nonumber \\
  &\hspace{0.6in}
  \beta_j \beta_l\left(1 + \frac{\lambda_j}{\beta_j^2}
    \de_{s1}\right) \left(\frac{\lambda_l}{\beta_l^2}
    \de_{q1}-1\right) 
 \EE \left\{ \Psi_{jl}^{(sq)}(0)
     \Psi_{lj}^{(qs')}(z)\right\}\Big]\, .\label{eq:AE1p}
\end{align}
Similarly, using the order $\epsilon$ terms in
(\ref{eq:PP}-\ref{eq:DD}), we write 
\begin{equation}
  \EE\left\{ m_{AA,jj}(0)\right\} = 
  \frac{i}{2} \left( \begin{array}{cc}
      \sqrt{\beta_j/k} & 0 \\
      0 & \sqrt{k/\beta_j} \end{array} \right) \mathcal{M}_j
  \left( \begin{array}{cc}
      \sqrt{k/\beta_j} & 0 \\
      0 & \sqrt{\beta_j/k} \end{array} \right)\, , \label{eq:AE2}
\end{equation}
with matrix ${\mathcal M}_j \in \mathbb{R}^{\mM_j \times \mM_j}$ defined by 
\begin{equation} {\mathcal M}_j^{(ss')} =
  \frac{\lambda_j}{4\beta_j}\de_{s' 2}\EE\{\psi_{jj}^{(ss')}(0)\} -
  \frac{\lambda_j}{2 \beta_j} \de_{s1}\EE\{\psi_{jj}^{(ss')}(0)\} -
  \frac{1}{4 \beta_j} \EE\{\theta_{jj}^{(ss')}(0)\}\, .
\label{eq:AE2p}
\end{equation}

In equations  (\ref{eq:AE1}) and (\ref{eq:AE2}) we assumed that 
$\mM_j = 2$. Otherwise we have 
\[
E\{m_{AA,jj}^e(0)\} = \frac{i}{2} {\mathcal M}_j^e\, , 
\qquad E\{m_{AA,jj}(0)\} = \frac{i}{2} {\mathcal M}_j\, , \qquad 
\mM_j = 1\, ,
\]
with scalar valued ${\mathcal M}_j^e$, ${\mathcal M}_j$ equal to the
$(s,s') = (1,1)$ entries in (\ref{eq:AE1p}) and (\ref{eq:AE2p}).  The
imaginary matrix $i {\boldsymbol \kappa}_j$ in equation (\ref{eq:E6})
is the sum of (\ref{eq:AE1}) and (\ref{eq:AE2}).

The matrix ${\bf Q}_j$ is given by 
\begin{equation}
{\bf Q}_j = \left( \begin{array}{cc}
      \sqrt{\beta_j/k} & 0 \\
      0 & \sqrt{k/\beta_j} \end{array} \right) \mathcal{U}_j
  \left( \begin{array}{cc}
      \sqrt{k/\beta_j} & 0 \\
      0 & \sqrt{\beta_j/k} \end{array} \right) + 
i {\boldsymbol \kappa}_j\, 
\label{eq:Qj}
\end{equation}
with matrix ${\mathcal U}_j \in \mathbb{C}^{\mM_j \times \mM_j}$. The real 
part of its entries is 
\begin{align}
 \operatorname{Re}\left[{\mathcal U}_j^{(ss')}\right] =& -\frac{1}{4} \sum_{l=1}^N \sum_{q=1}^{\mM_l}
\int_0^\infty \hspace{-0.1in}dz \, \cos[(\beta_l-\beta_j)z] \left(\frac{k^2}{\beta_j}
\de_{s1} + \beta_j \de_{s2}\right) \left(\frac{k^2}{\beta_l}
\de_{q1} + \beta_l \de_{q2}\right) \times \nonumber \\
& \hspace{0.5in}\EE\left\{ 
\left(\Psi_{jl}^{(sq)}(0) + \frac{\Theta_{jl}^{(sq)}(0)}{\beta_j^2}\right)
\left(\Psi_{lj}^{(qs')}(z) + \frac{\Theta_{lj}^{(qs')}(z)}{\beta_l^2}\right)
\right\},
\end{align}
and the imaginary part is 
\begin{align}
  \operatorname{Im}\left[{\mathcal U}_j^{(ss')}\right] =& \frac{1}{4}
  \sum_{l=1}^N \sum_{q=1}^{\mM_l} \Big[
  \left(\frac{\lambda_j}{\beta_j}\de_{s1}-
    \frac{\lambda_l}{\beta_l}\de_{q1} \right) \EE\left\{
    \Psi_{jl}^{(sq)}(0)\Psi_{lj}^{(qs')}(0)\right\} -\nonumber \\
  &\hspace{0.4in} \frac{1}{\beta_j} \EE\left\{
    \Psi_{jl}^{(sq)}(0)\Theta_{lj}^{(qs')}(0)\right\} + \frac{1}{\beta_l}
  \EE\left\{ \Theta_{jl}^{(sq)}(0)\Psi_{lj}^{(qs')}(0)\right\} \Big] -
  \nonumber \\
  & \frac{1}{4} \sum_{l=1}^N \sum_{q=1}^{\mM_l}\int_0^\infty
  \hspace{-0.1in}dz \, \sin[(\beta_l-\beta_j)z]
  \left(\frac{k^2}{\beta_j} \de_{s1} + \beta_j
    \de_{s2}\right)\left(\frac{k^2}{\beta_l} \de_{q1} +
    \beta_l \de_{q2}\right) \times \nonumber \\
  &\hspace{0.4in}\EE\left\{ \left(\Psi_{jl}^{(sq)}(0) +
      \frac{\Theta_{jl}^{(sq)}(0)}{\beta_j^2}\right)
    \left(\Psi_{lj}^{(qs')}(z) +
      \frac{\Theta_{lj}^{(qs')}(z)}{\beta_l^2}\right) \right\}.
\end{align}
\section{Power spectral density of a stationary matrix process}
\label{ap:psd}
Let  $M(z)$ be an $m \times n$ matrix with entries given by stationary processes and covariance
\begin{equation*}
\text{R}_{M}(z):=\mathbb{E}\big\{ M^{\star}(z)\, M(0)\big\}.
\end{equation*}
Its power spectral density 
\begin{equation*}
\text{S}_{M}(z)=\int^{\infty}_{-\infty} \text{d}z'\, \text{R}_{M}(z') e^{izz'},
\end{equation*}
is easily verified to be a Hermitian matrix, and we show next that it
is also positive semidefinite for any $z\in\mathbb{R}$.  Indeed,
\begin{align*}
\big( \text{S}_{M}(z)\textbf{x},\textbf{x} \big)
&=\int^{\infty}_{-\infty}\text{d}z'\,\big(\text{R}_{M}(z')\textbf{x},\textbf{x}\big)
e^{i\,z z'}=\int^{\infty}_{-\infty} \text{d}z'\,\mathbb{E}\big\{ \big(
M(z')\textbf{x},M(0)\textbf{x}\big) \big\} e^{ i\,zz'},
\end{align*}
for all $ {\bf x} \in \mathbb{R}^{n}$, and the vector
$\boldsymbol\mu(z)=M(z)\textbf{x}$ is stationary for any fixed
$\textbf{x}$. Therefore
\begin{align*}
\big( \text{S}_{M}(z)\textbf{x},\textbf{x}
\big)=\hspace{-0.02in}\int^{\infty}_{-\infty}
\hspace{-0.055in} dz'\,\mathbb{E}\big\{ \big(
\boldsymbol\mu(z'),\boldsymbol\mu(0)\big) \big\} e^{
  i\,zz'}=\sum^{m}_{i=1}\int^{\infty}_{-\infty}
\hspace{-0.055in} dz'\,\mathbb{E}\big\{\,
\overline{\boldsymbol\mu_{i}(z')}\;\boldsymbol\mu_{i}(0)\,\big\} e^{
  i\,zz'} \geq 0,
\end{align*} 
with the inequality  implied by Bochner's theorem.
\section{The evolution of the mean powers}
\label{ap:cone}
We show here that since ${\bf P}_o \in \mathfrak{C} \subset
\mathfrak{X}$, the solution ${\bf P}(Z)$ of (\ref{eq:TE1}) remains  in the cone $\mathfrak{C} =\mathfrak{C}_1 \times
\ldots \times \mathfrak{C}_N$ for all $Z$. Writing (\ref{eq:TE1})
component-wise and using (\ref{eq:TE3}), we obtain
\begin{equation}
\partial_Z {\bf P}_j(Z) = \Upsilon^+({\bf P})_j(Z) + {\bf Q}_j {\bf P}_j(Z) + 
{\bf P}_j(Z) {\bf Q}_j^\star, \qquad Z > 0.
\end{equation}
Equivalently, 
\begin{equation}
\partial_Z \left[e^{- {\bf Q}_j Z} {\bf P}_j(Z) e^{- \bf{Q}_j^\star
    Z}\right] = e^{- {\bf Q}_j Z} \Upsilon^+({\bf P})_j(Z) e^{- {\bf
    Q}^\star_j Z}, 
\end{equation}
and integrating in $Z$ we obtain
\begin{equation}
{\bf P}_j(Z) = e^{{\bf Q}_j Z} {\bf P}_{j,o} e^{{\bf Q}_j^\star Z} +
\int_0^Z dz \, e^{{\bf Q}_j (Z-z)} \Upsilon^+({\bf P})_j(z) e^{{\bf
    Q}^\star_j (Z-z)}.
\end{equation}
That ${\bf P}_j(Z) \in \mathfrak{C}_j$ follows from ${\bf P}_{j,o} \in
\mathfrak{C}_j$ and (\ref{eq:TE3p}).

\bibliographystyle{plain} \bibliography{BIBLIO}

\end{document}